\newtheorem{theorem}{Theorem}
\newtheorem{lemma}{Lemma}
\newtheorem{proposition}{Proposition}
\newtheorem{claim}{Claim}
\newenvironment{sketch}{%
	\proof}{\endproof}
\theoremstyle{definition}
\newtheorem{Definition}{Definition}
\newcommand{\eqstr}{\overset{\mathrm{N_0 \rightarrow \infty}}{=\joinrel=}}
\title{The Secretary Problem with Independent Sampling}
\author[1]{Jos\'e Correa}
\author[1]{Andr\'es Cristi}
\author[1]{Laurent Feuilloley}
\author[2]{Tim Oosterwijk}
\author[3]{Alexandros Tsigonias-Dimitriadis}
\affil[1]{Department of Industrial Engineering, Universidad de Chile\\ \url{correa@uchile.cl,andres.cristi@ing.uchile.cl,feuilloley@dii.uchile.cl}}
\affil[2]{School of Business and Economics, Maastricht University\\ \url{t.oosterwijk@maastrichtuniversity.nl }}
\affil[3]{Operations Research, TU Munich\\ \url{alexandros.tsigonias@tum.de}}
\date{}
\begin{document}

\maketitle
\setcounter{page}{0}
\thispagestyle{empty}

\begin{abstract} 
In the secretary problem we are faced with an online sequence of elements with values. Upon seeing an element we have to make an irrevocable take-it-or-leave-it decision. The goal is to maximize the probability of picking the element of maximum value. The most classic version of the problem is that in which the elements arrive in random order and their values are arbitrary. Here, the optimal algorithm picks the maximum value with probability at least $1/e$. However, by varying the available information, new interesting problems arise. For instance, in the full information variant of the secretary problem the values are i.i.d. samples from a known distribution. Naturally, the best possible success probability increases and turns out to be approximately $0.58$. Also, the case in which the arrival order is adversarial instead of random leads to interesting variants that have been considered in the literature.

In this paper we study both the random order and adversarial order secretary problems with an additional twist. The values are arbitrary, but before starting the online sequence we independently sample each element with a fixed probability $p$. The sampled elements become our information or history set and the game is played over the remaining elements. We call these problems the random order secretary problem with $p$-sampling (ROS$p$ for short) and the adversarial order secretary problem with $p$-sampling (AOS$p$ for short). Our main result is to obtain best possible  algorithms for both problems and all values of $p$. As $p$ grows to 1 the obtained guarantees converge to the optimal guarantees in the full information case. In the adversarial order setting, the best possible algorithm turns out to be a simple fixed threshold algorithm in which the optimal threshold is a function of $p$ only. Therefore, even knowledge of the total number of elements is unnecessary. Proving that this algorithm is optimal involves a novel technique, which boils down to analyzing a related game in a conflict graph over binary sequences. In the random order setting we prove that the best possible algorithm is characterized by a fixed sequence of time thresholds, dictating at which point in time we should start accepting a value that is both a maximum of the online sequence and has a given ranking within the sampled elements. Surprisingly, this sequence of time thresholds arises from a separable and convex optimization problem whose solution is independent of $p$.
\end{abstract}

\newpage


\section{Introduction}

In the secretary problem, probably the most well studied optimal stopping problem in computer science and discrete mathematics, we are faced with a randomly permuted sequence of $n$ elements with arbitrary values. The elements' values are revealed one at a time and upon receiving an element we need to make an irrevocable 
decision of whether we keep the value and stop the sequence or drop the value forever and continue observing the next. The goal is to maximize the probability of stopping with the largest value. For this problem the best possible success guarantee has long been known to be $1/e$ and the optimal algorithm is particularly simple: First we look at the first $n/e$ values without taking any of them, and then stop with the first value that is larger than all values seen so far \cite{D63,L61,F89}. In the last decades, the secretary problem, its variants and related optimal stopping problems such as the prophet inequality and the Pandora's box problem, have been considered fundamental building blocks of online selection problems in computer science and economics  \cite{KS77,KS78,W79,D18,BK19}. 

A particularly interesting question related to the classic secretary problem is how the success probability changes as the available information about the values and/or the underlying order of exploration does. Along these lines, already in the sixties Gilbert and Mosteller \cite{GM66} considered the so-called full information secretary problem in which we additionally know that the elements' values are i.i.d.\ random variables from a known distribution. For this variant, they showed how to compute the optimal stopping rule by dynamic programming and were able to conclude, numerically, that the best possible success probability is $\gamma\approx 0.5801$. In subsequent work, Samuels \cite{S91} finds an explicit expression for this quantity. Another interesting direction in the study of the secretary problem is to lift the random order assumption and consider a fixed (adversarial) order situation. An initial observation is that if the elements have arbitrary values then no algorithm can guarantee a larger than zero success probability. However, if we move towards the full information case, and assume that the values are not arbitrary but are independent realizations of random variables from known (different) distributions\footnote{If the distributions were not different then ordering would become equivalent to random order.}, then Allart and Islas \cite{AI15} showed that the optimal stopping rule can guarantee a success probability of $1/e$. The same model was recently considered by Esfandiari et al.\ \cite{EHLM20} who prove that if the values come from arbitrary independent distributions and the arrival order is random, then one can guarantee a success probability of $0.517$. Moreover, they show that if an additional distributional assumption is imposed, the success probability improves to $\gamma$.

In this paper we take a data-driven approach to the secretary problem, where the available information is parameterized by a sampling probability $p$. This allows to understand the tradeoffs between the amount of information available and the success probabilities that can be derived. In this context, data-driven versions of optimal stopping problems have been recently studied since the pioneering work of Azar et al.\ \cite{AKW14}. A notable result is that for the classic prophet inequality,\footnote{The classic prophet inequality asserts that when faced with a sequence of $n$ independent random variables, $X_1,\ldots,X_n$, a decision maker who knows their distributions and is allowed to stop the sequence at any time,  can obtain, in expectation, at least half the reward of a prophet who knows the values of each realization.} a single sample from each distribution is enough to achieve the optimal guarantee \cite{RWW20}. Also for the prophet secretary problem, the variant of the prophet inequality when the elements come in random order, one sample has been proved to be quite effective \cite{CCES20}. Recently, Kaplan et al.\ \cite{KNR20} study a model that is closest to ours. In their model there are $n$ arbitrary values and we sample a fraction $p$ of them at random. Then the values which were not sampled are presented to the decision maker in either random order or adversarial order. Kaplan et al.\ \cite{KNR20} design algorithms for maximizing the expectation, rather than the probability of picking the maximum, that translate into algorithms for data-driven versions of prophet inequalities. 

\paragraph{The problem.}
We consider a sampling model inspired by that of Kaplan et al.\ \cite{KNR20}. However, in our model, the sampling of each element is independent. Of course, for large $n$ the models are essentially equivalent. However, our independent sampling has two crucial advantages. On the one hand, independence makes many mathematical calculations a lot simpler and thus allows to obtain simpler expressions, while on the other hand, it allows to deal with instances of unknown size which is often the case in practical applications. In particular, several of our results hold if we do not know~$n$. A slight disadvantage of the independent sampling model is that it may happen that we end up sampling all $n$ elements. For consistency in this case we assume, by vacuity, that we {\em win} (i.e., pick the maximum), although this is not very restrictive since, as we will see, the difficult instances involve large values of $n$ for a fixed value of $p$.

More precisely, the problem we consider is described as follows. We are given $n$ elements with values $\alpha_1,\ldots,\alpha_n$, which are unknown to us, and an order $\sigma:[n]\to[n]$. Each element is sampled independently with probability $p$. Let $S$ be the (random) set of sampled elements and $V$ be the remaining elements, also referred to as the online set/elements. The elements in~$V$ are then presented to us in the order dictated by~$\sigma$. Once an element is revealed we either pick it and stop the sequence or drop it forever and continue. The goal is to maximize the probability of picking the maximum valued element in $V$. In the {\em adversarial order secretary problem with $p$-sampling} (AOS$p$) the order $\sigma$ is chosen by an adversary that knows all values $\alpha_1,\ldots,\alpha_n$ and the random sets $S$ and $V$.\footnote{Our results, and in particular the upper bounds on the success probability, remain true if the adversary knows all values $\alpha_1,\ldots,\alpha_n$ but not the result of the sampling process, i.e., she does not know the random sets $S$ and $V$.}
In the {\em random order secretary problem with $p$-sampling} (ROS$p$) the order $\sigma$ is just a uniform random permutation.

Given $n$ and an algorithm we define its {\em success probability} as the infimum over all values $\alpha_1,\ldots,\alpha_n$ of the probability that the algorithm stops with the maximum $\alpha_i \in V$. Moreover, the {\em success guarantee} of an algorithm is the infimum over all values of $n$ of its success probability. 

All algorithms considered in this paper are \emph{ordinal}, i.e., algorithms whose decision to stop at a given point depend only on the relative rankings of the values seen so far, and not on the actual values that have been observed, plus, possibly, on some external randomness. We observe that this is without loss of generality as for AOS$p$ and ROS$p$ general algorithms cannot perform better than ordinal algorithms. Indeed, as noted by Kaplan et al. \cite[Theorem 2.3]{KNR20} a result of Moran et al.\ \cite{MSM85} implies the existence of an infinite subset of the natural numbers where general algorithms behave like ordinal algorithms (for single selection ordinal objective functions such as ours). Therefore, and because the worst case performance of our algorithms is attained as $n\to \infty$, our bounds apply to general algorithms.

\paragraph{Our results.} 
For AOS$p$ we consider the following very simple algorithm. Upon observing the sample set~$S$ we take as threshold the value of its $k$-th largest element for $k=\left\lfloor \frac{1}{1-p} \right\rfloor$. Then we stop with the first element in $V$ whose value surpasses the threshold. If there are less than $k$ samples, the algorithm accepts the first online value (we define the $k$-th largest element from a set of less than $k$ elements as $-\infty$). We show that this algorithm achieves a success guarantee of $\left\lfloor \frac{1}{1-p} \right\rfloor p^{\left\lfloor \frac{1}{1-p} \right\rfloor}(1-p)$, so for instance for $p=1/2$ the guarantee evaluates to 1/4. Although the proof of this fact is relatively easy, what is more surprising is that this guarantee is best possible. To prove the latter we analyze a related optimal stopping problem, which we call the \emph{last zero problem}. Suppose an adversary picks a number of identical blank cards $n$. Then independently with probability $p$ each card is marked and you are informed about the total number of marked cards, but you ignore their position in the deck. Finally, one by one, you get to see the cards and whether they are marked or not. When you stop the sequence, you win if the card was the last blank card, otherwise you lose. By using a related conflict graph over possible sequences, we show that for this problem no ordinal algorithm can guess the last blank card with probability better than $\left\lfloor \frac{1}{1-p} \right\rfloor p^{\left\lfloor \frac{1}{1-p} \right\rfloor}(1-p)$. Then, we relate this problem to a different one, in which 
the objective is to guess the last number of an increasing sequence of unknown length. Finally, we go back to the original AOS$p$ by considering an adversary that picks a growing sequence which at some point in time decreases to a low value, and this time is difficult to guess.

It is worth noting that this simple best possible algorithm does not use knowledge of $n$ and, as opposed to most variants of the secretary problem, for AOS$p$ knowledge of $n$ is irrelevant in worst case terms. Moreover, we discuss the case in which $n$ is known but $p$ is unknown. Here it is quite natural that the algorithm works again by simply estimating $p$ using the size of the sample set. However, if neither $n$ nor $p$ are known, then no nontrivial success guarantee can be obtained.  

For ROS$p$ we obtain a randomized algorithm with best possible success guarantee that works as follows. First, we assign to each of the $n$ elements a uniformly random arrival time in the interval $[0,1]$, which implies that the elements arrive in uniform random order. All elements whose arrival time is less than $p$ are placed in the sample set $S$. Then we find a sequence of time thresholds $0<t_1<t_2<\cdots<1$, dictating that if an element's arrival time is between $t_i$ and $t_{i+1}$, we stop if its value is the maximum among elements arriving after $p$ and it is among the $i$ largest values of all elements seen so far. To obtain the success guarantee of this algorithm we first prove that for a fixed sequence $0<t_1<t_2<\cdots<1$, the success guarantee of the algorithm decreases with $n$. Then we write the optimization problem over the time thresholds, and interestingly, this turns out to be a separable concave optimization problem with a very simple solution. Moreover, the solution is universal in the sense that it does not depend on $p$. The resulting guarantee is thus easily computed and grows from $1/e$ when $p=0$ to $\gamma\approx 0.58$ as $p\to 1$. We also prove that this is a best possible algorithm. To this end we first argue that ordinal algorithms in our model are essentially equivalent to a ranking function that determines what {\em global} ranking an element, which is a {\em local} maximum, should have in order to accept it. Here, by global ranking we mean the ranking an element has among all samples and values revealed so far, and local ranking refers only to the values revealed and not to the samples. Finally, as $n$ grows, this ranking function converges to a sequence of time thresholds as we defined them. 

\cref{fig:k_max_algo} illustrates the success guarantee for our problems. For AOS$p$ it can be observed that the success guarantee can be bounded below by the function $p^{(1/(1-p))}$ and bounded above by $\frac{p-1}{\log p} \cdot p^{-1/\log p}$. 

\begin{figure}[t]
	\begin{center}
		\includegraphics[scale=0.65]{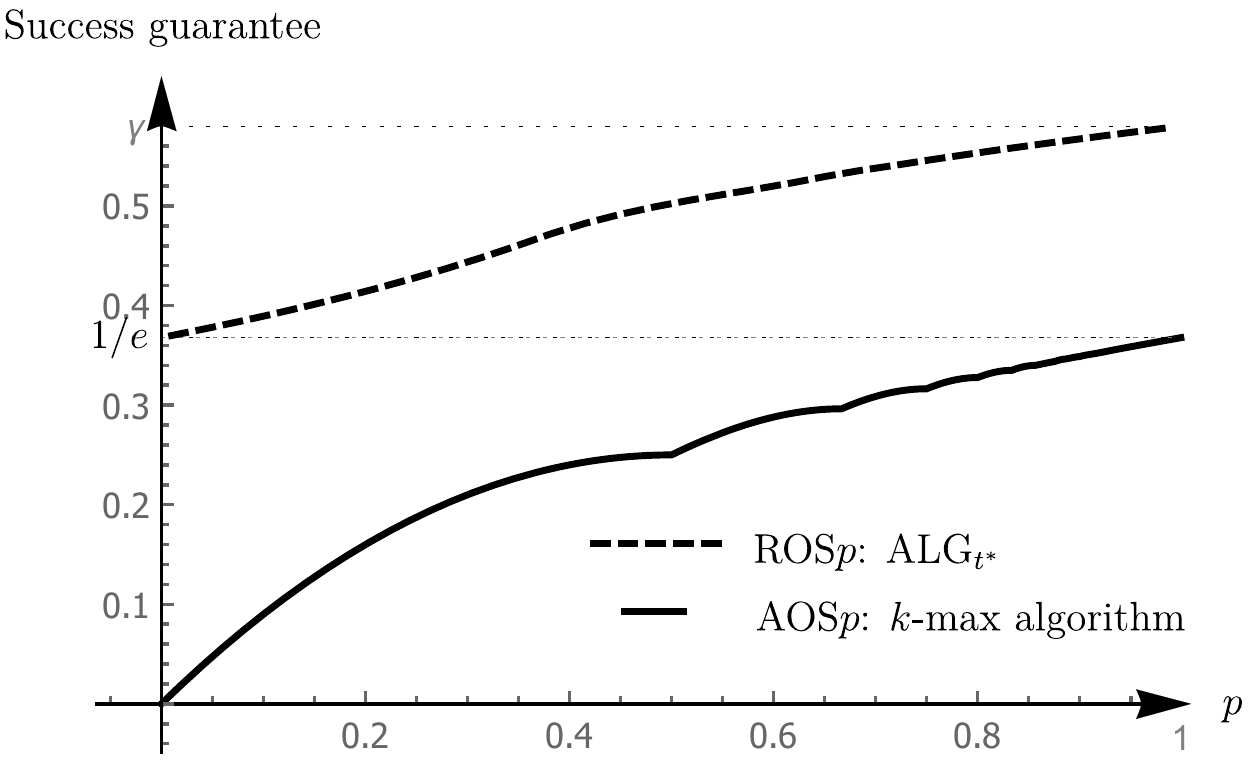}
		\caption{\label{fig:k_max_algo} The best possible success guarantee for ROS$p$ and AOS$p$ as a function of $p$.}
	\end{center}
\end{figure}

\paragraph{Further related literature.} 
An interesting connection arises between our model and results when $p$ is close to 1, and the so-called full information case. First, recall that Gilbert and Mosteller \cite{GM66} obtained the optimal algorithm with worst case performance $\gamma$ (see also \cite{S82,S91}), in the secretary problem where the elements' values are taken as i.i.d.\ random variables from a known distribution. It may thus seem natural that our guarantee matches this quantity as $p\to 1$. However, this is far from obvious. Indeed, for the prophet inequality with i.i.d.\ values from an unknown distribution (a model that arguable gives more information than ours) Correa et al.\ \cite{CDFK19} proved that with $O(n^2)$ samples one can achieve the best possible performance guarantee of the case with known distribution, and only very recently Rubinstein et al.\ \cite{RWW20} improved this to $O(n)$ samples. This is in line with our result here since for $p$ close to, but strictly less than 1, the size of the sample set is linear in the size of $V$.

A more intriguing connection to the full information case pops up in the adversarial order case. In this context, Allart and Islas \cite{AI15}, and independently Esfandiari et al.\ \cite{EHLM20}, considered the adversarial order secretary problem in which an adversary chooses $n$ distributions $F_1,\ldots,F_n$. Then, independent values are drawn from these distributions and sequentially uncovered. A decision maker who knows $F_1,\ldots,F_n$ needs to stop at the maximum realization. They prove that the optimal stopping rule is a simple single threshold algorithm and the best possible success guarantee equals $1/e$. Although this problem has a similar flavor as our AOS$p$, and the optimal guarantee is the same, we are unaware of a precise connection. 

On the other hand, our last zero problem, used as a tool for AOS$p$, is related to an old optimal stopping problem first studied by Bruss \cite{B00}.
We face a sequence of $n$ independent Bernoulli random variables where we know $n$ and the distributions, and we want to stop with the last zero. Bruss obtains the optimal stopping rule for this problem, which also turns out to be a simple threshold rule. Our last zero problem is simpler in that the Bernoulli random variables are homogeneous. However, rather than knowledge of $n$ we only know the total number of ones. This subtle difference makes the problem substantially different.

Another very recent line of work \cite{KM20,BGSZ20} studies robust or semi-random versions of the classical secretary problem. The main idea is that the problem input should be a mix of stochastic and adversarial parts. More specifically, in their (similar) models some of the elements arrive at adversarially chosen times and the rest at times uniformly randomly drawn from $[0,1]$. Their objective functions (and in some cases also the benchmarks) are quite different from ours. The authors in \cite{KM20} consider the knapsack secretary problem in this mixed model, while \cite{BGSZ20} design algorithms for selecting $k$ items or maximizing the expectation under various matroid or knapsack constraints. It would probably be interesting to incorporate their ideas in our setting and study a problem which interpolates between ROS$p$ and AOS$p$.

\paragraph{Outline of the paper.}
\cref{sec:OverviewAOS} presents an overview of the techniques and results for the adversarial order case, while \cref{sec:RO} does the same for the random order case. Then, \cref{sec:Parameters-overview} presents some insights into the results that can be obtained if we assume different knowledge of the parameters. Like the previous sections, this section contains proof sketches. The full proofs of these three sections can be found in \cref{sec:proof-AO}, \cref{sec:proof-RO} and \cref{sec:proof-knowledge} respectively.

\section{Overview for adversarial order}\label{sec:OverviewAOS}

This section introduces the main techniques and ideas behind Theorem~\ref{thmAO}. All details can be found in \cref{sec:proof-AO}.
Recall that we defined the \emph{$k$-max algorithm} as follows: the $k$-th largest value of the sampled elements is set as a threshold, and the algorithm accepts the first element in the set $V$ of online values whose value surpasses this threshold. If there are less than $k$ sampled elements, then the algorithm accepts the first online element\footnote{Recall that we define the $k$-th largest element from a set of less than $k$ elements as $-\infty$.}. From now on, we take the $k$-max algorithm with $k=\left\lfloor \frac{1}{1-p} \right\rfloor$.

\begin{restatable}{theorem}{thmAO}
	\label{thmAO}
	Let $k=\left\lfloor \frac{1}{1-p} \right\rfloor$.  Then the $k$-max algorithm achieves a guarantee of $k p^k(1-p)$ for AOS$p$. Furthermore, no algorithm can achieve a better success guarantee.
\end{restatable}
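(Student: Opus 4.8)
The plan is to split the theorem into its two halves: the achievability bound for the $k$-max algorithm, and the matching impossibility bound. For the achievability direction I would fix an arbitrary instance $\alpha_1,\ldots,\alpha_n$ and argue directly about the event that the algorithm wins. Since the algorithm is ordinal, only the relative order of the values matters, so I may as well assume all $\alpha_i$ are distinct and think of the process as follows: each element is sampled independently with probability $p$, and the online set $V$ is revealed in adversarial order. The $k$-max algorithm wins precisely when the threshold — the $k$-th largest sampled value — sits strictly between the maximum of $V$ and the second-largest ``relevant'' value, i.e., when exactly the top $k$ values overall go to the sample except that the very largest online value of $V$ is the unique element of $V$ exceeding the threshold. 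The clean way to see the bound $kp^k(1-p)$ is to condition on the identities (by rank) of the top $k+1$ elements among all $n$: the algorithm surely wins if the top $k$ of these $k+1$ are sampled and the $(k+1)$-st is not, AND that $(k+1)$-st one happens to be the first among the online elements large enough — but because it is the unique online element above the threshold, order is irrelevant, so the win probability is at least $\binom{k+1}{1}p^k(1-p) = (k+1)p^k(1-p)$ wait — I would need to be careful; the correct accounting gives the stated $kp^k(1-p)$ once one correctly identifies which of the top $k+1$ elements must be the online maximum and checks that the adversary's ordering of $V$ cannot hurt. I would also separately handle the degenerate event that fewer than $k$ elements are sampled (the algorithm then takes the first online element and by the vacuity convention wins if $0$ elements are online). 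Getting the constant exactly right, rather than off by a factor, is the one place in this direction that needs care.

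For the impossibility direction I would follow the roadmap sketched in the introduction, which proceeds through three reductions. \textbf{Step 1: the last-zero problem.} Define the combinatorial problem on binary strings: an adversary picks $n$; each of $n$ positions is independently ``marked'' (a one) with probability $p$; the player learns the total number of ones but not their positions, sees the string one symbol at a time, and must stop on the last zero. I would show that for any ordinal stopping rule the success probability is at most $kp^k(1-p)$ with $k=\lfloor 1/(1-p)\rfloor$. The mechanism here is the conflict-graph argument promised in the abstract: build a graph whose vertices are the binary sequences of a given length (or given number of ones) that an ordinal algorithm cannot distinguish up to the point of decision, put an edge between two sequences when a single deterministic rule cannot win on both, and observe that the algorithm's success probability is a fractional relaxation of an independent-set / matching quantity in this graph; bounding that quantity, and optimizing over $n$, yields the constant. \textbf{Step 2: increasing sequences of unknown length.} Recast the last-zero problem as the problem of stopping at the last term of an increasing sequence whose length is generated by the same Bernoulli process — a zero signals ``the sequence continues,'' a one ``reset / this is noise,'' so that stopping on the last zero is stopping on the last element of the maximal increasing run; this is a purely notational bridge. \textbf{Step 3: back to AOS$p$.} Finally I would exhibit an adversarial family of value sequences realizing this hardness: the adversary presents a long strictly increasing block and then, at a position it cannot reveal in advance, drops to a very small value; conditioned on the sample set, recognizing the last element before the drop is exactly the last-zero / last-of-increasing-sequence problem, so no ordinal algorithm beats $kp^k(1-p)$ on AOS$p$. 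Since by the earlier remark (the Moran et al.\ argument) ordinal optimality is without loss of generality and the worst case is attained as $n\to\infty$, this establishes the upper bound for all algorithms.

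The main obstacle, I expect, is \textbf{Step 1} — making the conflict-graph bound tight. It is easy to get \emph{some} constant bound this way; the delicate part is choosing the right vertex set and edge relation so that the fractional independent-set value equals $kp^k(1-p)$ exactly, and then verifying that the supremum over $n$ is attained in the regime where $k = \lfloor 1/(1-p)\rfloor$ is the operative integer (this is where the floor, and the slightly awkward dependence on $p$, enters). The reductions in Steps 2 and 3 are conceptually straightforward but require care that no information leaks to the algorithm through the reduction — e.g., in Step 3 one must ensure the sampled set, which the algorithm does see, does not reveal the location of the drop, which is exactly why independent sampling (rather than sampling a fixed-size subset) and an adversary ignorant of the sample are convenient. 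A secondary technical point is matching the endpoints: checking that as $p\to 1$ the bound $kp^k(1-p)$ behaves correctly (it does not go to $1$, reflecting that even with almost full information the adversarial drop is hard), and that at the breakpoints $p = 1 - 1/k$ the two expressions $kp^k(1-p)$ for consecutive $k$ agree, so the guarantee is continuous.
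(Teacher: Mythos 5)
Your overall architecture matches the paper's: reduce the lower bound to a ``last zero'' problem on binary strings via instances with increasing values, argue through a conflict graph of indistinguishable instances, and handle known $n$ by appending a block of tiny values after the increasing part (the paper formalizes this as a ``colored'' last zero problem and uses a concentration argument to show the extra information from the second block is asymptotically useless). However, there are two genuine gaps. First, in the achievability direction your characterization of the winning event is wrong as stated: if the top $k$ values are all sampled and the $(k+1)$-st is online, the threshold equals the $k$-th largest value overall, \emph{no} online value exceeds it, and the algorithm never stops --- that is a losing configuration, which is why your count $(k+1)p^k(1-p)$ comes out wrong. The correct event is that exactly $k-1$ of the top $k$ values are sampled, the remaining one is online, and the $(k+1)$-st largest is sampled; then the threshold is the $(k+1)$-st largest value, exactly one online element exceeds it, and the adversary's ordering of $V$ is irrelevant. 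This gives $k\cdot p^{k-1}(1-p)\cdot p = kp^k(1-p)$ directly.

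Second, and more seriously, the heart of the impossibility result --- Step 1 --- is not actually argued, and the formulation you propose would not deliver it. The success guarantee is the \emph{infimum over all sizes $n$} of the weight selected in layer $n$ of the conflict graph, not the total weight of a (fractional) independent set; for any single layer one can select almost all instances, so ``bounding the independent-set quantity and optimizing over $n$'' does not isolate the constant $kp^k(1-p)$. The difficulty is entirely in the interaction between layers: selecting a node of degree $d$ at size $n$ removes $d$ descendants at size $n+1$, and one must show that sustaining weight $kp^k(1-p)+\varepsilon$ at every layer eventually exhausts the available nodes. The paper does this by (i) reducing to canonical strategies via local exchange operators (deselect a node of degree $>1/(1-p)$ in favor of its children, and conversely), (ii) replacing the infimum by the \emph{average} performance over a window $[n,n+t]$, which these operators can only improve, (iii) exhibiting an explicit ``fill-in'' strategy that is optimal for the windowed average and whose average matches that of $k$-max up to $O(1/t)$, and (iv) extending to randomized algorithms by a fractional version of the same operators. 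Without some substitute for this averaging-and-exchange machinery (or the cover-ratio induction used in the warm-up case $p=1/2$), your Step 1 remains a statement of intent rather than a proof.
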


When $p$ tends to 0, the guarantee naturally tends to zero: If there are very few samples, the problem becomes the secretary problem with adversarial order, where basically nothing can be done.
When $p$ is close to $1$, the success guarantee approaches $1/e$, which is the performance obtained for the secretary problem when one knows the distribution of the values of the elements~\cite{AI15,EHLM20}. See \cref{fig:k_max_algo}.

The proof of the guarantee of the algorithm is easy and appears in this overview section (see \cref{lem:k-max-guarantee}). The proof of its optimality is more advanced and requires new tools. We give a detailed overview of the proof in \cref{subsec:overview-AO-negative} and the entire proof is in \cref{sec:proof-AO}. 
A surprising fact of this proof is the following: when proving the negative result, it is enough to focus on the special case where the values of the elements are increasing (thus where the player aims to get the last element), with the twist that the player does not know the total size $n$ of the instance.




\subsection{The success guarantee of the $k$-max algorithm}

Being a simple threshold algorithm, the main question to answer is what value of $k$ is appropriate.
Intuitively, the bigger the value of $p$, the higher the probability that the largest valued elements are sampled. Therefore, we should lower the threshold as $p$ grows. As is the case for many threshold algorithms, there is a trade-off between (1) setting the threshold too low and risking acceptance of an element that does not have the maximum online value, and (2) setting it too high and risking finishing the game without selecting any element. The following lemma establishes the performance of the algorithm for the value $k=\left\lfloor \frac{1}{1-p} \right\rfloor$. 

\begin{lemma}\label{lem:k-max-guarantee}
	For a given $p$, the $k$-max algorithm chooses the element of the online set with maximum value with probability 
	$\left\lfloor \frac{1}{1-p} \right\rfloor p^{\left\lfloor \frac{1}{1-p} \right\rfloor}(1-p)$.\footnote{Observe that this lemma still holds in the setting where the order of the online elements is determined by the adversary after sampling, since our algorithm is order oblivious.}
\end{lemma}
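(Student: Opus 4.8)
The plan is to compute the success probability exactly by conditioning on the structure of the sampled set $S$ relative to the top few elements of the whole instance. Since the algorithm is ordinal, we may relabel the $n$ elements so that element $1$ has the largest value, element $2$ the second largest, and so on. The algorithm wins precisely when the maximum-valued online element is accepted. With threshold equal to the $k$-th largest sampled value, the key observation is the following: the algorithm accepts the first online element exceeding this threshold, so it wins if and only if the top online element exceeds the threshold \emph{and} no earlier online element already exceeded it. Equivalently, winning happens if and only if exactly $k$ of the globally top $j$ elements are sampled, where $j$ is the global rank of the best online element, and the best online element is the unique non-sampled one among the top $j$. Unwinding this: letting $r$ be the number of sampled elements among $\{1,\dots,m\}$ as $m$ increases, the algorithm wins exactly when the first non-sampled element in the global order $1,2,3,\dots$ occurs at a position $m$ such that elements $1,\dots,m-1$ contain exactly $k$ sampled elements — i.e. the first $k$ elements that are ``above threshold'' in the online set are all sampled (so the threshold is set high enough that none of them is accepted online), and element $m$, the first online element, is the top online element.

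First I would make this precise: the algorithm wins if and only if, reading the elements in decreasing order of value, the first $k$ of them are all in $S$ and the $(k+1)$-st is in $V$. Indeed, if the top $k$ values are sampled, the threshold (the $k$-th largest sample) is at least the $k$-th largest value overall; the largest online value is then the $(k+1)$-st largest overall (if that one is in $V$), which does exceed the threshold, and every online value before it in the sequence is smaller than the threshold, hence rejected — so we accept exactly the top online value and win. Conversely, if one of the top $k$ values is online, that online value exceeds the threshold (which is at most the $k$-th largest value), so the algorithm would accept it or something before it, but it is not the maximum online value unless it equals the max; a short case check shows the algorithm then fails (it stops too early on a non-maximal online element, or, if that top element happens to be the online max, we need to recheck — but if one of the top $k$ is the online max then at most $k-1$ of the top $k-1$... ) — more carefully, the clean statement is: \emph{win $\iff$ the top $k$ elements globally are all sampled and the $(k+1)$-st is not}. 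I would verify both directions of this equivalence carefully, as this is the main obstacle: getting the ``first $k$ sampled, then one unsampled'' characterization exactly right, including the boundary case where $|S|<k$ (handled by the $-\infty$ convention, which makes the algorithm accept the first online element, and one checks this is consistent with the formula in the relevant limit) and the vacuous-win case $S=[n]$.

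Once the equivalence is established, the probability is immediate from independence of the sampling: $\Pr[\text{elements } 1,\dots,k \in S \text{ and element } k+1 \in V] = p^k(1-p)$. But this is the probability for a \emph{fixed} identity of the online max; summing over which element is the ``first unsampled'' gives the factor $k$, or more directly: the events ``the first unsampled element in decreasing order is element $k+1$'' — wait, we want the probability that reading top-down the pattern is (sampled)$^k$(unsampled), and this single pattern has probability $p^k(1-p)$; there is no extra sum because the characterization pins down exactly which elements these are. I should double-check whether the factor $\left\lfloor \frac{1}{1-p}\right\rfloor$ out front comes instead from a union over the \emph{position} of the lone unsampled element among the top $k+1$ — i.e. the winning patterns are all sequences with exactly $k$ sampled among the top $k+1$ and the unsampled one not being in position... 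Re-examining: the algorithm also wins if the top element is online and it is accepted (no earlier online element beats the threshold). So in fact there are $k$ winning configurations for the top $k+1$ elements, namely exactly one of them is unsampled, in any of the $k+1$ positions — except the configuration where position $1$ (the global max) is the unsampled one, because then... no, that one wins too. Let me just commit to the computation: I would enumerate, for the top $k+1$ globally-ranked elements, which sampling patterns lead to a win, show there are exactly $k$ of them each of probability $p^k(1-p)$, conclude the success probability is $k\,p^k(1-p)$, and substitute $k=\left\lfloor\frac{1}{1-p}\right\rfloor$. The main delicate point throughout is the precise combinatorial characterization of a winning configuration; everything after that is a one-line independence argument.
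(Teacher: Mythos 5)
Your overall strategy (classify the sampling pattern of the top $k+1$ elements by global rank, count the winning patterns, multiply by $p^k(1-p)$) is the same as the paper's, but the one step you yourself identify as ``the main delicate point'' --- the combinatorial characterization of a winning configuration --- is stated incorrectly and never resolved. The characterization you commit to first, ``win $\iff$ the top $k$ elements globally are all sampled and the $(k+1)$-st is not,'' is false, and in fact describes a \emph{losing} configuration: if all of the top $k$ are in $S$, the threshold (the $k$-th largest sample) equals the $k$-th largest value overall, and the best online element --- which has global rank at least $k+1$ --- does \emph{not} surpass it, so the algorithm never stops. Your justification (``the $(k+1)$-st largest overall \dots does exceed the threshold'') is exactly the error. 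Moreover, that characterization describes a single pattern, which would give probability $p^k(1-p)$ with no factor of $k$; the subsequent attempt to recover the factor $k$ by ``summing over which element is the first unsampled'' is inconsistent with a characterization that pins down a unique pattern.

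The correct characterization is: the algorithm wins if and only if exactly one of the top $k$ elements (by global rank) is online \emph{and} the $(k+1)$-st largest is sampled. Then the samples contain $k$ of the top $k+1$ elements, the threshold is exactly the $(k+1)$-st largest value, the unique online element above it is the online maximum, and the algorithm accepts it regardless of the adversarial order; conversely, if two or more of the top $k+1$ are online, the adversary presents a non-maximal acceptable element first, and if none of the top $k$ is online, nothing is ever accepted. Equivalently, among the top $k+1$ elements exactly one is online and its position is one of $1,\dots,k$ --- it is position $k+1$ that must be excluded, not position $1$ (your retraction ``no, that one wins too'' about position $1$ is correct, but you never identify which position actually fails). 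This gives exactly $k$ winning patterns, each of probability $p^k(1-p)$ by independence, and the lemma follows. As written, your proposal would either yield $p^k(1-p)$ (from the false characterization) or $(k+1)p^k(1-p)$ (from counting all $k+1$ positions), neither of which is the claimed bound.
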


\begin{proof}
	Note that the $k$-max algorithm wins in an instance if \emph{exactly one} of the $k$ largest values of the adversarial input ends up in the online set and the $(k+1)$-th largest ends up in the sample set. Thus, an instance in which the algorithm is successful is exactly a sequence ending in $k$ sampled elements plus one online element that is somewhere in the last $k$ entries of the sequence.
	The probability that this happens equals $kp^k(1-p)$. The lemma follows by substituting the value $k=\left\lfloor \frac{1}{1-p} \right\rfloor$.
\end{proof}

\subsection{The negative result}
\label{subsec:overview-AO-negative}

We now sketch the proof for the negative result of Theorem~\ref{thmAO}, which consists of several steps. We start by considering the special case where \emph{the algorithm does not know $n$}. Let us make precise what we mean by this. Consider an algorithm~$A$ and two instances $I_1$ and $I_2$ of different sizes $n_1$ and~$n_2$ respectively, but with the same value of $p$. Suppose that $A$ happens to face the exact same set of samples in both instances, and is currently facing an online element of the same value in both instances. Thus, up to this point, $A$ has access to exactly the same information (and possible beliefs over the size of the instance). Therefore, $A$ needs to make the exact same (possible randomized) decision in both situations, independent of $n_1$ or $n_2$.

For our main steps, we start by showing that we can focus on a simpler problem that we call the last zero problem. For this problem, we prove the negative result with some additional assumptions. We then remove the assumptions one by one, each time generalizing the proof one step further, until we get the proof of Theorem~\ref{thmAO} for the case where $n$ is unknown. Finally, in the second phase, we show that allowing the algorithm to know $n$ basically does not help (in worst case terms). The proof can be found in \cref{sec:proof-AO}.

The \emph{last zero problem} with probability $p$ is the following: First, an adversary chooses an integer $n$, and $n$ blank cards are created and form a deck. Second, a referee takes the deck, and writes a number on each card: 1 with probability $p$ and $0$ with probability $1-p$. The referee gives the number of 1s to the player.
Finally, the cards are presented to the player one after the other in the order of the deck. Upon every card, the player needs to decide whether to stop the sequence or not, and she wins if she stops at the last 0. She does not know the value of $n$, but she does know $p$ and the number of 1s. An instance for the last zero problem can be represented by a sequence of bits.

The following proposition highlights the connection between the last zero problem and AOS$p$. With the increasing case of AOS$p$ we mean the special case of the problem AOS$p$ where the elements are presented to the algorithm in increasing order of their values.

%

\begin{restatable}{proposition}{proplastzeroequivalent}
	\label{prop:last-zero-equivalent}
	The last zero problem and the increasing case of AOS$p$ are equivalent. Therefore, any negative result for the last zero problem also holds for AOS$p$.
\end{restatable}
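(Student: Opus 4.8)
The plan is to exhibit an information-preserving bijection between instances of the last zero problem and instances of the increasing case of AOS$p$, under which a stopping rule wins on an instance of one problem if and only if the corresponding stopping rule wins on the corresponding instance of the other. The key observation is that in the increasing case of AOS$p$ the adversary's only freedom (beyond choosing $n$) is which values are placed where, but since the values are increasing and the algorithm is ordinal, \emph{only the relative order matters}; what the algorithm actually observes online is, for each arriving element, whether its value exceeds all previously seen values (sample plus online) or not. Since values are increasing in index, an online element $v$ is a new running maximum precisely when no element with larger index has been seen yet — equivalently, all larger-indexed elements are still ``to come'' or were sampled. I would encode an instance of AOS$p$ as the binary string that records, position by position in the arrival order restricted to $V$, whether the arriving element is the current running maximum; and dually, read a binary string for the last zero problem as the increasing AOS$p$ instance in which the $0$-positions are the online elements and the $1$-positions are the samples.

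\textbf{Step 1: Set up the correspondence.} First I would fix $n$ and $p$ and describe the map in the AOS$p \to$ last zero direction. Given an increasing AOS$p$ instance with sample set $S$, online set $V$, and the (adversarially chosen, post-sampling) arrival order on $V$ — here I would use the footnoted variant where the adversary may reorder $V$ after sampling, which only strengthens the negative result — note that because the adversary wants to hide the maximum of $V$, the worst arrival order is one where the elements of $V$ arrive sorted somehow relative to the samples; I would argue it suffices to consider the ``interleaving pattern'' of $S$ and $V$ in the sorted-by-value order of all $n$ elements. This interleaving is exactly a binary string of length $n$ with a $1$ for each sampled element and a $0$ for each online element, read from largest value to smallest (or smallest to largest — I'd pick whichever makes ``last zero'' line up with ``maximum of $V$''). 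The number of $1$'s is $|S|$, which is revealed to the algorithm, matching the referee's disclosure in the last zero problem. Conversely a binary string with $m$ ones and $n-m$ zeros defines an increasing AOS$p$ instance.

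\textbf{Step 2: Match the observations and the winning condition.} Here I would check that at each online step the information available to an ordinal AOS$p$ algorithm is exactly the information available to a last zero player at the corresponding step: both see a prefix of the binary string (is this element a new global max / is this card a $1$?) and both know $p$ and the total count of $1$'s but not $n$. Crucially I must check the winning conditions agree: the AOS$p$ algorithm wins iff it stops at the maximum element of $V$; in increasing order with the chosen encoding, the maximum of $V$ corresponds to the last $0$ in the string, so the player wins iff she stops at the last $0$. This makes the correspondence a bijection of strategies with identical success probability on corresponding instances, hence identical success \emph{guarantees} (infimum over instances, equivalently over $n$ and over binary strings), so a negative result (upper bound on the guarantee) for the last zero problem transfers to the increasing case of AOS$p$, and a fortiori to AOS$p$ in general since the increasing case is a special case.

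\textbf{Main obstacle.} The delicate point — and the step I'd spend the most care on — is Step 1: justifying that one may restrict attention to the interleaving pattern and, in particular, that an adversarial post-sampling reordering of $V$ cannot help the adversary beyond what the interleaving already captures. One has to rule out that a clever arrival order within $V$ (not respecting value order) gives the adversary extra power, and symmetrically that the increasing-order restriction on values is not lossy relative to the last zero problem. I expect this is handled by noting that for an ordinal algorithm facing increasing values, the online observation at each step is binary (``new max or not''), the elements of $V$ that are \emph{not} new maxima are irrelevant (the algorithm should never stop on them, as there is provably a larger element of $V$ still to come or the instance is degenerate), and so the only meaningful data is the positions, among the $|V|$ online steps, at which a new-max element appears — which is precisely determined by how many samples exceed each online element, i.e.\ by the interleaving. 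I would also need to address the edge case where $S$ contains the global maximum of all $n$ elements (so the max of $V$ is not a global max): this is exactly the case where the string has a $1$ after the last $0$, and the correspondence handles it automatically since ``last $0$'' is still well-defined; and the vacuous-win convention when $S$ is everything matches the all-$1$'s string having no $0$, where I'd adopt the parallel convention.
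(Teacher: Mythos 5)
Your proposal is correct and is essentially the paper's argument: encode each instance as the binary string recording, in increasing order of value, which elements are sampled ($1$) and which are online ($0$), observe that the norm of the string equals $|S|$ (the datum both players receive), match the per-step information of an ordinal algorithm (how many $1$s precede/follow the current $0$ versus how many samples are below/above the current online value), and check that stopping at the last $0$ is the same event as stopping at the maximum of $V$. The one place you go astray is your ``main obstacle'': in the increasing case of AOS$p$ the arrival order is by definition increasing in value, so there is no post-sampling reordering of $V$ to rule out and the interleaving pattern \emph{is} the entire instance; moreover, granting the adversary adaptive reordering would weaken rather than strengthen the negative result (the increasing case is an obliviously fixed order, which is the strongest setting for a hardness claim), though since the issue is moot it does not affect the correctness of your reduction.
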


\begin{proof}
We show that an algorithm for picking the element with maximum value in the increasing case of AOS$p$ has the same success probability in the last zero problem, and the other way round.

$(\Rightarrow)$ Assume that we know that in AOS$p$ the adversary is going to present the online set in increasing order. Therefore we need to fix an ordinal algorithm with the goal of picking the last element in the increasing sequence. 
Every time an element in $V$ is revealed, the algorithm knows how many online elements it saw in total and how many sampled elements have larger or smaller values compared to the value of this online element. Moreover, the value of $p$ creates some possible beliefs over the size of the instance. This knowledge guides the (possibly randomized) decision of the algorithm on whether to stop with the element just observed.

In the last zero problem, each revealed 0 of the binary sequence corresponds to an online element. Furthermore, since we are given the total number of 1s beforehand, we know how many 1s are before and after each revealed 0 in the sequence. This information corresponds to the relative ranking of an elements value in $V$ among the values of sampled elements. Finally, $p$ equals the probability that a 1 was written on a card, independently of the others.

An algorithm for AOS$p$ takes as input the relative ranking of the values $r_1 > r_2 > \dots > r_t$ in $S$ and $V$ seen so far at each time step $t$ and outputs a stopping rule $\tau$ which gives a certain success probability. In particular, since the algorithm is ordinal, it does not even need to see the actual values of the sampled elements; all it needs to know is the ranking of a revealed element among the sampled ones. If we apply the same algorithm to the last zero problem (with the input now being the total number of 0s and 1s seen so far and the total number of 1s), we get the same success probability of picking the last 0.

$(\Leftarrow)$ Consider an algorithm for maximizing the probability of picking the last zero in the last zero problem. At each time step $t$, an algorithm $ALG_\tau'$ takes as input the given probability $p$, the total number $k$ of 1s (also given) and how many 0s and 1s have been seen so far. Consider a stopping rule $\tau'$ that decides whether to stop at each revealed 0, and that attains a certain success probability.

In the increasing case of AOS$p$ each element in $S$ corresponds to a 1 and each online element to a 0. The total number of 1s represents the cardinality of the set $S$. Each time a 0 is observed (and we know its rank among the 1s), it translates to learning how many samples have smaller and how many have larger value than the online element just observed. Remember that since the sequence in AOS$p$ is increasing, we win if we stop with the last online element. We can now conclude that an algorithm for the last zero problem with a certain success probability can be used as an ordinal algorithm to solve the increasing case of AOS$p$ with the same success probability. 

Since the increasing case is a specific instance for AOS$p$, a negative result for the last zero problem implies a hardness result for AOS$p$.
\end{proof}

From now on we focus on the last zero problem. 
We start with proving an upper bound for the special case of deterministic algorithms for $p=1/2$. For this case of the last zero problem we introduce the \emph{no-zero rule} that specifies that if there are no online elements (i.e., all $n$ elements are sampled), the player loses. This will be a useful rule for the sake of the proofs. As we will see, this decision actually becomes irrelevant for the generalization of the proof. Therefore, it poses no problem that this contradicts the assumption made for AOS$p$ where we win in such an instance.

The following proposition holds under the no-zero rule and starting from $n=1$.
\begin{restatable}{proposition}{prop:warm-up}
	\label{prop:warm-up}
	For the last zero problem with $p=1/2$, no deterministic algorithm can achieve a better success guarantee than $k$-max (with the no-zero rule) for AOS$p$.
\end{restatable}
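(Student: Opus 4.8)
The plan is to prove that for the last zero problem with $p=1/2$ (under the no-zero rule, with instances starting from $n=1$), no deterministic algorithm beats the guarantee $k p^k (1-p) = 2 \cdot (1/2)^2 \cdot (1/2) = 1/4$ achieved by $k$-max with $k = \lfloor 1/(1-p)\rfloor = 2$. Since a deterministic ordinal algorithm here is just a decision rule on binary strings, I would first reformulate the task combinatorially. An instance is a binary string $w$ (the sequence of card marks, $0 = $ blank, $1 = $ marked) together with the announced number of $1$s (which is redundant given $w$, but the player only learns it as a constraint on the unseen suffix). A deterministic algorithm is a function that, upon seeing a prefix and the total count of $1$s, decides whether to stop on the current card if it is a $0$. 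The algorithm "wins" on string $w$ precisely if it stops on the last $0$ of $w$. The adversary picks $n$; then each of the $n$ strings of length $n$ arises with probability $2^{-n}$. So for each $n$, the success probability is $2^{-n}$ times the number of length-$n$ strings on which the algorithm stops at the last $0$, and the success guarantee is the infimum over $n$ of these quantities.

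**The conflict-graph / matching argument.**

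The key structural observation — the "conflict graph over binary sequences" flagged in the introduction — is that a deterministic algorithm, having committed to a stopping rule, can win on at most a limited set of strings, because two strings that share a common prefix up to and including the card where the algorithm stops on one of them cannot both be wins. Concretely, I would build a graph whose vertices are the strings on which $k$-max wins (for $p=1/2$ these are exactly, by Lemma~\ref{lem:k-max-guarantee}'s description, the strings ending in a block of the form: one $0$ placed among the last two positions, with the rest of the last two being $1$s and the $(k+1)=3$rd-from-last being forced — i.e. strings ending in $100$ or $010$ viewed appropriately), and more importantly a graph on \emph{all} strings where I connect two strings if any deterministic algorithm that wins on one must lose on the other. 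The crucial point: for a fixed prefix length $\ell$, consider the set of strings whose last $0$ is at position $\ell$; an algorithm that stops at position $\ell$ on one such string stops there on \emph{all} strings sharing that prefix of length $\ell$, hence wins on that one and loses on any other string sharing the prefix whose last $0$ is later. I would make this precise and extract from it an upper bound: for each $n$, the set of length-$n$ winning strings must form an "antichain"-like family in this conflict graph, and its size is bounded by a counting argument that yields exactly $n \cdot 2^{\,n - \lfloor 1/(1-p)\rfloor - 1}$ strings (for $p = 1/2$, this is $n \cdot 2^{n-3}$), giving success probability $\le n \cdot 2^{n-3} / 2^n = n/8$... which is too weak, so the bound must instead be uniform in $n$ and come from taking the infimum over $n$ together with the no-zero rule — the no-zero rule is what kills the contribution of the all-ones string and makes the infimum equal $1/4$ rather than something larger.

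**Identifying the main obstacle and the finishing step.**

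The main obstacle I anticipate is correctly formalizing the conflict graph and proving that the maximum weight of an "independent set of wins" across all $n$ simultaneously (i.e. the best achievable success \emph{guarantee}, an infimum over $n$) is $1/4$ — this is a genuinely non-trivial combinatorial optimization, not a routine calculation, because the algorithm's decisions on short strings and long strings are coupled (the prefix of a long string looks like a whole short string to a $n$-oblivious algorithm), so one cannot optimize $n$-by-$n$ independently. I would handle this by a potential/charging argument: assign to each $n$ a "budget", show that any deterministic rule that does better than $1/4$ on some $n$ is forced — by the prefix-coupling in the conflict graph, and crucially by the no-zero rule preventing it from recovering on the all-$1$s instance — to do worse than $1/4$ on a smaller $n'$, so that the infimum over all $n$ can never exceed $1/4$. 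The base of the induction is small $n$ (e.g. $n=1,2,3$) checked by hand, where the no-zero rule and the start-from-$n=1$ assumption pin down the worst case; the inductive step propagates the bound upward. Finally, since $k$-max achieves exactly $1/4 = k p^k (1-p)$ for all $n$ by Lemma~\ref{lem:k-max-guarantee} (and the no-zero rule only hurts $k$-max on the measure-$2^{-n}$ all-sampled instance, which vanishes in the relevant regime), the matching upper bound shows no deterministic algorithm can do better, completing the proof.
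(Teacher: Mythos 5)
Your overall approach matches the paper's: the proof of \cref{prop:warm-up} is exactly a conflict argument based on the indistinguishability of a short instance from the prefix of a longer one (the player does not know $n$), combined with the no-zero rule to kill the all-ones instances, and a forced-selection cascade starting from $n=1$. However, your proposal stops short of the actual content of the proof. The ``genuinely non-trivial combinatorial optimization'' you defer to a vague potential/charging argument is, in the paper, a completely explicit four-step cascade: to get strictly more than $1/4$ the player \emph{must} win on $0$ at $n=1$ (since $1$ is lost by the no-zero rule), hence loses on $00$; must then win on both $01$ and $10$ at $n=2$; these forced wins remove $000,001,010,100$ at $n=3$, so she must win on all of $011,101,110$; and at $n=4$ the only instances not removed by conflicts or the no-zero rule are $0111,1011,1101,1110$ --- four out of sixteen, i.e.\ exactly $1/4$, a contradiction. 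Your plan never identifies that the contradiction materializes at a \emph{fixed} small size; instead you gesture at an open-ended induction ``propagating the bound upward,'' and your one concrete quantitative attempt (the $n\cdot 2^{n-3}$ count giving $n/8$) is wrong and abandoned. Without the explicit cascade, the proof is not there.

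Two smaller points. First, the direction of your charging argument is backwards relative to how the forcing actually works: selections at small sizes remove instances at larger sizes (via shared prefixes), so the natural statement is ``doing well at sizes $1,\dots,m$ forces selections that cap the achievable weight at size $m+1$,'' not ``doing well at some $n$ forces doing badly at a smaller $n'$'' (the latter follows only by contraposition after the forward argument is complete). Second, the winning instances of $2$-max at $p=1/2$ are those ending in $110$ or $101$, not $100$ or $010$: the $(k+1)$-th largest element must be \emph{sampled}, i.e.\ the third-to-last bit is a $1$. Neither point is fatal, but both would need to be fixed before the argument could be written out.
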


This proposition and its proof sketch are presented here to introduce informally the tools we will use. Its statement can be generalized to consider instances of size larger than some chosen $N_0$ (cf. \cref{prop:p-half}). 
\cref{sec:proof-AO} will prove this generalization directly.

\begin{sketch}
	For $p=1/2$, the $k$-max algorithm gets a guarantee of 1/4. Suppose that there is an algorithm that achieves a guarantee strictly better than 1/4.
	As a start, consider the decision of the algorithm when the adversary chooses $n=1$.
	Then, there are two instances (after sampling) which both occur with probability 1/2.
	The first possibility is that the instance is 0.
	Then the player knows that there is no 1 in the instance, and is first presented a 0. The second possibility is that the instance is 1. Then
	the player knows there is a 1 in the instance, and is announced from the start that the game is finished.
	
	In the second case, the player loses because of the no-zero rule. Thus, to achieve at least 1/4 for every $n$, the player needs to win in the first case. This means that when the player is presented with not a single 1, and sees a first 0, she stops.
	
	Here comes the key observation. Suppose that the adversary chose $n=2$ and the sampling resulted in the instance 00.
	Now again the player is presented with not a single 1, and again sees a first 0. From the above, we already deduced that she needs to stop at this first~0.
	Indeed, from the point of view of the player, this is exactly the same situation as in the case where the instance was 0, because the player does not know $n$. In other words, these two situations are \emph{indistinguishable} from the perspective of the player, and she has to make the same decision. In the case of 00, this decision is wrong as the last 0 is the second 0, hence the player loses. We call such a situation a \emph{conflict} between the instances 0 and 00. 
	
	Note that conflict works in both directions. If the player had a strategy that would make her win in 00, then after the first 0, she would wait, which would make her lose in the instance 0.
	
	Let us give yet another example of conflict, for the instances 01 and 001. On instance 001 the player receives a first 0, and knows that there is one 1. This is exactly the same information as in the instance 01 when it starts. If she stops on this element then she wins in~01 but loses in 001. On the other hand, if she waits and then stops on the next 0, she loses in 01 but wins in 001. Moreover, if she continues to wait she loses in both instances.
	
	More generally, for every pair of instances there is a fairly simple criterion in each of the two directions to see if they are in conflict or not. 
	In particular, it is enough to decide the conflict between instances whose sizes differ only by 1. 
	Indeed, two instances $s$ and $s'$ of sizes $n$ and $n+q$ respectively are in conflict if and only if there is a series of conflicts $(s,s_1)$, $(s_1,s_2)$, ..., $(s_{q-1},s')$, where $s_i$ has size $n+i$ (cf. \cref{lem:monotone-path-conflict}).
	Then we can define the (infinite) \emph{conflict graph} whose nodes are all possible instances and the edges represent the conflict between nodes of adjacent sizes.
	The conflict graph for size $n=1$ to $n=4$ is represented in \cref{fig:table1234}. 
	On this graph, we can represent an algorithm as a choice of instances in which it wins. Such \emph{selected instances} cannot be in conflict. In other words, they cannot be linked by a monotone path, where monotone means that the path goes from left to right without changing direction.  
	
	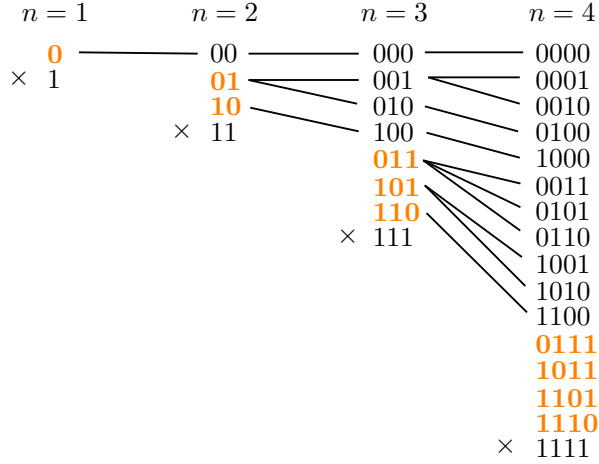
\begin{figure}[t]
		\begin{center}
			\scalebox{0.95}{
			\tikzset{every picture/.style={line width=0.75pt}} 

\begin{tikzpicture}[x=0.75pt,y=0.75pt,yscale=-0.7,xscale=0.9]

\draw    (150.38,75.07) -- (220.38,76.07) ;
Straight Lines [id:da4619298462748319] 
\draw    (250.38,76.07) -- (315.38,76.07) ;
Straight Lines [id:da05969078291676655] 
\draw    (354.38,75.07) -- (412.38,75.07) ;
Straight Lines [id:da5722025373907577] 
\draw    (250.38,96.07) -- (315.38,96.07) ;
Straight Lines [id:da3964781263119016] 
\draw    (250.38,96.07) -- (314.38,114.07) ;
Straight Lines [id:da41751410254202237] 
\draw    (251.38,117.07) -- (315.38,135.07) ;
Straight Lines [id:da6377289947575543] 
\draw    (356.38,94.07) -- (410.38,94.07) ;
Straight Lines [id:da06237270783813331] 
\draw    (356.38,94.07) -- (410.38,114.07) ;
Straight Lines [id:da3100942078625778] 
\draw    (354.38,116.07) -- (409.38,135.07) ;
Straight Lines [id:da921883519121422] 
\draw    (355.38,136.07) -- (410.38,155.07) ;
Straight Lines [id:da6279254973010989] 
\draw    (353.38,157.07) -- (410.78,192.65) ;
Straight Lines [id:da8661561261267355] 
\draw    (353.38,157.07) -- (410.78,174.65) ;
Straight Lines [id:da3143073901356962] 
\draw    (353.38,157.07) -- (410.65,210.4) ;
Straight Lines [id:da6829831790255821] 
\draw    (354.38,176.07) -- (411.65,230.45) ;
Straight Lines [id:da4591739894442768] 
\draw    (354.38,176.07) -- (413.65,252.45) ;
Straight Lines [id:da38155270385134765] 
\draw    (355.38,197.07) -- (414.65,272.45) ;

\draw (115,36) node [anchor=north west][inner sep=0.75pt]   [align=left] {$n=1$};
 Text Node
\draw (215,36) node [anchor=north west][inner sep=0.75pt]   [align=left] {$n=2$};
 Text Node
\draw (315,36) node [anchor=north west][inner sep=0.75pt]   [align=left] {$n=3$};
 Text Node
\draw (414,36) node [anchor=north west][inner sep=0.75pt]   [align=left] {$n=4$};
 Text Node
\draw (130,67) node [anchor=north west][inner sep=0.75pt]   [align=left] {\textbf{\textcolor{orange}{0}}};
 Text Node
\draw (130,87) node [anchor=north west][inner sep=0.75pt]   [align=left] {1};
\draw (226,67) node [anchor=north west][inner sep=0.75pt]   [align=left] {00};
\draw (226,88) node [anchor=north west][inner sep=0.75pt]   [align=left] {\textbf{\textcolor{orange}{01}}};
\draw (226,127) node [anchor=north west][inner sep=0.75pt]   [align=left] {11};
\draw (226,107) node [anchor=north west][inner sep=0.75pt]   [align=left] {\textbf{\textcolor{orange}{10}}};
\draw (322,67) node [anchor=north west][inner sep=0.75pt]   [align=left] {000};
\draw (322,88) node [anchor=north west][inner sep=0.75pt]   [align=left] {001};
\draw (322,127) node [anchor=north west][inner sep=0.75pt]   [align=left] {100};
\draw (322,107) node [anchor=north west][inner sep=0.75pt]   [align=left] {010};
\draw (322,147) node [anchor=north west][inner sep=0.75pt]   [align=left] {\textbf{\textcolor{orange}{011}}};
\draw (322,168) node [anchor=north west][inner sep=0.75pt]   [align=left] {\textbf{\textcolor{orange}{101}}};
\draw (322,207) node [anchor=north west][inner sep=0.75pt]   [align=left] {111\\};
\draw (322,187) node [anchor=north west][inner sep=0.75pt]   [align=left] {\textbf{\textcolor{orange}{110}}};
\draw (418,67) node [anchor=north west][inner sep=0.75pt]   [align=left] {0000};
\draw (418,88) node [anchor=north west][inner sep=0.75pt]   [align=left] {0001};
\draw (418,127) node [anchor=north west][inner sep=0.75pt]   [align=left] {0100};
\draw (418,107) node [anchor=north west][inner sep=0.75pt]   [align=left] {0010};
\draw (418,147) node [anchor=north west][inner sep=0.75pt]   [align=left] {1000};
\draw (418,168) node [anchor=north west][inner sep=0.75pt]   [align=left] {0011};
\draw (418,207) node [anchor=north west][inner sep=0.75pt]   [align=left] {0110};
\draw (418,187) node [anchor=north west][inner sep=0.75pt]   [align=left] {0101};
\draw (418,227) node [anchor=north west][inner sep=0.75pt]   [align=left] {1001};
\draw (418,248) node [anchor=north west][inner sep=0.75pt]   [align=left] {1010};
\draw (418,287) node [anchor=north west][inner sep=0.75pt]   [align=left] {\textbf{\textcolor{orange}{0111}}};
\draw (418,267) node [anchor=north west][inner sep=0.75pt]   [align=left] {1100};
\draw (418,307) node [anchor=north west][inner sep=0.75pt]   [align=left] {\textbf{\textcolor{orange}{1011}}};
\draw (418,328) node [anchor=north west][inner sep=0.75pt]   [align=left] {\textbf{\textcolor{orange}{1101}}};
\draw (418,367) node [anchor=north west][inner sep=0.75pt]   [align=left] {1111};
\draw (418,347) node [anchor=north west][inner sep=0.75pt]   [align=left] {\textbf{\textcolor{orange}{1110}}};
\draw (107,85.8) node [anchor=north west][inner sep=0.75pt]   [align=left] {$\displaystyle \times $};
\draw (203,125.8) node [anchor=north west][inner sep=0.75pt]   [align=left] {$\displaystyle \times $};
\draw (301,205.8) node [anchor=north west][inner sep=0.75pt]   [align=left] {$\displaystyle \times $};
\draw (394,364.8) node [anchor=north west][inner sep=0.75pt]   [align=left] {$\displaystyle \times $};

\end{tikzpicture}}
			\caption{An illustration of the first four layers of the conflict graph. \label{fig:table1234}}
		\end{center}
	\end{figure}
	
	For $n=1$, we denote by a cross the fact that the player will never win in the instance which consists of one 1, because of the no-zero rule. We write 0 in orange to denote that the player wins in this instance, as she decides to select the last (and only) 0.
	
	Let us now consider more systematically all the instances of size 2. They all have a probability of occurring of $1/4$. We already know that both 00 and~11 cannot be selected (because of the conflict to the left and the no-zero rule, respectively). Thus, to achieve strictly more than 1/4, the player needs to win in both 01 and 10. Consequently, these instances need to be selected in the conflict graph.
	
	Now, for $n=3$, the player loses in 000, 001, 010 and 100 because of conflicts, and on 111 because of the no-zero rule. Therefore, she \emph{must} win in 011, 101, and~110, since each instance has probability~$1/8$.
	Finally, for size 4, we can use the same kind of argument as before, to show that the player loses in all instances except 0111, 1011, 1101 and 1110. But these are only four cases out of sixteen and thus, the player cannot strictly beat the 1/4 bound if the adversary chose $n=4$. And this is a contradiction.
\end{sketch}

There are several limitations to this first proof:
\begin{enumerate}
	\setlength\itemsep{0pt} 
	\item \label{item:no-online} The no-zero rule is arbitrary and it should be removed.
	\item \label{item:start-zero} The fact that the proof is only considering small sizes is a weakness, in the sense that it does not take into account algorithms which could possibly have a much better success guarantee than k-max, when starting from some large $N_0$.
	\item \label{item:p-half} The sampling probability is fixed to 1/2 instead of taking any value in $(0,1).$
	\item \label{item:deterministic} The bound only applies to deterministic algorithms.
\end{enumerate}

We continue by addressing the two first problems.
We design a proof that also works by starting from an arbitrary $N_0$ and not necessarily from 1. This also solves the first problem, as it makes the probability of the case with no zeros negligible for large enough $N_0$ (for size $n$, this case is just one out of $2^n$).

\begin{restatable}{proposition}{propphalf}
	\label{prop:p-half}
	For the last zero problem with $p=1/2$, no deterministic algorithm can have a better success guarantee than $k$-max, even if we consider only instances of size larger than $N_0$, for any $N_0$.
\end{restatable}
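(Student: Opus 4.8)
The plan is to upgrade the forcing argument of \cref{prop:warm-up} so that it no longer relies on the artificial base case at $n=1$. Recall the set‑up: the conflict graph has the binary strings as nodes, with edges between strings of consecutive length that are in conflict, and by \cref{lem:monotone-path-conflict} two instances are ``in conflict'' (cannot both be winning) exactly when joined by a monotone path. Hence the set $W$ of instances on which a fixed deterministic algorithm wins is an independent set for the monotone‑path relation, and since $p=\tfrac12$ all $2^n$ length‑$n$ instances are equally likely, so the success probability on size $n$ equals $w_n/2^n$ with $w_n=|W\cap\{0,1\}^n|$. The success guarantee (an infimum) therefore beats $\tfrac14$ only if $w_n>2^{n-2}$ for all large $n$ with infimum bounded away from $2^{n-2}$, so it suffices to prove $\liminf_n w_n/2^n\le \tfrac14$ for \emph{every} such $W$; note this is automatically a statement ``for arbitrarily large $n$'', so it handles the restriction to sizes $>N_0$ and, since the all‑ones instance contributes only $2^{-n}$, it also renders the no‑zero rule irrelevant.

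The new structural ingredient is to slice the conflict graph by the number $m$ of $1$'s. One checks that the map ``delete the last $0$'' is a parent map turning the strings with exactly $m$ ones into a rooted forest $F_m$, whose $m+1$ roots are the strings $1^a01^b$ ($a+b=m$), and within which a monotone path is precisely an ancestor--descendant pair; moreover a string $\ell 1^{g}$ (with $\ell$ ending in $0$) has exactly $g+1$ children, namely $\ell01^{g},\ell101^{g-1},\dots,\ell 1^{g}0$. Consequently $W$ restricted to $F_m$ is a set‑theoretic \emph{antichain} in a forest, and the $n$‑th layer of the whole graph decomposes as a disjoint union over $m$ of level $(n-m)$ of $F_m$, where level $j$ of $F_m$ has $\binom{m+j}{j}$ nodes. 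Writing $a^{(m)}_j=|W\cap(\text{level }j\text{ of }F_m)|$, the problem becomes: given, for every $m$, an antichain in $F_m$ with these level counts, show $\liminf_n \tfrac1{2^n}\sum_{m}a^{(m)}_{n-m}\le\tfrac14$.

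To carry this out I would translate the antichain property into a local constraint on the $a^{(m)}_j$: choosing $a^{(m)}_j$ nodes at level $j$ forbids all of their children at level $j+1$ (each level‑$(j+1)$ node has a unique parent, so the forbidden set has size $\sum$ of the chosen nodes' child‑counts, which is at least $a^{(m)}_j$ and is exactly $|\text{level }j{+}1|$ if one takes the whole level) and forbids one ancestor at each smaller level; then I would analyze the resulting (infinite‑dimensional, but levelwise finite) optimization, comparing against the $k$‑max solution. The delicate point — and the main obstacle — is that this bound is \emph{tight}: by \cref{lem:k-max-guarantee} the $k$‑max winning set satisfies $w_n/2^n=\tfrac14$ at \emph{every} $n\ge 3$, so the argument must be essentially exact, with no slack to give away. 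Concretely one must account precisely for how winning instances at one level block instances at neighbouring levels in the very irregular forest $F_m$ — strings ending in many $1$'s have many children, strings ending in $0$ have a single child — and rule out an adversary who concentrates its winning instances on low‑branching strings (those ending in $0$) to evade the blocking and beat $\tfrac14$; combining the per‑$m$ estimates into a single statement about one layer $n$ (the ``bad'' layers for different $m$ need not coincide) is what forces an averaging over a window of consecutive sizes, or an LP‑duality–style weighting, rather than a naive two‑layer count, which would only yield the trivial $w_n\le 2^n$.
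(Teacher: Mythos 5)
Your setup is sound: the reduction to showing $\liminf_n w_n/2^n\le \tfrac14$ for every antichain $W$ of the monotone-path relation is the right reformulation, the parent map ``delete the last $0$'' and the resulting forests $F_m$ are correctly described, and you rightly observe that this disposes of both the no-zero rule and the restriction to sizes larger than $N_0$. But the proof stops exactly where the proof has to happen. Your last paragraph is a statement of intent (``I would translate\dots then I would analyze\dots''), followed by an accurate diagnosis of why the naive two-layer count fails and why the bound is tight --- without supplying the argument that overcomes this. In particular, you flag the adversary who concentrates winning instances on low-branching strings as the obstacle to be ``ruled out,'' but such an adversary cannot be ruled out: selecting only strings ending in $0$ (degree-one nodes) is essentially what $k$-max does, and the real task is to show that even this strategy cannot sustain a density above $\tfrac14$. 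That quantitative step is entirely missing.

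For comparison, the paper closes this gap in two moves. First, a canonicalization: because the subtrees hanging off any two nodes of the same degree are isomorphic (\cref{lem:degrees}), one may swap selections between them without changing the per-size performance (\cref{lem:same-degree-swapping,lem:small-degree-first}), so it suffices to analyze the single ``small degrees first'' strategy. Second, a potential-function argument: tracking the \emph{cover ratio} $\rho_n$ (the fraction of size-$n$ instances that are selected or removed), one shows that selecting a $\tfrac14+\varepsilon$ fraction at every size forces $\rho$ to climb past $\tfrac12$ (exhausting the degree-one nodes) and thereafter to increase by at least $\varepsilon$ per step, until fewer than a $\tfrac14+\varepsilon$ fraction of instances remain available --- a contradiction. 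Your forest decomposition by the number of ones is a genuinely different slicing from the paper's degree-based bookkeeping, and it could plausibly support an LP-duality or averaging argument as you suggest, but as written the proposal does not contain a proof: the ``levelwise optimization'' you would need to solve is the whole content of the proposition.
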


We now present a proof sketch. The full version of the proof can be found in  \cref{subsec:p-half}.

\begin{sketch}
	Again, consider a strategy that has a strictly  better guarantee than 1/4, starting from some arbitrary $N_0$.
	In the proof sketch of \cref{prop:warm-up}, there was somehow no choice on which node to select: for size 1, we had to select 0, for size 2 we had to select 01 and 10 etc. This is not the case anymore if we start with some large size $N_0$, since for this size only $1^{N_0}$ cannot be selected due to the no-zero rule. So in principle, we could select all nodes of size $N_0$ except this one to achieve a very high performance for this size. But this would mean a lot of conflicts later, and would prevent a good performance for (many) larger sizes. Therefore, the design of a strategy is a trade-off between having a good performance for the size at hand (if we design the strategy from left to right) and ensuring that there are still instances without conflicts for larger sizes. 
	
	The principle of the proof is similar to the one in \cref{prop:warm-up}: we start from size $N_0$ and we assume that the player selects a set of instances whose probability is in total at least  $1/4+\varepsilon$. Then we move on to the next size, consider which instances are in conflict, and the player again selects a set of instances whose probability is at least $1/4+\varepsilon$. We will see that if we continue accordingly, then for some size 
	there are not enough instances that are not in conflict to allow for a performance of $1/4+\varepsilon$.
	
	A problem we face in the analysis is that there are many ways in which the player could select its instances, and that these many different ways would lead to very different conflicts. In particular, we can already note in \cref{fig:table1234} that different nodes have different degrees (where a node's degree is the number of nodes connected to its right). The degree of a node is crucial as a large degree implies a lot of conflicts.
	
	The key observation 
	is that for any size $n$, if an algorithm selects a node $v$ of degree $d$ but does not select a node $v'$ of degree $d'<d$ that is not in conflict with a node of smaller size, then the algorithm can deselect $v$ and select $v'$. This gives rise to a new strategy which is still valid (in the sense that it does not select two instances in conflict) and the performance of this new algorithm is the same.
	
	In other words, when designing a strategy, one can always pick the small degrees first. This leads to a canonical form for strategies. And for such strategies, arguments similar to the proof of \cref{prop:warm-up} yield the result.
\end{sketch}

Let us now move on to the case of general $p$, that is, overcome the third limitation. This case is a bit more complicated, because when $p \neq 1/2$ the instances of the same size do not have the same probability of occurring.
For example, for $p=3/4$, it is better for an algorithm to succeed in the instance $1^k0$ than to succeed in the instance $0^{k+1}$, as the first has probability $(3/4)^k(1/4)$ to occur and the second probability $(1/4)^{k+1}$.
From a technical perspective, this means that in the conflict graph the nodes now have weights. 
But the main change is that, as not all nodes have the same weight, the key observation above for $p=1/2$ does not work any more since there are nodes of small degree that have smaller weight than nodes of larger degree. Swapping nodes solely based on their degree therefore does not guarantee the same success guarantee.

To solve this problem, we use another albeit similar approach. 
We define a modification of a strategy that can only make it better, and this modification leads to another set of canonical strategies for which similar ideas as before can be applied to obtain the results.

\begin{restatable}{proposition}{prop:general-p}
	\label{prop:general-p}
	For the last zero problem with \emph{any} value of $p$, no algorithm (deterministic or randomized) can achieve a better success guarantee than the $k$-max algorithm, starting from an arbitrary $N_0$.
\end{restatable}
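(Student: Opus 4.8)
The plan is to transplant the conflict-graph analysis of \cref{prop:warm-up,prop:p-half} to the \emph{weighted} setting, replacing the ``low-degree-first'' normalization used for $p=1/2$ (which relied on all instances of a given size having the same probability) by a weight-aware normalization geared to $k=\lfloor\tfrac{1}{1-p}\rfloor$. By \cref{prop:last-zero-equivalent} it suffices to argue for the last zero problem. The first step is to set up the conflict graph $G$ whose vertices are all finite binary strings $s$, assigning the vertex $s$ having $o$ ones and $z$ zeros the weight $w(s)=p^{o}(1-p)^{z}$ (its probability of occurring once $n=o+z$ is fixed) and keeping an edge between two conflicting strings of consecutive sizes, so that by \cref{lem:monotone-path-conflict} the full conflict relation is exactly ``joined by a monotone path''. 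A deterministic algorithm then corresponds to an independent set of $G$ (two conflicting instances cannot both be won), with success probability at size $n$ equal to the total weight of the independent set restricted to size~$n$; a randomized algorithm is a mixture of deterministic ones, hence induces fractional values $q_s\in[0,1]$ with $q_s+q_{s'}\le 1$ on every edge of $G$ (in a single run at most one of two conflicting instances is won), and its success guarantee is $\inf_{n>N_0}\sum_{|s|=n}w(s)q_s$. So it is enough to show $\inf_{n>N_0}\sum_{|s|=n}w(s)q_s\le kp^{k}(1-p)$ for every such fractional strategy $q$.

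\emph{Canonicalization.} Fix $q$ and suppose for contradiction that $\sum_{|s|=n}w(s)q_s\ge kp^{k}(1-p)+\varepsilon$ for all $n>N_0$. I would process the sizes in increasing order and introduce a local move that shifts mass off an instance $s$ that is not in a chosen canonical family $\mathcal C_{|s|}$ onto a canonical instance, proving that the move never lowers $\sum_{|s|=n}w(s)q_s$ at any $n>N_0$, never violates an edge constraint, and — since the total mass on $\bigcup_n\mathcal C_n$ is bounded — terminates. Here $\mathcal C_n$ should be the family of instances that are simultaneously heavy and have few forward conflicts; the shape of $\mathcal C_n$, hence of the move, is governed by the sign of $p-\tfrac12$, i.e.\ by whether $k=1$ (for $p<\tfrac12$, where ``heavy'' means zero-rich) or $k\ge2$ (for $p\ge\tfrac12$, where ``heavy'' means one-rich and $\mathcal C_n$ resembles the $k$-max winning set of \cref{lem:k-max-guarantee}). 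Because the move only weakly increases the guarantee, it then suffices to bound the guarantee of strategies supported on $\bigcup_n\mathcal C_n$.

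\emph{Counting, as for $p=1/2$.} On a canonical strategy I would re-run the layer-by-layer accounting of \cref{prop:p-half} with weights: selecting mass of weight $>kp^{k}(1-p)$ inside $\mathcal C_n$ consumes, through the monotone-path criterion, strictly more conflict-free weight in $\mathcal C_{n+1}$ than it would in $\mathcal C_n$; the deficits accumulate down the layers, so after $O(1/\varepsilon)$ sizes there is not enough conflict-free weight left to maintain $kp^{k}(1-p)+\varepsilon$. Telescoping the weight inequalities over $N_0+1,\dots,N_0+M$ and letting $M\to\infty$ makes the boundary terms negligible, exactly as in \cref{prop:p-half}, which yields $\inf_{n>N_0}\sum_{|s|=n}w(s)q_s\le kp^{k}(1-p)$ and the desired contradiction.

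\emph{Main obstacle.} The crux is the canonicalization: one has to define the local move so that it is simultaneously weight-nondecreasing at \emph{every} size and feasible (edge-preserving), even though a vertex's weight and its forward degree are no longer co-monotone — the very point flagged in the text before the statement. This forces the move, and the family $\mathcal C_n$, to be pinned to the arithmetic of $k=\lfloor\tfrac{1}{1-p}\rfloor$ (which is why the threshold $kp^{k}(1-p)$ is the right barrier), to be analyzed separately in the regimes $p<\tfrac12$ and $p\ge\tfrac12$, and to be shown to terminate rather than cycle. By comparison, the randomization reduction above is a routine averaging argument and the final counting is a direct, if careful, adaptation of \cref{prop:p-half}; the full details are carried out in \cref{sec:proof-AO}.
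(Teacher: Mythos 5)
There is a genuine gap, and it sits exactly where you yourself flag the ``main obstacle'': your plan requires a canonicalization move that is weight-nondecreasing at \emph{every} size $n>N_0$ simultaneously, and no such move exists. The natural exchanges in the weighted graph --- deselecting a selected node of large degree in favour of its children, or selecting an unselected low-degree node and removing its descendants --- necessarily transfer mass between \emph{different} sizes: by \cref{lem:structure-large-p} a node of weight $w$ and degree $d$ has children of total weight $dw(1-p)$, so the first exchange loses $w$ at size $s$ and gains $dw(1-p)$ at size $s+1$. The paper's resolution is not a per-size-monotone normalization but a change of objective: it introduces the \emph{average performance} over a window $[n,n+t]$, observes that any strategy beating $kp^k(1-p)+\varepsilon$ at every size also beats it on average in every window, proves (\cref{lem:local-moves}) that the two local operators above never decrease the \emph{average}, deduces (\cref{lem:fill-in-optimal}) that the average-optimal strategy is the ``fill-in'' strategy (select all non-removed nodes of degree at most $\lfloor 1/(1-p)\rfloor$, plus everything at the last size), and finally shows (\cref{lem:kmax-fills}, \cref{lem:fill-in-k-max}) that fill-in and $k$-max coincide except at the two boundary sizes, so the fill-in average exceeds $kp^k(1-p)$ by at most $2/t$; taking $t>2/\varepsilon$ gives the contradiction. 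Your proposal contains none of these ingredients; it defers the entire difficulty to an unspecified family $\mathcal{C}_n$ and an unspecified move, and the subsequent ``re-run the layer-by-layer accounting of \cref{prop:p-half}'' cannot be carried out as stated, since that accounting rests on the equal-weight swapping of \cref{lem:same-degree-swapping}, which is exactly what fails for $p\neq 1/2$.

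A secondary issue: for randomized algorithms you impose only the pairwise constraint $q_s+q_{s'}\le 1$ on edges of the conflict graph (which joins consecutive sizes only). The correct feasibility condition, and the one the paper uses (\cref{lem:removed-fraction} and the definition of validity before \cref{lem:local-moves-randomized}), is that the selection probabilities sum to at most $1$ along every \emph{monotone path}; the removed fractions accumulate along such paths. The edge-only relaxation is strictly weaker and would not support the counting you intend.
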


\begin{sketch}
	Our crucial observation here 
	is that there is a threshold $D$, depending only on $p$ such that: (1) if a node of degree larger than $D$ is selected, then the strategy can only be improved by deselecting it and selecting its children and (2) if a node of degree smaller than $D$ is not selected, then the strategy can only be improved by selecting this node and deselecting all its descendants.
	Note that these changes involve instances of multiple sizes.
	
	A consequence of this is that these moves can decrease the performance for some given size and improve the performance for some other size, which does not correspond to our performance measure (which is an infimum of the guarantees over all sizes). 
	We overcome this issue in the following way.
	If an algorithm can always perform better than $k$-max, then it also performs better on average. Our strategy modification always improves the average performance, and we prove that after applying these modifications we obtain a strategy whose average performance is not better than the average performance of $k$-max algorithm. This leads to a contradiction again. 
	
	Finally, randomized algorithms can be analyzed with the same kind of tools. The only difference is that strategies select nodes with some probability instead of selecting it either fully or not. All the statements can then be adapted to this ``fractional'' version in a pretty straightforward way.
\end{sketch}

Finally, we extend the results to the scenario in which the player knows the value of $n$. (Remember that an upper bound in our context is a hardness result.)

\begin{restatable}{proposition}{prop:knowing-n}
	\label{prop:knowing-n}
	An upper bound on the success guarantee for the last zero problem implies the same bound for AOS$p$ even in the general case \emph{where the player knows the total size}.
\end{restatable}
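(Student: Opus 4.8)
The plan is to embed the hard last-zero instances into genuine AOS$p$ instances \emph{of a known size}, by padding a short increasing block with a very long tail of tiny values, so that knowing the total number of elements reveals essentially nothing about where the increasing block ends. Write $c$ for the assumed upper bound on the success guarantee of the last-zero problem (the case of interest being $c=k p^k(1-p)$). For a fixed size the last-zero problem is a finite zero-sum game --- the adversary picks the length, the player picks a stopping rule --- so by minimax the hypothesis gives, for every $N_0$, a finitely supported distribution $\pi$ over last-zero sizes $m\in\{N_0,\dots,2N_0\}$ such that \emph{every} last-zero strategy wins with probability at most $c$ when $m\sim\pi$ and the bits are then drawn. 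Now take $N$ enormous, with $N_0=o(\sqrt N)$, and for each $m$ define the size-$N$ AOS$p$ instance $\alpha_m$ in which the elements arriving at times $1,\dots,m$ get strictly increasing values $v_1<\dots<v_m$, and the $N-m$ elements arriving at times $m+1,\dots,N$ get distinct values, all strictly below $v_1$, in decreasing order --- a growing sequence that at time $m$ collapses to low values. The adversary draws $m\sim\pi$; the player knows $N$ but not $m$.

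The heart of the argument is that running any known-$n$ algorithm $A_N$ on $\alpha_m$ is, up to $o(1)$, the same as running some last-zero strategy on a size-$m$ last-zero instance \emph{without} knowing $m$. First, every tail value lies below the entire increasing block and the tail arrives after it, so the maximum of the online set equals the last online element of the increasing block unless all of $v_1,\dots,v_m$ are sampled, an event of probability at most $p^{N_0}$; furthermore the low online elements arrive only after every increasing online element, so $A_N$ cannot use them before it has to commit. Second, by \cref{prop:last-zero-equivalent} the increasing block under $p$-sampling \emph{is} the last-zero problem of size $m$, and the only data $A_N$ has beyond a last-zero player are $N$ and the number $b$ of sampled tail elements; but $b\mid(m,N)\sim\mathrm{Bin}(N-m,p)$, and as $m$ ranges over the window $\{N_0,\dots,2N_0\}$ the parameter $N-m$ moves by at most $N_0=o(\sqrt N)$, which is negligible against this binomial's standard deviation $\Theta(\sqrt N)$. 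Hence for typical values of $b$ the likelihood $\Pr(b\mid m,N)$ is flat in $m$ up to a factor $1+o(1)$ (atypical $b$ carry super-polynomially small probability and are swept into the error), so the posterior on $m$ available to $A_N$ from $(N,b)$ and its in-block observations is, pointwise, $(1+o(1))$ times the posterior a last-zero player forms from its observations alone. Conditioning on $b$, the in-block behaviour of $A_N$ is therefore a genuine last-zero strategy $B_b$, facing a length distribution that is $(1+o(1))\pi$.

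Combining the two points and averaging over $m\sim\pi$ and over $b$ gives $\mathbb E_{m\sim\pi}[\Pr(A_N\text{ wins on }\alpha_m)]\le \sum_b\Pr(b)\,\mathbb E_{m\sim\pi\mid b}[\Pr(B_b\text{ wins the size-}m\text{ last-zero game})]+p^{N_0}\le(1+o(1))\,c+p^{N_0}$, where the last inequality replaces $\pi\mid b$ by $\pi$ using the flatness estimate and then applies the minimax bound to the single strategy $B_b$. Since the average of $\Pr(A_N\text{ wins on }\alpha_m)$ over $m\sim\pi$ is at most $c+o(1)$, some $m$ in the support yields an instance of known size $N$ on which $A_N$ wins with probability at most $c+o(1)$; letting $N_0\to\infty$ (and $N$ with it) shows that the success guarantee of every known-$n$ algorithm for AOS$p$ is at most $c$, as claimed. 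The step I expect to be the main obstacle is the flatness estimate of the second paragraph: making rigorous that knowing $n$ leaks no usable information about the collapse time $m$ is what forces the choice of scales --- a tail of length $\Theta(N)$ against a window of width $o(\sqrt N)$ for $m$ --- and also the small bookkeeping needed to reconcile the ``win by vacuity when all elements are sampled'' convention of AOS$p$ with the ``no-zero rule'' used in the last-zero proofs, both of which are absorbed into the $p^{N_0}$ term.
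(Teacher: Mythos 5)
Your construction is the same as the paper's in spirit: the paper also pads an increasing block of length $m$ with a tail of low values, reformulates the result as a ``colored last zero problem'' in which the player additionally knows $n$ and the number $b$ of sampled tail (blue) elements, and then uses concentration of $b\sim\mathrm{Bin}(n-m,p)$ to argue that this extra information is useless. Your observation that the block part of the instance is exactly the last zero problem, that the tail elements arrive too late to matter, and that the likelihood of $b$ is flat in $m$ over a window of width $o(\sqrt N)$, all track the paper's argument (the paper phrases the flatness as the statement that the weight ratio between a node and its child in the modified conflict graph is $1-p+o(1)$ with high probability).

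The genuine gap is the minimax step. You claim that the hypothesis --- an upper bound $c$ on the \emph{success guarantee} of the last zero problem, i.e.\ on $\inf_{m}$ of the success probability --- yields, for every $N_0$, a hard distribution $\pi$ supported on the bounded window $\{N_0,\dots,2N_0\}$ against which every strategy wins with probability at most $c$. Minimax applied to the game restricted to that window gives $\min_\pi\max_A=\max_A\min_{m\in\{N_0,\dots,2N_0\}}$, but the hypothesis only controls $\max_A\inf_{m\ge N_0}$: a strategy could in principle beat $c$ on every size inside the bounded window and pay for it only at much larger sizes, which is all the bare statement excludes. (Indeed the paper's own fill-in strategy for a window $[n,n+t]$ beats $c$ outright at the last size of the window, so uniformity over a bounded window is not automatic.) Since your flatness estimate forces the support of $\pi$ to have width $o(\sqrt N)$, you cannot escape to an unbounded support either. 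The fix is exactly what the paper does: it does not treat the last-zero upper bound as a black box, but re-runs the conflict-graph machinery (the local operators, the fill-in strategy, and the bound of $c+2/t$ on the \emph{average} performance over any window of width $t$) on the modified colored conflict graph. That windowed average bound is the statement your minimax step actually needs; with it, your reduction goes through (take $\pi$ uniform on the window), but as written the step does not follow from the proposition's hypothesis.
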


\begin{sketch}  
	To show this strengthening of \cref{prop:last-zero-equivalent}, we use a simple trick: instead of doing the reduction with instances of increasing values, we consider instances that have first an increasing part and then elements with very low values. 
	The adversary can choose at which point to switch from one regime to the other. This type of instances basically mimics the previous case, as the player wins if she picks the last online element of the first part.
	This makes it a bit more complicated, as the samples from the second part give some additional indication regarding the exact moment that the switch occurs, but it turns out that our previous proof is robust to this.

	More concretely, now that the player knows the value of $n$, we need a slight variant of the last zero problem, which we call the \emph{colored last zero problem}: First, an adversary picks two integers $m$ and $n$, with $m\leq n$. A sequence of bits of length $n$ is created where every entry independently has value 1 with probability~$p$ and 0 otherwise. We color the entries $1$ to $m$ with red, while the entries $m+1$ to $n$ are colored blue. The player is given the size $n$, the number of red 1s and the number of blue 1s. Then, the player is presented with the bits one after the other, and for each of them decides whether to stop or to continue. The player wins if she stops on the last red 0 of the sequence.
\end{sketch}

Note that for the colored last zero problem, $m$ basically plays the role that $n$ was playing before. This change leads to a modified conflict graph. Now, the different layers of the conflict graph correspond to the various sizes of $m$ in this case, and there is a separate conflict graph for each value of $n$. Note that the conflict graph has a finite number of layers as $m$ varies between $1$ and $n$. A node of the graph is a couple $(S,b)$, where $S$ is a sequence of bits of length $m$, that represents the sequence of red bits, and $b$ is an integer that represents the number of blue 1s. The exact positions of 0s and~1s in the blue bits are irrelevant, only the total number of blue 1s matters. Using these modified tools, we can follow very similar arguments as before and prove the hardness result for the case of known $n$ as well. 


\section{Overview for random order}
\label{sec:RO}

In this section we study the second problem of this paper: the random-order secretary problem with $p$-sampling, ROS$p$. 
To analyze this case it is useful to have the following equivalent point of view. We assign a uniformly random arrival time $\tau_i$ to each of the $n$ elements in the interval $[0,1]$. If $\tau_i<p$ we add $i$ to $S$ and otherwise we add it to $V$. Then the elements in $V$ are revealed in the order of the $\tau_i$'s.
Clearly, $\tau_i<p$ with probability $p$, so each value is in~$S$ independently with probability $p$. It is also clear that the resulting order is uniformly random. Therefore, any algorithm for the original formulation can be applied to this one. Conversely, an algorithm for this formulation can be transformed into a randomized algorithm for the original one, by sampling $|S|$ uniform arrival times in $[0,p]$ and $|V|$ uniform arrival times in $[p,1]$.

Consider the following family of algorithms. We fix a sequence $t=(t_i)_{i\in \mathbb{N}}$ such that $0\leq t_1<t_2<\cdots<1$. 
Between times $t_k$ and $t_{k+1}$ the algorithm $ALG_t$ sets as a threshold the $k$-th largest sampled value. More precisely, suppose the value $\alpha_i$ is revealed and assume $t_k\leq \tau_i<t_{k+1}$. $ALG_t$ accepts $\alpha_i$ if it is the largest among the values from $V$ seen so far, and is greater than the $k$-th largest value from $S$. For simplicity, if $|S|<k$ we define the $k$-th largest value of $S$ as $-\infty$. We prove that the best possible success guarantee is attained in this family.

\begin{theorem}\label{thmRO}
  There exists a universal sequence $t$, independent of $p$ and $n$, such that $ALG_t$ obtains the best possible success guarantee for ROS$p$. Furthermore, when $p=0$ this guarantee is equal to $1/e$, and when $p$ tends to 1, the guarantee tends to $\gamma\approx 0.58$, the optimal success guarantee in the full-information secretary problem.\footnote{The optimal guarantee $\gamma\approx 0.58$ was first obtained numerically by Gilbert and Mosteller~\cite{GM66}. An explicit formula for $\gamma$ was later found by Samuels~\cite{S82,S91}.} 
\end{theorem}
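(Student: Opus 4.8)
The plan is to prove Theorem~\ref{thmRO} in three stages: exactly analyze $ALG_t$ for a fixed threshold sequence $t$ and reduce the worst case to $n\to\infty$; optimize the limiting objective over $t$; and prove a matching upper bound by showing that an arbitrary ordinal algorithm degenerates, in the limit, into some member of the family $\{ALG_t\}$. For the first stage, I would work in the continuous-time reformulation (arrival times $\tau_i\sim\mathrm{Unif}[0,1]$, $S=\{i:\tau_i<p\}$) and first restate the rule sample-free: $ALG_t$ stops at the first online element $e$ that is a running maximum of the online sequence and has $\tau_e\ge t_{g(e)}$, where $g(e)=1+\lvert\{i\in S:\alpha_i>\alpha_e\}\rvert$ is the global rank of $e$ at time $\tau_e$ (every sample precedes $e$). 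Writing $Y_1>Y_2>\cdots$ for the overall order statistics and $J$ for the index of the largest online element ($\Pr[J=j]=p^{j-1}(1-p)$, $M=Y_J$), $ALG_t$ succeeds iff $\tau_M\ge t_J$ and every earlier online running maximum $e$ has $\tau_e<t_{g(e)}$. Conditioning on $J=j$ and $\tau_M=x$, the remaining statistics $Y_{j+1},Y_{j+2},\ldots$ are i.i.d.\ uniform, so the online elements exceeded only by $M$ and by samples form a record process whose arrival times are i.i.d.\ $\mathrm{Unif}(p,1)$ and whose global ranks jump by independent geometric gaps; reading off which of them are running maxima and imposing the stopping condition gives a recursion for the conditional success probability and, after integrating out $x$ and summing over $j$, a closed form $g_n(t,p)$. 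I would then show $g_n(t,p)$ is non-increasing in $n$ — either from the formula or by a coupling deleting one element from a size-$(n{+}1)$ instance — so the guarantee of $ALG_t$ equals $g(t,p):=\lim_n g_n(t,p)$.

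The second stage exploits that $g(t,p)$ is separable: it is a sum of terms, the $k$-th depending — after a monotone reparametrization — only on $t_k$ and its neighbors and, term by term, concave, with $p$ entering each term only as a positive multiplicative constant; hence the maximizer $t^\star=(t^\star_k)_k$ does not depend on $p$. I would write the first-order conditions, solve them into an explicit recursion for $t^\star$ (with $t^\star_1=1/e$), and evaluate $g(t^\star,p)$. At $p=0$ there are no samples, $J\equiv 1$, and the objective collapses to $t_1\log(1/t_1)$, optimized at $1/e$. As $p\to 1$ the law of $J$ escapes to infinity (mass near $1/(1-p)$), so infinitely many thresholds become active; I would identify the limiting value with $\gamma$ by checking that the recursion for $t^\star$ and the series for $g(t^\star,p)$ reproduce the Gilbert--Mosteller dynamic program and Samuels' formula.

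For the upper bound, by the reduction to ordinal algorithms stated in the introduction it suffices to bound ordinal algorithms, and since a non-running-maximum of the online sequence cannot be the maximum of $V$, such an algorithm WLOG accepts only online running maxima. Upon seeing one, by exchangeability of the unrevealed values its decision can be taken to depend only on the numbers of online elements and of samples seen so far and on the current global rank, and an interchange/averaging argument makes it WLOG a (possibly randomized) threshold on the global rank — a ranking function. Letting $n\to\infty$, the counts concentrate into the arrival time, so the ranking function converges — along a subsequence by compactness and a diagonalization over rank values, after a monotonization step showing that randomization and non-monotone rank-thresholds do not help — to a threshold sequence, i.e.\ to some $ALG_t$. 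Hence the optimal guarantee is $\sup_t g(t,p)=g(t^\star,p)$, matching the second stage.

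I expect two genuine difficulties. The first is extracting the closed form $g(t,p)$ from the record-process recursion and recognizing it as a separable concave program whose optimizer is $p$-independent — the technical heart and the source of the surprising statement. The second is the limiting step of the upper bound: making rigorous that a history-dependent randomized ordinal algorithm cannot beat the limit of the threshold family, in particular that dependence on the online and sample counts \emph{separately} (rather than only through the single ``time'' parameter) buys nothing, and controlling the interchange of $n\to\infty$ with the supremum over algorithms.
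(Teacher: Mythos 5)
Your proposal follows essentially the same route as the paper: a coupling argument showing the success probability of $ALG_t$ is non-increasing in $n$, an explicit limiting formula that is separable and concave in $t$ with $p$ entering each term only multiplicatively (hence a $p$-independent optimizer), evaluation at $p=0$ and $p\to1$ against Samuels' formula, and a matching upper bound obtained by showing the optimal ordinal algorithm for fixed $n$ is a rank-threshold (``sequential-$\ell$-max'') rule whose success probability converges, via a Riemann-sum analysis, to the same limiting expression. The paper handles your flagged second difficulty by a backward-induction argument showing the continuation value depends only on $n$, the number of observed elements, and the current relative rank, and is monotone in the latter two, which yields the threshold structure directly without a separate monotonization or compactness step.
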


We prove this theorem in two main steps. First, we find the sequence $t^*$ that maximizes the success guarantee of $ALG_t$. Then, we find an expression for the optimal success probability when $p$ and $n$ are given, and prove that for fixed $p$ it converges to the success guarantee of $ALG_{t^*}$ when $n$ tends to infinity.
In this section, we state the lemmas and sketch the proofs. The full proofs can be found in \cref{sec:proof-RO}.

In order to find the optimal sequence $t^*$ we start by studying the success probability of algorithm $ALG_t$, for any sequence $t$, sample rate~$p$ and instance size~$n$. We prove that in fact the worst case for this class of algorithms is when $n$ is very large. The approach of approximating the problem when $n$ is large by a continuous time problem was pioneered by Bruss~\cite{B84} and has been used for different optimal stopping problems (see e.g. \cite{CCJ15, IKM06}).

\begin{restatable}{lemma}{lemMonotonicityOfALGt}
  For any sequence $t$ and sampling probability $p$, the success probability of $ALG_t$ in ROS$p$ decreases with $n$.
  \label{lem:MonotonicityOfALGt}
\end{restatable}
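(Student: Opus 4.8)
The goal is to show that for a fixed threshold sequence $t$ and sampling probability $p$, the success probability of $ALG_t$ in ROS$p$ is nonincreasing in $n$. I would prove this via a coupling between the instance with $n$ elements and the instance with $n+1$ elements, so that whenever $ALG_t$ succeeds on the larger instance, it also succeeds on the smaller one (under the coupling), hence the success probability can only drop as $n$ grows.

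\medskip

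\emph{Step 1: Set up the continuous-time model and the coupling.} Using the arrival-time formulation from the text, an instance of size $n$ is specified by $n$ i.i.d.\ uniform arrival times $\tau_1,\dots,\tau_n$ in $[0,1]$ together with the (ordinal) ranking of the values; since the algorithm is ordinal and the adversary picks values, it suffices to track, for each element, its arrival time and its rank among all $n$ values. To pass from $n$ to $n+1$, take $n$ elements with arrival times $\tau_1,\dots,\tau_n$ and a worst-case value configuration, and \emph{insert} an $(n+1)$-st element with an independent uniform arrival time $\tau_{n+1}$ and a value that I get to place; the natural choice is to insert it as the \textbf{new minimum} (rank $n+1$, i.e.\ smaller than all others). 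With this choice the relative ranking of the original $n$ elements among themselves, and the identity of the maximum of the online set, are unchanged, while the "extra" element is harmless value-wise.

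\medskip

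\emph{Step 2: Argue the run of $ALG_t$ on the coupled $(n+1)$-instance dominates the run on the $n$-instance.} The only way the extra minimum element can affect $ALG_t$ is through (a) the threshold it uses — the $k$-th largest sampled value — or (b) the "largest so far among $V$" test, or (c) occupying a slot in $S$ or $V$. Since the inserted element is the global minimum: if it lands in $S$, it never becomes one of the top-$k$ sampled values (for any $k\ge 1$ that matters, with the $-\infty$ convention handling small samples), so the thresholds seen by $ALG_t$ are exactly the same as in the $n$-instance; if it lands in $V$, then when it is revealed it is never the largest online value so far (there is always a larger one, unless it is literally the first online arrival, in which case $ALG_t$ would be comparing the smallest possible value against a threshold and would not accept it unless the threshold is $-\infty$, i.e.\ fewer than $k$ samples have arrived — a boundary case one checks directly). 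In all cases $ALG_t$ makes exactly the same accept/reject decisions on the original $n$ elements as it does in the $n$-instance, and it never stops on the extra element. Therefore $ALG_t$ stops at the online maximum in the $(n+1)$-instance if and only if it does so in the $n$-instance. Taking expectations over the coupling gives $\Pr[\text{success on } n{+}1] \le \Pr[\text{success on worst-case } n]$, and since this holds for the worst-case $n$-configuration, the success \emph{probability} (the infimum over value configurations) is nonincreasing.

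\medskip

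\emph{Main obstacle.} The delicate point is the boundary behavior: when the inserted minimum element is the very first online arrival, or when inserting it changes whether $|S|<k$ at some time (it can only push an element out of $V$ and into $S$ or vice versa in the coupling if I am not careful about how I couple the arrival times). I would circumvent this by coupling cleanly: keep the $n$ original arrival times fixed and draw $\tau_{n+1}$ independently, so the membership of every original element in $S$ versus $V$ and its position in the revealed order is identical across the two instances; only the extra element is added. Then the sole thing to verify is that an extra \emph{global-minimum} element, wherever it falls, neither changes a relevant order statistic of $S$ nor triggers a spurious acceptance in $V$ — which is exactly the case analysis above, and the only genuinely fiddly sub-case is the first-arrival-in-$V$ one, handled by noting $ALG_t$ there accepts only if the threshold is $-\infty$, and a global minimum beats $-\infty$ but then it is also the online maximum-so-far trivially, so accepting it would be correct only if it were the overall online maximum, which it is not for $n\ge 1$; hence $ALG_t$'s decision on that element is the same (reject, once $n\ge 2$) in both instances. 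I would write this out carefully as it is the crux of the coupling argument; everything else is bookkeeping.
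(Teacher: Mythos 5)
Your coupling is the same one the paper uses: fix the $n$ arrival times, insert an $(n+1)$-st element as the global minimum with an independent uniform arrival time, and compare the two runs. However, your central claim --- that $ALG_t$ ``never stops on the extra element'' and hence succeeds on the $(n+1)$-instance \emph{if and only if} it succeeds on the $n$-instance --- is false, and your resolution of the boundary case is where it goes wrong. If the new minimum is the first online arrival and falls in an interval $[t_k,t_{k+1})$ with $|S|<k$, the algorithm's rule \emph{does} accept it: it is trivially the largest value of $V$ seen so far and it exceeds the $-\infty$ threshold. You argue instead that ``accepting it would be correct only if it were the overall online maximum \dots hence $ALG_t$'s decision on that element is the same (reject)'' --- but the algorithm is not clairvoyant; its decision is dictated by the threshold rule, not by whether acceptance would be correct. (A similar issue arises when $V$ consists only of the new element while the original $n$ elements are all sampled.) So the two runs are not equivalent, and the equality of success probabilities your ``iff'' would imply is simply not true.

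The lemma survives because the discrepancy only ever hurts the larger instance: whenever $ALG_t$ stops on the inserted minimum, it fails on the $(n+1)$-instance (the minimum cannot be $\max V$ once $V$ contains any original element), while it may well have succeeded on the $n$-instance. This is exactly why the paper argues the single implication ``failure on $n$ implies failure on $n+1$'' rather than an equivalence: if the algorithm accepts $\alpha_{n+1}$ it fails outright, and if it does not, its behaviour on the remaining elements coincides with the $n$-instance run. Your write-up needs that one-directional restatement (and the removal of the claim that the algorithm rejects the inserted element); with that repair, the rest of your case analysis --- in particular the observation that an inserted global minimum never perturbs the $k$-th order statistic of $S$ in a way visible to an ordinal comparison --- is correct and matches the paper's argument.
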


To prove the lemma the idea is to inductively couple the realizations of the arrival times in instances of sizes $n$ and $n+1$. We show that if $ALG_t$ fails for a given realization of the arrival times of the largest $n$ values in the instance of size $n$, then $ALG_t$ also fails for any possible realization of the arrival time of the smallest (the $n+1$-th largest) value, in the instance of size $n+1$. This implies that the probability of failure increases with $n$.

By \cref{lem:MonotonicityOfALGt} the success guarantee of $ALG_t$ is simply the limit of its success probability when $n$ grows to infinity. We calculate these probabilities and obtain an explicit formula for the limit in the following lemma. The formula turns out to be surprisingly simple.

\begin{restatable}{lemma}{lemSuccGuaranteeALGt}
  Fix a sequence $t$ and a sampling probability $p$. The success guarantee of $ALG_t$ in $ROSp$ is given by
  \begin{equation}
    \label{eq:SuccGuaranteeALGt}
    \sum_{i=1}^{\infty}  p^{i-1}\Bigg( 
    1-\max\{p,t_i\}
    - \int_{\max\{p,t_i\}}^1 \sum_{j=1}^i \frac{t-\max\{p,t_i\}}{t^j} \,dt
    \Bigg)\,. 
   \end{equation}
  \label{lem:SuccGuaranteeALGt}
\end{restatable}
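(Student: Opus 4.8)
The plan is to compute the success probability of $ALG_t$ for finite $n$ and then take the limit $n\to\infty$, which by \cref{lem:MonotonicityOfALGt} equals the success guarantee. First I would set up notation in the continuous-time model: each element $i$ has an i.i.d.\ uniform arrival time $\tau_i\in[0,1]$, it lands in $S$ iff $\tau_i<p$, and the online maximum is the element of largest value among those with $\tau_i\ge p$. Since the algorithm is ordinal, only the relative order of values matters, so we may think of the values as $n$ distinct reals and condition on which element is the overall maximum, second maximum, and so on. The event ``$ALG_t$ wins'' is: the online maximum, call it the element of rank $r$ overall (so $r-1$ of the top $r$ elements are in $S$, i.e.\ have arrival time below $p$), arrives at some time $\tau$ with $\tau\ge p$, falls in a block $[t_k,t_{k+1})$ with $k\ge r-1$ so that the $k$-th largest sampled value is below it (here we must be careful: the threshold it must beat is the $k$-th largest value \emph{in all of $S$}, and since exactly $r-1$ values above it are in $S$, we need $k\ge r-1$ and additionally no value below it that is in $S$ ``interferes'' — but lower values never interfere, so the condition is just $k\ge r-1$, i.e.\ $\tau\ge t_{r-1}$, combined with $\tau\ge p$), \emph{and} no earlier online element was accepted.

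The key structural fact I would isolate is that if the online maximum has overall rank $r$ and arrives at time $\tau\ge \max\{p,t_{r-1}\}$, then $ALG_t$ accepts it unless it was ``preempted'' earlier; and a preemption at time $s<\tau$ happens exactly when some element $j$ arriving at time $s$ is a running maximum of the online set and beats the $\ell$-th largest sample where $s\in[t_\ell,t_{\ell+1})$. Conditioned on the identity and arrival time of the top $r$ elements, the relevant preempting elements are precisely the top $r-1$ elements (all of which are in $S$, hence never online — so they cannot preempt!) — wait, that is wrong: a preempting element must be \emph{online}. So the only candidates for preemption are online elements larger than... no, any online running-maximum that clears its threshold. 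The clean way: condition on the set of arrival times of the top $i$ values for the right $i$, use independence and the uniform distribution, and express ``no preemption'' as an integral over the arrival-time configuration. I would organize the computation by summing over $i=r-1\in\{0,1,2,\dots\}$: the factor $p^{i-1}$ should emerge as the probability that the top $i$ values minus one (those strictly above the online max) all land in $S$ times a combinatorial/limiting factor, and the bracket should come from integrating over $\tau\in[\max\{p,t_i\},1]$ the probability that the online max arrives at $\tau$ and nothing online preempts it before $\tau$, which after the $n\to\infty$ limit becomes $1-\max\{p,t_i\}$ minus the displayed double integral accounting for preemptions in each sub-block $[t_j,t_{j+1})$ with $j\le i$.

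Concretely, the steps in order: (1) fix the continuous-time formulation and reduce to conditioning on ranks via ordinality; (2) for fixed $n$, write $\Pr[\text{win}]=\sum_{r\ge 1}\Pr[\text{online max has overall rank }r \text{ and is accepted}]$; (3) for each $r$, integrate over the arrival time $\tau$ of the rank-$r$ element and over the arrival times of the higher-ranked elements, using that exactly $r-1$ of them must be below $p$ and the configuration must avoid preemption; (4) take $n\to\infty$, in which limit the binomial ``how many of the remaining $n-r$ small elements are online running maxima before $\tau$'' concentrates and the sums telescope into the $\sum_{j=1}^i \int (t-\max\{p,t_i\})/t^j\,dt$ term; (5) re-index $i=r-1$ to match \eqref{eq:SuccGuaranteeALGt}. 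The main obstacle I anticipate is step (3)–(4): correctly accounting for preemption by online running-maxima that themselves clear a (lower) threshold, and showing that in the limit the probability of no preemption before time $\tau$ given the online max is at $\tau$ equals $1 - \sum_{j=1}^{i}\int_{?}^{\tau}(\cdots)$ — in particular getting the $1/t^j$ factors right, which come from the limiting probability that a uniform point is the maximum of the $j$ arrivals in $[0,t]$ that would matter (the online max among online-or-sampled elements seen so far has ``global rank $j$'' density $\sim 1/t^{j}$). I would double-check the edge cases $i=1$ (where the bracket should reduce to $1-\max\{p,t_1\}-\int_{\max\{p,t_1\}}^1 (t-\max\{p,t_1\})/t\,dt$, the classical secretary-type term) and $t_i\le p$ (where $\max\{p,t_i\}=p$ and the block structure below $p$ is irrelevant since those elements are all sampled).
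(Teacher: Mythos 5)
Your plan follows essentially the same route as the paper: compute the success probability for fixed $n$ by conditioning on the overall rank of $\max V$, handle the event that the algorithm stops earlier, and pass to the limit $n\to\infty$ using \cref{lem:MonotonicityOfALGt} and a Riemann-sum/concentration argument (in the paper the finite-$n$ correction is a binomial CDF term that vanishes). One organizational device the paper uses and you only gesture at: because the thresholds are non-increasing in time, ``$ALG_t$ stops before seeing $\max V$'' \emph{implies} ``$\max V$ is acceptable'', so the success probability cleanly splits as $\mathbb{P}(\max V\text{ acceptable})-\mathbb{P}(ALG_t\text{ stops before seeing }\max V)$, and the second term reduces to ``the maximum value arriving in $[p,\tau)$ is itself acceptable''. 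This is exactly the resolution of the preemption issue you flag as the main obstacle, and it is worth stating explicitly rather than hoping the sums telescope.

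There is, however, a concrete off-by-one error in your acceptability condition. If $\max V$ has overall rank $r$, then $S$ contains exactly the $r-1$ values above it, so the $k$-th largest value of $S$ exceeds $\max V$ for every $k\le r-1$ and is below it only for $k\ge r$. Hence the condition is $k\ge r$, i.e.\ $\tau\ge\max\{p,t_r\}$, not $k\ge r-1$ and $\tau\ge t_{r-1}$ as you wrote; correspondingly the index $i$ in \cref{eq:SuccGuaranteeALGt} is the rank $r$ itself (giving the prefactor $p^{i-1}=p^{r-1}$), not $r-1$. With your indexing you would obtain $\sum_i p^{i}\bigl(1-\max\{p,t_i\}-\cdots\bigr)$, which is off by a factor of $p$ from the target; your own sanity check at $p=0$ (where the answer must reduce to $t_1\ln(1/t_1)$) would expose this, since your sum would start at an undefined $t_0$ term. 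Fix the indexing and the rest of the plan goes through as in the paper.
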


We then focus our attention on optimizing this success guarantee. Surprisingly, it turns out that the problem of maximizing \cref{eq:SuccGuaranteeALGt} is separable and concave, so we can simply impose the first-order conditions to obtain the optimum. Perhaps even more surprising is that these first-order conditions are independent of $p$, and therefore, the optimal sequence $t^*$ is also independent of $p$, as the following lemma shows.

\begin{restatable}{lemma}{lemOptimalSequencet}
  Fix a sampling probability $p$. The sequence $t^*$ defined as the unique solution of the equations
  \begin{align}
    \ln\left( \frac{1}{t^*_i} \right) + \sum_{j=1}^{i-1}\frac{(1/t_i^{*})^j-1}{j}=1,\;\; \text{ for all } i\in\mathbb{N}\, ,
    \label{eq:OptimalSequencet}
  \end{align}
  maximizes \cref{eq:SuccGuaranteeALGt}. In particular, $t^*$ does not depend on $p$. 
  \label{lem:OptimalSequencet}
\end{restatable}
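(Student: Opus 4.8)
The plan is to show that maximizing the success guarantee \eqref{eq:SuccGuaranteeALGt} over admissible sequences $0\le t_1<t_2<\cdots<1$ decomposes into independent one-dimensional optimization problems, one per coordinate $t_i$, and that each such problem is concave with first-order condition \eqref{eq:OptimalSequencet}.

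\emph{Step 1: Simplify the summand.} First I would carry out the inner integral in \eqref{eq:SuccGuaranteeALGt} explicitly. Writing $a_i:=\max\{p,t_i\}$, the term $\int_{a_i}^1\sum_{j=1}^i \frac{t-a_i}{t^j}\,dt$ is an elementary integral: the $j=1$ piece contributes $\int_{a_i}^1 (1-a_i/t)\,dt = (1-a_i)+a_i\ln a_i$, and for $j\ge 2$ one gets a rational expression in $a_i$. After collecting terms, the $i$-th summand of \eqref{eq:SuccGuaranteeALGt} becomes $p^{i-1}g(a_i)$ for an explicit function $g$; I expect $g$ to take the form involving $\ln(1/a_i)$ and the partial sums $\sum_{j=1}^{i-1}\frac{(1/a_i)^j-1}{j}$ that already appear in \eqref{eq:OptimalSequencet}. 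Crucially, each summand depends on $t$ only through the single variable $a_i=\max\{p,t_i\}$, so the objective is \emph{separable}: $\sum_{i\ge 1} p^{i-1} g_i(a_i)$ with $g_i$ depending on $i$ through the upper index of the partial sum.

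\emph{Step 2: Separable concave optimization.} Since the objective is a sum of terms each involving only one $a_i$, and the constraints $p\le a_i$ together with (if relevant) monotonicity can be checked to be inactive at the claimed optimum, the problem reduces to maximizing each $g_i(a)$ over $a\in[p,1]$ — or, since the unconstrained maximizer will turn out to exceed $p$ for the worst case $n\to\infty$, over $a\in(0,1)$. I would then compute $g_i'(a)$, set it to zero, and check that the resulting stationarity equation is exactly $\ln(1/a)+\sum_{j=1}^{i-1}\frac{(1/a)^j-1}{j}=1$, i.e.\ \eqref{eq:OptimalSequencet} with $a=t_i^*$. Note the right-hand side of this equation does not involve $p$ at all, which immediately gives the claimed $p$-independence of $t^*$. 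To confirm it is a maximum (not just a critical point), I would verify concavity of $g_i$ on the relevant interval — most cleanly by showing $g_i'$ is strictly decreasing, e.g.\ $g_i'(a) = $ (something) $\cdot(1 - h_i(a))$ where $h_i(a)=\ln(1/a)+\sum_{j=1}^{i-1}\frac{(1/a)^j-1}{j}$ is strictly decreasing in $a$ and ranges over all of $(0,\infty)$ as $a$ ranges over $(0,1)$, so the equation $h_i(a)=1$ has a unique solution and $g_i$ is unimodal with a unique interior maximizer. Finally I would check $t_1^*<t_2^*<\cdots$, i.e.\ that the sequence of solutions of \eqref{eq:OptimalSequencet} is increasing in $i$ (each added term $\frac{(1/a)^j-1}{j}$ is positive for $a<1$, forcing the solution up toward $1$), so the admissibility constraint is automatically satisfied and the coordinatewise optimum is the global optimum.

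\emph{Main obstacle.} I expect the bookkeeping in Step~1 — evaluating the integral, summing the rational pieces over $j$, and recognizing the result as a function of $a_i$ alone whose derivative factors through $h_i$ — to be the delicate part; a small error there would hide the separable/concave structure. The conceptual content, once the summand is in the right form, is routine: separability plus unimodality of each one-variable term, with the $p$-independence falling out of the stationarity equation. A secondary point to handle carefully is the interchange of differentiation and the infinite sum (justified by uniform convergence on compact subsets away from the constraint boundary) and the verification that the constraint $a_i\ge p$ is not binding at the optimum for the worst-case instance size.
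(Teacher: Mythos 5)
Your proposal follows essentially the same route as the paper: relax monotonicity, observe separability, maximize each coordinate's concave term via its first-order condition (which is $p$-free), and verify that $t_i^*$ increases in $i$ so the relaxation is tight; the paper merely differentiates under the integral sign instead of evaluating the inner integral first. One small correction: the constraint $t_i\ge p$ is \emph{not} always inactive (e.g.\ $t_1^*=1/e<p$ whenever $p>1/e$), but this is harmless because the objective depends on $t_i$ only through $\max\{p,t_i\}$, hence is constant for $t_i\le p$, and the stated $t^*$ still attains the maximum.
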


Now that we have the best algorithm in the family, we prove that its success guarantee is actually the best possible. To do this, we first characterize the algorithm that achieves the highest success probability for fixed sampling probability $p$ and instance size $n$.

For a non-decreasing function $\ell:[n]\rightarrow[n]$, we define the \emph{sequential-$\ell$-max algorithm} the following way. The algorithm accepts the $i$-th observed value (considering the values from $S$ and the ones that have been revealed from $V$) if it is the largest seen so far from $V$ and it is larger that the $\ell(i)$-th largest value from $S$. We prove that the optimal algorithm is in this class.

\begin{restatable}{lemma}{lemBestOrdinalAlgo}
  Fix a sampling probability $p$ and an instance size $n$. There is a function $\ell$ such that the sequential-$\ell$-max algorithm obtains the best possible success probability for instances of size $n$ of ROS$p$.
  \label{lem:BestOrdinalAlgo}
\end{restatable}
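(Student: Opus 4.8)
The plan is to show that among all ordinal algorithms for ROS$p$ with fixed $n$, there is an optimal one of the \emph{sequential-$\ell$-max} form, i.e.\ one whose stopping decision at the $i$-th observation depends only on $i$ and on the global rank of the current element, and moreover accepts exactly when the current element is a local (online) maximum whose global rank is at most some threshold $\ell(i)$ that is non-decreasing in $i$.

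First I would set up the state space. Using the continuous arrival-time formulation from \cref{sec:RO}, by exchangeability it suffices to consider ordinal algorithms, and the only information relevant to ever winning is: the index $i$ of the current observation among all $i$ elements seen so far (samples plus revealed online elements), the global rank $g$ of the current element among those $i$, and the local rank $r$ of the current element among the online elements revealed so far. A standard first reduction is that an algorithm should only ever consider stopping when $r=1$, since otherwise the element cannot be the online maximum and stopping gives zero probability of success; so the decision reduces to a (possibly randomized) function $a(i,g)\in[0,1]$ prescribing the probability of stopping at a local maximum of global rank $g$ at step $i$. The next step is an exchange/interchange argument showing $a$ can be taken to be a $0/1$ threshold in $g$: fixing $i$ and the rest of the policy, the success probability is affine in each $a(i,g)$, and one shows that the coefficient of $a(i,g)$ is non-increasing in $g$ — intuitively, conditioned on being a local max at step $i$, a smaller global rank makes it both more likely to be the eventual online max and, crucially, does not hurt the continuation value in a way that reverses this — so the optimal $a(i,\cdot)$ is $1$ for $g\le \ell(i)$ and $0$ for $g>\ell(i)$ for some threshold $\ell(i)$. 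This is the classical ``monotone case'' structure of optimal stopping (à la Chow–Robbins–Siegmund), and I would phrase it as: the one-step-look-ahead region is monotone and closed, hence OLA is optimal.

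Then I would argue monotonicity of $\ell$ in $i$. The natural route is again an interchange argument on the conflict between thresholds at consecutive steps: if $\ell(i)>\ell(i+1)$, then there is a rank $g$ with $\ell(i+1)<g\le\ell(i)$ at which the policy accepts at step $i$ but would reject at step $i+1$; one shows that swapping to reject at step $i$ and accept at step $i+1$ (for that rank band) cannot decrease the success probability, because a local max at step $i$ of rank $g$ that is \emph{not} the eventual online max is more likely to cause a miss than the same situation one step later, while the loss from waiting is dominated by the gain. Formally this is a coupling on the realizations of arrival times: condition on the configuration of the top $g$ values and their sample/online/arrival status, and compare the two policies pointwise. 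Iterating these swaps over all inversions (a bubble-sort-type argument, which terminates because each swap strictly decreases the number of inversions while not decreasing the value) yields a non-decreasing $\ell$ achieving the optimum. Since $n$ is fixed and finite, all these sums are finite and the optimum is attained, so there is no convergence subtlety here.

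The main obstacle is the monotonicity-of-the-coefficient claim in both interchange arguments — i.e.\ making precise, via an explicit coupling of arrival-time realizations across the two states $g$ and $g+1$ (resp.\ steps $i$ and $i+1$), that decreasing the global rank (resp.\ postponing acceptance by one step when the rank is in the swap band) is weakly beneficial. The delicate point is that changing the global rank also changes the conditional law of the future (how many larger values are still to come, and whether the current element being ``good'' now correlates with the continuation value under the induced policy), so a naive ``better element $\Rightarrow$ better to accept'' is not automatic; one must set up the coupling so that in every coupled realization where the lower-rank policy fails, the higher-rank policy also fails. I expect this to be where the real work lies, and I would handle it exactly as in the proof of \cref{lem:MonotonicityOfALGt}: inductively couple the realizations of the relevant order statistics and their arrival times, and check failure-domination realization by realization. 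Everything else — the reduction to $r=1$, affineness in $a(i,g)$, and termination of the bubble-sort — is routine.
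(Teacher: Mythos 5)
Your plan is correct in outline but takes a genuinely different route from the paper. The paper does not use interchange/swap arguments at all: it runs backward induction on the optimal ordinal policy, computes the immediate-stop win probability in closed form as $\prod_{s=0}^{r-1}\frac{j-s}{n-s}$ for a step-$j$ local maximum of global rank $r$, and proves inductively (via the recursion \cref{eq:WinProbInduction}, using that the relative rank of $X_{j+1}$ is uniform on $\{1,\dots,j+1\}$ independently of the history) that the continuation value $W(n,j,r)$ of the optimal policy depends only on $n$, $j$ and $r$. The threshold structure in $r$ and the monotonicity of $\ell$ in $j$ then fall out of a one-line comparison: the stop value is decreasing in $r$ and increasing in $j$, while $W(n,j,r)$ is increasing in $r$ and decreasing in $j$, both read off directly from the recursion. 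This is exactly the point you flag as ``where the real work lies'': the paper's explicit recursion makes the monotonicity of the continuation value essentially immediate (the difference $W(n,j,r+1)-W(n,j,r)$ is a single nonnegative $\max\{\cdot\}-(\cdot)$ term), so no realization-by-realization coupling in the style of \cref{lem:MonotonicityOfALGt} is needed. Conversely, the one ingredient you assert without justification --- that the state can be compressed to $(i,g,r)$, i.e.\ that the optimal continuation value does not depend on the full history of relative ranks --- is precisely what the paper's induction establishes; your interchange argument needs this too (otherwise the coefficient of $a(i,g)$, and hence the threshold, could depend on the history), so you should prove it rather than fold it into the setup. If you supply that step and carry out the couplings, your bubble-sort argument would also work, but the DP route is shorter and avoids the delicate failure-domination couplings entirely.
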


To conclude the optimality of $ALG_{t^*}$ we show that the success probability of the best sequential-$\ell$-max algorithm for each $n$ converges to \cref{eq:SuccGuaranteeALGt} for some sequence $t$, when $n$ grows to infinity. The idea behind this is to calculate the success probability of the algorithm, show that there is a limit for the optimal $\ell$ in a continuous space, and use a Riemann sum analysis to obtain \cref{eq:SuccGuaranteeALGt} in the limit.

\begin{restatable}{lemma}{lemOptimalityOfALGt}
  Fix a sampling probability $p$. For each $n\in \mathbb{N}$, choose $\ell_{p,n}$ so that the sequential-$\ell_{p,n}$-max algorithm achieves the best possible success probability for fixed $p$ and $n$. There exists a sequence $t$ such that the success probability of the sequential-$\ell_{p,n}$-max algorithm converges to \cref{eq:SuccGuaranteeALGt} when $n$ grows to infinity.
  \label{lem:OptimalityOfALGt}
\end{restatable}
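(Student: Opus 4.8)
The plan is to compute the success probability of the sequential-$\ell_{p,n}$-max algorithm explicitly in terms of $\ell_{p,n}$, argue that the (rescaled) optimal ranking functions converge pointwise to a non-decreasing function, and identify the limit of the success probabilities with \cref{eq:SuccGuaranteeALGt} via a Riemann-sum analysis, where the time thresholds $t_i$ come out as the values where the limiting function crosses the integer levels. First I would set up coordinates: in the arrival-time picture, the $i$-th observed element arrives at a time that concentrates around $i/n$ for large $n$, so I will define the rescaled function $L_{p,n}(x) := \ell_{p,n}(\lceil xn\rceil)$ on $[0,1]$; since $\ell_{p,n}$ is non-decreasing and integer-valued, each $L_{p,n}$ is a non-decreasing step function, and the thresholds will be recovered as $t_i := \lim_n \inf\{x : L_{p,n}(x)\geq i\}$ after passing to a subsequence. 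The key probabilistic computation is the same kind of conditioning used to prove \cref{lem:SuccGuaranteeALGt}: condition on which element is the maximum of $V$, on its arrival time $x$, and on the event that it is the largest of $V$ seen so far; the algorithm accepts it precisely when its global rank among the $\lceil xn\rceil$ elements seen so far is at most $\ell_{p,n}(\lceil xn\rceil)$, and — crucially, since after acceptance of a true local max of $V$ the algorithm never again accepts — one must also ensure no earlier element was wrongly accepted. As in the proof of \cref{lem:SuccGuaranteeALGt} this ``no earlier false accept'' term contributes the $\int \sum_j t^{-j}$ correction; tracking it in the limit is where the bookkeeping is heaviest.

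Next I would take the limit $n\to\infty$. Using Helly's selection theorem, pass to a subsequence along which $L_{p,n}$ converges at every continuity point to a non-decreasing, integer-valued limit $L_p$, whose jump points define a sequence $0\le t_1\le t_2\le\cdots\le 1$ (with $L_p = i$ on $(t_i,t_{i+1})$); I would argue $t_i\ge p$ is forced in the limit because before time $p$ no element of $V$ has arrived, and that the $t_i$ are strictly increasing and bounded away from $1$ for the relevant range (otherwise the tail contribution vanishes and the identification still goes through with $\max\{p,t_i\}$). Then I would show that the conditional success probability, as a Riemann sum over the $\lceil xn\rceil$-grid, converges to exactly the $i$-th summand of \cref{eq:SuccGuaranteeALGt}: the factor $p^{i-1}$ is the probability that the $i-1$ largest sampled thresholds have already been ``used up'' by the arrival-time structure (i.e., that $i-1$ samples fall in $[0,x]$ relative to the max of $V$, matching the $k$-th-largest-sample rule of $ALG_t$), the term $1-\max\{p,t_i\}$ is the probability that the global max of $V$ arrives after $\max\{p,t_i\}$ and is accepted immediately, and the integral subtracts the probability mass lost to a premature acceptance. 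Since the grid is uniform and the integrand is bounded and piecewise continuous, dominated convergence / standard Riemann-sum convergence applies term by term, and the series converges because it is dominated by $\sum p^{i-1}<\infty$.

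Finally, I would note that the limit sequence $t$ obtained this way is a feasible sequence for the family $\{ALG_t\}$, so the limiting success probability equals \cref{eq:SuccGuaranteeALGt} evaluated at that $t$, which is what the lemma asserts; combined with \cref{lem:BestOrdinalAlgo} (optimality of sequential-$\ell$-max for each finite $n$) and \cref{lem:MonotonicityOfALGt,lem:SuccGuaranteeALGt,lem:OptimalSequencet} this yields that no algorithm beats $ALG_{t^*}$, completing the proof of \cref{thmRO}. The main obstacle I anticipate is making the interchange of limits fully rigorous: one needs uniform control (in $n$) on the tail of the series $\sum_i$ and on the ``no earlier false accept'' correction term so that the pointwise/subsequential convergence of $L_{p,n}$ upgrades to convergence of the success probabilities; a clean way is to bound the $i$-th term uniformly by $C p^{i-1}$ and invoke dominated convergence for series, handling the finitely many leading terms by the explicit Riemann-sum argument. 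A secondary subtlety is ruling out pathological optimal $\ell_{p,n}$ (e.g.\ with jumps that drift to $0$ or $1$ with $n$); here I would use \cref{lem:MonotonicityOfALGt} together with the fact that the finite-$n$ optimum is at least the guarantee of $ALG_{t^*}$, which pins the relevant structure of $L_{p,n}$ down in the limit.
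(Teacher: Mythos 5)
Your overall skeleton matches the paper's: first derive an explicit expression for the success probability of the sequential-$\ell$-max algorithm (the paper does this in a separate auxiliary lemma, yielding \cref{eq:SuccProbSeqEllMax}), then pass to the limit via a Riemann-sum analysis in which the time thresholds $t_i$ appear as the level-crossing points of a limiting ranking function $\tilde\ell:(0,1)\to\mathbb{N}$, and finally perform the algebraic rearrangement that turns the resulting double integral into \cref{eq:SuccGuaranteeALGt}. The one genuinely different step is how you identify the limit of the optimizers $\ell_{p,n}$. The paper does not use compactness at all: it rewrites the success probability as $\sum_r F_n(r,s)$, computes the discrete difference $F_n(r,s+1)-F_n(r,s)$, shows it factors as a positive quantity times a term that is decreasing in $s$, and thereby obtains a closed-form description of the optimal $\ell_n^*$ as the first $s$ where this difference changes sign; the limit $\tilde\ell$ then exists outright because the defining expression converges to the explicit integral condition \cref{eq:InProofTildeEll}. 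You instead invoke Helly's selection theorem on the rescaled step functions $L_{p,n}$. That buys you freedom from the optimization computation, but it costs you two things the explicit route gets for free. First, $L_{p,n}$ is not uniformly bounded (the optimal $\ell$ blows up as $x\to 1$, since near the end the algorithm should accept any local maximum of $V$), so you need the extended Helly theorem with values in $[0,\infty]$ and must check that the region where the limit is infinite contributes correctly to the series. Second, Helly only gives subsequential limits, so you must upgrade to convergence of the full sequence; your closing remark contains the right idea --- every subsequential limit of the optimal success probabilities is of the form \cref{eq:SuccGuaranteeALGt} for some feasible $t$, hence is at most the value at $t^*$ by \cref{lem:OptimalSequencet}, while \cref{lem:MonotonicityOfALGt} gives the matching lower bound for every $n$, pinning all subsequential limits to the same value --- but this needs to be stated as part of the proof of the lemma rather than deferred to the proof of \cref{thmRO}. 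With those two points made explicit, and with the uniform domination of the $i$-th term by $Cp^{i-1}$ that you already propose for the limit interchange, your argument goes through and is essentially equivalent in content to the paper's.
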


Finally, we study the success guarantee of $ALG_{t^*}$ in the border values of $p$, and show that it actually becomes equal to the best possible among all algorithms. It is easy to see that the success guarantee is $1/e$ when $p=0$. Note that when $p=0$, \cref{eq:SuccGuaranteeALGt} simplifies to $t_1 \ln(1/t_1)$, and that from \cref{eq:OptimalSequencet} we obtain that $t^*_1=1/e$. Replacing gives the success guarantee of~$1/e$. The case when $p$ tends to $1$ is a bit more involved and requires some tedious calculations. We evaluate \cref{eq:SuccGuaranteeALGt} with the first order approximation $t^*_i\approx 1-\frac{c}{i}$, for some constant $c$. Then, we do a Riemann sum analysis to show that, when $p$ tends to $1$, this approximation converges to the explicit expression of Samuels~\cite{S82,S91} for $\gamma$.


\section{Knowledge of the parameters}\label{sec:Parameters-overview}

In this section we briefly discuss the impact of the knowledge of the parameters on the guarantees that can be obtained. There are two parameters for both AOS$p$ and ROS$p$: the number of elements $n$ and the sampling probability $p$. The performance of an algorithm can vary a lot depending on its presumed knowledge about these parameters.

For AOS$p$ we already discussed that knowledge of~$n$ is irrelevant in worst case terms. To complete the picture, we turn our attention to the cases when $p$ is unknown. First, if $p$ is unknown but $n$ is known, we show that the ratio of the number of samples to the total number of elements gives a good estimate of $p$, and that using $k$-max with this estimate is basically optimal. More specifically, assume we are given $h$ samples drawn independently with probability $p$ from an initial set of $n$ values and the other $n-h$ values form the online set. The \emph{$k$-max algorithm for unknown $p$} sets the threshold to the $k$-th largest sample, where $k = \left\lfloor \frac{n}{n-h} \right\rfloor$, and accepts the first value of the online set exceeding this threshold. 

	%
	%

\begin{restatable}{theorem}{unknownpopt}
	\label{unknownpopt}
	For AOS$p$ with known $n$ and unknown~$p$, the variation of the $k$-max algorithm for unknown $p$ achieves the best possible success guarantee up to a factor $1-\varepsilon$ with high probability.
\end{restatable}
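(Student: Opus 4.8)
The plan is to reduce the unknown-$p$ setting to the known-$p$ analysis by showing that the empirical estimate $\hat p := h/n$ concentrates around $p$ sharply enough that the induced threshold index $k = \lfloor n/(n-h) \rfloor = \lfloor 1/(1-\hat p) \rfloor$ coincides with $k(p) = \lfloor 1/(1-p) \rfloor$ with high probability, except on a set of values of $p$ of total measure $O(\varepsilon)$ near the ``jump points'' of $p \mapsto \lfloor 1/(1-p)\rfloor$. First I would recall that $h = |S|$ is a sum of $n$ i.i.d.\ Bernoulli$(p)$ variables, so by a Chernoff/Hoeffding bound $\Pr[\,|h/n - p| > \delta\,] \le 2\exp(-2\delta^2 n)$; choosing $\delta = \delta(n)$ going to $0$ slowly (e.g.\ $\delta = n^{-1/3}$) makes this probability $o(1)$ while still $\delta \to 0$. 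On the event $\{|\hat p - p| \le \delta\}$, the map $x \mapsto \lfloor 1/(1-x)\rfloor$ is constant on the interval $[p-\delta, p+\delta]$ provided $p$ is not within $\delta$ of a point where $1/(1-x)$ is an integer, i.e.\ provided $p \notin \bigcup_{j\ge 2}\big(1 - 1/(j-1) - c\delta,\ 1 - 1/j + c\delta\big)$-type neighborhoods; the union of these bad neighborhoods has Lebesgue measure $O(\delta)$, which we absorb into the $1-\varepsilon$ slack (for the finitely many relevant $j$, since $k$ is bounded once $p$ is bounded away from $1$; and for $p$ very close to $1$ the guarantee degrades continuously so the argument is only needed on a compact subinterval).

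Next I would invoke \cref{thmAO} (or rather \cref{lem:k-max-guarantee}) directly: conditioned on the high-probability event that $k = k(p)$, the algorithm is \emph{exactly} the $k$-max algorithm with the correct index $k(p)$, so its conditional success probability is at least the guarantee $k(p)\,p^{k(p)}(1-p)$ established there — more precisely one applies the per-instance argument of \cref{lem:k-max-guarantee}, noting that conditioning on $\{k=k(p)\}$ is a statement about the sampling outcome only and does not interact adversely with the adversarial order, which is chosen after sampling but whose worst case is already accounted for in that lemma's proof. Then the unconditional success probability is at least $(1-o(1))\cdot k(p)\,p^{k(p)}(1-p)$, and since $o(1) \le \varepsilon$ for $n$ large, we get the claimed $(1-\varepsilon)$-factor bound ``with high probability'' in the sense of: for all $n$ beyond a threshold depending on $\varepsilon$, and for all $p$ outside a measure-$\varepsilon$ exceptional set, the bound holds with probability $1 - o(1)$ over the sampling.

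The main obstacle is handling the jump points of $p \mapsto k(p)$ cleanly: exactly at $p = 1 - 1/k$ the estimate $\hat p$ straddles the boundary and may produce either $k$ or $k+1$, and one must check that using the ``wrong'' neighboring index still gives essentially the right guarantee, or else restrict to a $p$-set of full measure up to $\varepsilon$. The resolution is that the function $p \mapsto \lfloor 1/(1-p)\rfloor\, p^{\lfloor 1/(1-p)\rfloor}(1-p)$ is the lower envelope of the smooth curves $p \mapsto k p^k (1-p)$, and near a crossing point the two adjacent curves agree to first order, so picking the neighboring $k$ costs only a lower-order factor; combined with the fact that the bad set of $p$ has measure $O(\delta) = o(1)$, this is swept into the $\varepsilon$. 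A secondary technical point is the case $h = n$ (all sampled), which by the paper's vacuity convention is a win and hence only helps; and the boundary regime $p \to 1$, where one notes $k(p) \to \infty$ but $\delta n \to \infty$ still dominates, or simply states the result on each compact $p$-interval $[0, 1-\eta]$ with constants depending on $\eta$.
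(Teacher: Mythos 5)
Your proposal follows essentially the same route as the paper: Hoeffding concentration of $h/n$ around $p$, followed by comparing the guarantee obtained with the empirical index $\lfloor n/(n-h)\rfloor$ to the known-$p$ guarantee $\lfloor 1/(1-p)\rfloor\,p^{\lfloor 1/(1-p)\rfloor}(1-p)$ for $h$ near $pn$ (the paper does this via an explicit formula for the win probability conditional on $|S|=h$, summed against the binomial weights). The only caveat is that conditioning on $\{k=k(p)\}$ does change the sampling measure, so the conditional win probability is the $h$-dependent quantity $\lfloor n/(n-h)\rfloor(h/n)^{\lfloor n/(n-h)\rfloor}\frac{n-h}{n}$ rather than literally $k(p)\,p^{k(p)}(1-p)$; this is exactly what the concentration and continuity argument you sketch (including your correct observation that adjacent curves $k p^k(1-p)$ and $(k+1)p^{k+1}(1-p)$ meet at the jump points) is needed to bridge, so the gap is one of phrasing rather than substance.
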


To prove the theorem, we find the probability (as a function of $n$, $h$ and $p$) that this variant of the $k$-max algorithm chooses the maximum value of the online set, and then use standard concentration arguments.  

Second, for AOS$p$ where both $p$ and $n$ are unknown, we show that no non-trivial guarantee can be obtained. The intuition behind this strong negative result results from the situation in which the algorithm is given very few samples. In this case, it does not know whether the instance is very short (in which case it should stop early), or the sampling probability is very low (in which case it should wait longer).

For ROS$p$, we have shown that the optimal algorithm $ALG_{t^*}$ does not depend on $p$, and knowledge of the uniform random arrivals suffices to obtain the optimal guarantee. Therefore, 
$ALG_{t^*}$ achieves the best possible success guarantee, even when $n$ is unknown. On the other hand, if $p$ is unknown and $n$ is known and large, then we can sample uniform random arrival times for each value and obtain with $ALG_{t^*}$ the best success guarantee. Indeed, the sampled arrival times themselves will provide a sharp estimate of $p$.

On a more applied note, whenever it is reasonable to assume that the values come in random order, it is usually also safe to assume that this random order comes from random arrival times. In case the arrival times are random but not uniform, the time thresholds $t^*$ can be transformed using the distribution function of the arrival times and again obtain the optimal success guarantee.


\section{Proof of \cref{thmAO}}
\label{sec:proof-AO}

This section provides the full details of the proof of the second statement of \cref{thmAO}, which is that no algorithm can achieve a better success guarantee than the $k$-max algorithm. The short proof of the first claim was presented in \cref{lem:k-max-guarantee}.
%
%
\thmAO*
As mentioned before, we prove the negative results for AOS$p$ by proving them for the last zero problem. These proofs heavily rely on the conflict graph. \cref{subsec:preliminaries-conflict} contains the preliminaries for the negative results. It formally introduces the conflict graph, proves some structural results and highlights its connection to both deterministic and randomized algorithms. 
Then, as a warm-up for the more complicated proof, we start with the negative result for deterministic algorithms for $p=1/2$ in \cref{subsec:p-half}. \cref{subsec:general-p} generalizes this to both randomized algorithms as well as general values of $p$. Finally, in \cref{subsec:n-known} we show that allowing the player to have even exact knowledge of the length $n$ of the sequence of bits, will be (in the worst-case) unnecessary. 

\subsection{Preliminaries: Last zero problem}\label{subsec:preliminaries-lastzero}



As outlined before, we prove the negative result by introducing the last zero problem, whose negative results directly imply the negative results for AOS$p$.

Let the \emph{norm} of a sequence of bits be the number of ones it has. The number of bits in such a sequence $s$ is called its length or size. 
The numbering of the entries of a sequence $s$ is counted starting from $1$.

\begin{Definition}
The \emph{last zero problem} with probability $p$ is the following.

\begin{enumerate}
\setlength\itemsep{0pt} 
\item An adversary picks a size $n$.
\item An instance is created the following way: a sequence of bits of length $n$ is generated, where in each position, independently, the number is 1 with probability $p$ and 0 otherwise.
\item The player is given the norm of the sequence.
\item Then the player sees the bits one after the other, and for each of them decides whether to stop or to continue. 
\item The player wins if she stops on the last 0 of the sequence.  
\end{enumerate}
\end{Definition}

Note that the fact that the player does not know the size $n$ is crucial, as otherwise the game is trivial. Thus, it does not make sense to analyze the algorithm for a given size; we have to prove that no algorithm can perform well on all sizes.

As proved in \cref{sec:OverviewAOS}, negative results for the last zero problem imply the same bounds for algorithms for AOS$p$.


\proplastzeroequivalent*

For the remainder of this section, we consider the last zero problem. An instance in the last zero problem can be described by a finite string of bits. We introduce the shorthand notation $0^\ell$ and $1^\ell$ for the string of length $\ell$ consisting of only zeros and ones respectively.


%
\subsection{Preliminaries: Conflict graph}\label{subsec:preliminaries-conflict}

We now formalize the intuition of \cref{subsec:overview-AO-negative} about the conflict graph. We first describe its generic structure, independent of the sampling probability $p$, without weights on the instances or any reference to success guarantees. Then we continue by describing how to measure the performance of a deterministic algorithm in this framework using probabilistic weights on the instances. 
\subsubsection{Conflict graph structure}


We first define what it means that two instances are in conflict. For an instance $I$ we denote by $I[a,b]$ the instance $I$ restricted to the positions $a$ to $b$ (both included). Consider two instances $I_1$ and $I_2$ of size $n_1$ and $n_2$ respectively with $n_1<n_2$, both containing at least one 0. Let $r$ be the position of the last 0 in $I_1$. The instances $I_1$ and $I_2$ are \emph{in conflict} if they have the same norm and $I_1[1,r]=I_2[1,r]$. The following lemma outlines why we care about this notion.

\begin{lemma}\label{lem:conflict-success}
No deterministic algorithm can win in two conflicting instances. 
\end{lemma}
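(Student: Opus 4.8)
The plan is to show that the ``in conflict'' relation exactly captures the indistinguishability of two instances from the player's point of view, and that a deterministic algorithm must therefore behave identically on both up to the decisive moment, so it cannot win in both.

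First I would unpack the definitions. Let $I_1$ and $I_2$ be in conflict, with sizes $n_1 < n_2$, equal norm, and $I_1[1,r] = I_2[1,r]$ where $r$ is the position of the last $0$ in $I_1$. The key point is what a deterministic ordinal algorithm ``sees'' before making its decision at position $j$: it sees the first $j$ bits of the instance, plus the norm (which is announced at the start), plus the value of $p$ (fixed). It does \emph{not} see $n$. So I would argue that for every position $j \le r$, the information available to the algorithm when it reaches bit $j$ is identical in $I_1$ and in $I_2$: the prefixes agree on $[1,r] \supseteq [1,j]$, and the norms are equal by hypothesis. Since the algorithm is deterministic, its decision (stop or continue) at each position $j \le r$ is a function only of this information, hence it makes the same choice at each such position in both instances. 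In particular, the algorithm stops at some position $j^\star \le r$ in $I_1$ if and only if it stops at the same $j^\star \le r$ in $I_2$, and if it has not stopped by position $r$ in $I_1$ it has not stopped by position $r$ in $I_2$ either.

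Next I would do the win/lose bookkeeping. In $I_1$, the unique winning stop is at position $r$ (the last $0$). In $I_2$, position $r$ holds a $0$ (since $I_2[1,r] = I_1[1,r]$ and position $r$ of $I_1$ is a $0$), but it is \emph{not} the last $0$ of $I_2$: because $I_2$ has the same norm as $I_1$ but is strictly longer, the extra $n_2 - n_1$ positions past $r$ must contain at least one more $0$ (they cannot all be $1$s, else the norm of $I_2$ would exceed that of $I_1$). Hence in $I_2$ the last $0$ occurs at some position $> r$. Now suppose, for contradiction, that a deterministic algorithm wins in both instances. Winning in $I_1$ forces it to stop exactly at position $r$; by the indistinguishability argument above it then also stops at position $r$ in $I_2$; but position $r$ is not the last $0$ of $I_2$, so it loses in $I_2$ --- contradiction. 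Therefore no deterministic algorithm wins in both, which is the claim.

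I don't expect a serious obstacle here; the main thing to get right is the careful statement of ``what the algorithm sees,'' namely that it depends only on the prefix and the norm (and $p$), not on $n$ --- this is exactly the point emphasized in \cref{subsec:overview-AO-negative}. The only mild subtlety is the observation that equal norm plus strictly greater length forces an extra $0$ after position $r$ in $I_2$, which is what makes position $r$ a losing stop there; I would state this explicitly as it is the crux of why conflict implies a genuine obstruction rather than a harmless coincidence.
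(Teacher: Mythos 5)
Your proposal is correct and follows essentially the same argument as the paper's proof: identical information (prefix plus norm, not $n$) forces identical decisions up to position $r$, and the equal-norm-but-longer condition guarantees an extra $0$ after $r$ in $I_2$, so stopping at $r$ loses there. No gaps; your explicit remark about why the extra positions must contain another $0$ is a welcome clarification of a step the paper states more tersely.
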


\begin{proof}
Consider a deterministic algorithm that wins in $I_1$, meaning, this algorithm stops at position $r$. Note that at any position $j\leq r$, the knowledge of the algorithm up to that point consists of the norm of the instance and $I[1,j]$. Now run the same algorithm on $I_2$. Since the algorithm is deterministic and has the same information available at every point in time, it must make the exact same decision at every $j\leq r$. In particular, it stops at position $r$. However, since $I_2$ has the same norm as $I_1$ but a larger size, there must be a zero after position $r$ in $I_2$, and the algorithm loses in $I_2$. 
\end{proof}

We now define the conflict graph, which is the formal object described by \cref{fig:table1234}.  

\begin{Definition}
The \emph{conflict graph} is an infinite graph in which the nodes correspond to all finite strings of bits. There is an edge between nodes $s_1$ and $s_2$ if and only if the corresponding instances of the last zero problem are in conflict and the size of $s_2$ is one bit larger than the size of $s_1$.
\end{Definition}

As every node corresponds to a unique instance and vice versa, we will use these terms interchangeably.

When we draw the conflict graph, we order the nodes by increasing size as in \cref{fig:table1234}. We define a \emph{monotone path} as a (possibly infinite) path in the conflict graph where the nodes correspond to consecutive increasing sizes. For example, in \cref{fig:table1234}, (01, 001, 0010) is a monotone path.

\begin{lemma}\label{lem:monotone-path-conflict}
Two instances are in conflict if and only if they are linked by a monotone path in the conflict graph.
\end{lemma}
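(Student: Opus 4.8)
\textbf{Proof plan for Lemma~\ref{lem:monotone-path-conflict}.}
The plan is to prove both implications by leveraging the definition of conflict and the fact that, along a monotone path, sizes increase by exactly one at each step. For the ``if'' direction, I would show that conflict is transitive along monotone paths: if $I_1$ and $I_2$ are in conflict and $I_2$ and $I_3$ are in conflict, with sizes $n_1 < n_2 < n_3$ consecutive (so $n_2 = n_1+1$, $n_3 = n_2+1$), then $I_1$ and $I_3$ are in conflict. The key point here is that the position $r$ of the last $0$ in $I_1$ is also the position of the last $0$ in $I_2$ --- indeed, since $I_1[1,r] = I_2[1,r]$ and $I_1$ ends in a $0$ at position $r$, and since $I_1$ and $I_2$ have the same norm, the suffix $I_2[r+1, n_2]$ consists entirely of $1$s (it has no $0$s, because all $0$s of $I_2$ already appear in $I_1[1,r] = I_2[1,r]$ as $I_2$ has the same number of $1$s but is longer). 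Hence the last $0$ of $I_2$ is at position $r$ as well, and $I_2[1,r] = I_3[1,r]$ together with equal norms gives $I_1[1,r] = I_3[1,r]$ and equal norms, i.e.\ $I_1$ and $I_3$ are in conflict. Iterating this argument along the whole monotone path yields that the two endpoints are in conflict.

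For the ``only if'' direction, suppose $I_1$ and $I_2$ are in conflict, with sizes $n_1 < n_2$. I would exhibit an explicit monotone path from $I_1$ to $I_2$. Let $r$ be the position of the last $0$ in $I_1$; by definition $I_1[1,r] = I_2[1,r]$ and both have the same norm, so (as argued above) the suffix $I_2[r+1,n_2]$ is all $1$s. Now define intermediate instances $I_1 = s_0, s_1, \dots, s_{n_2 - n_1} = I_2$ where $s_j$ is obtained from $I_1$ by appending $j$ ones; equivalently, $s_j = I_2[1, n_1 + j]$. Each $s_j$ has size $n_1 + j$, has the same norm as $I_1$ and $I_2$ plus $j$... wait --- I need to be careful: appending ones increases the norm. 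Instead I would build the path inside $I_2$ going backwards: set $s_j := I_2[1,\, n_2 - j]$ for $j = 0, \dots, n_2 - n_1$, so $s_0 = I_2$ and $s_{n_2-n_1} = I_2[1,n_1]$. Since $I_2[1,r] = I_1[1,r]$ and $n_1 \ge r$, and the entries of $I_2$ in positions $r+1, \dots, n_2$ are all $1$s, we get $I_2[1,n_1] = I_1$. Each consecutive pair $(s_{j+1}, s_j)$ has sizes differing by one, the same norm (we only strip trailing $1$s, which does not change the count of $1$s once we are past position $r$... but stripping a $1$ does change the norm). So this does not quite work either --- conflicting instances must have \emph{equal} norm, and truncation changes norm.

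The correct construction: since conflicting instances have equal norm and $I_2$'s positions after $r$ are all $1$s, I should instead keep the norm fixed by \emph{inserting} $1$s into $I_1$ after its last $0$. Define $s_j$ to be the string consisting of $I_1[1,r]$ followed by $(q - n_1 + r) $ ones where... let me just say: let $s_j$ agree with $I_1[1,r]$ on its first $r$ positions and have $j$ additional $1$s appended, for $j$ ranging from $n_1 - r$ up to $n_2 - r$; then $s_{n_1-r} = I_1$, $s_{n_2-r} = I_2$ (using that $I_2$'s tail is all $1$s), consecutive sizes differ by one, all have the same norm, all have last $0$ at position $r$, and all agree on $I_1[1,r]$. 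Hence each consecutive pair is in conflict, giving the desired monotone path. The main obstacle I expect is precisely this bookkeeping: making sure every instance along the path has exactly the same norm as the endpoints while sizes increase by one, which forces the intermediate instances to be $I_1[1,r]$ padded with trailing ones --- and checking that this padded string indeed coincides with the appropriate prefix of $I_2$ at the far end, which is exactly where the ``tail of $I_2$ is all ones'' observation is used.
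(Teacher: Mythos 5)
There is a genuine error at the heart of your argument: the claim that the suffix $I_2[r+1,n_2]$ consists entirely of ones is false, and the count actually goes the other way. The norm is the number of \emph{ones}; since $I_1$ and $I_2$ have the same norm but $I_2$ is strictly longer, $I_2$ must have \emph{more zeros} than $I_1$. Concretely, writing $I_1=I_1[1,r]\cdot 1^{n_1-r}$, equality of norms together with $I_1[1,r]=I_2[1,r]$ forces $I_2[r+1,n_2]$ to contain exactly $n_1-r$ ones and $n_2-n_1\ge 1$ zeros. (This is consistent with \cref{lem:conflict-relation}: children in the conflict graph are obtained by \emph{inserting a zero} after the last zero, not by appending ones.) This single inversion breaks your ``only if'' construction: the intermediate strings $I_1[1,r]\cdot 1^{j}$ do not all have the same norm (each appended one raises the norm by one, so consecutive ones are not even in conflict), and the last of them can never equal $I_2$, which has zeros after position $r$. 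The correct path must instead insert the $n_2-n_1$ missing zeros of $I_2$ one at a time after position $r$; equivalently, as the paper does, start from $I_2$ and repeatedly delete its last zero --- an operation that preserves the norm and the prefix $I_2[1,r]$ --- until $I_1$ is reached, checking that each consecutive pair is in conflict.

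The ``if'' direction of your plan survives, but not for the reason you give: the last zero of $I_2$ is \emph{not} at position $r$; it sits at some position $r'>r$ (it is one of the inserted zeros). Transitivity along the path still holds because $r'\ge r$: if $I_2$ and $I_3$ agree up to $r'$ and have equal norms, they in particular agree up to $r$, so $I_3[1,r]=I_2[1,r]=I_1[1,r]$ and $I_1$, $I_3$ are in conflict. With that one-line correction the forward implication is fine; the backward implication needs the replacement construction described above.
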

\begin{proof}
Let $I_1$ and $I_2$, be two instances of size $n_1$ and $n_2$ respectively that are in conflict, with $n_1<n_2$. By definition, they have the same norm, and have the same substring up to the last zero of $I_1$. Consider the following instance $I_3$: take $I_2$, and remove the last zero.

This instance (if it is not $I_1$) satisfies the two conditions above, thus is in conflict with $I_1$. It is also in conflict with $I_2$: they share the same prefix up to the last zero of $I_3$ and have the same norm. By repeating this operation (removing the last zero) until we get $I_1$, we get a series of instances (including $I_1$ and $I_2$), that are in conflict with one another, and can be ordered in increasing consecutive sizes. These instances form a monotone path in the conflict graph. 
The other direction of the proof follows similarly.
\end{proof}

This lemma and its proof have several consequences for the structure of the conflict graph. The following lemma is immediate.

\begin{lemma}\label{lem:conflict-relation}
Given an instance $I$ of size $n$, the instances of size $n+1$ in conflict with $I$ 
are the nodes that can be obtained by inserting a new zero anywhere after the last 0 of $I$. 
In the other direction, $I$ is in conflict with only one instance of size $n-1$: the one where the last zero has been removed.
\end{lemma}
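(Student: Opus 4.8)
The statement to prove is Lemma~\ref{lem:conflict-relation}: given an instance $I$ of size $n$, (a) the instances of size $n+1$ in conflict with $I$ are exactly those obtained by inserting a new zero anywhere after the last $0$ of $I$; and (b) in the other direction, $I$ is in conflict with exactly one instance of size $n-1$, namely the one obtained by deleting the last zero of $I$.

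The plan is to read both claims directly off the definition of "in conflict" together with the construction already used in the proof of Lemma~\ref{lem:monotone-path-conflict}. First I would fix notation: let $r$ be the position of the last $0$ in $I$, so $I[1,r]$ ends in a $0$ and $I[r+1,n]$ consists of $1$s only (it may be empty). For part (a), suppose $I'$ has size $n+1$ and is in conflict with $I$. By definition this requires $I$ and $I'$ to have the same norm and $I[1,r]=I'[1,r]$ (note $r$ is still the position of the last $0$ of $I$, which is the shorter instance, as required by the definition). Since $I'[1,r]=I[1,r]$ already contains exactly as many $1$s as $I$ has in total, and $I'$ has the same norm as $I$, the remaining bits $I'[r+1,n+1]$ must all be $0$; there is exactly one more position in $I'$ than in $I$, so $I'$ is obtained from $I$ by inserting a single $0$ somewhere in positions $r+1,\dots,n+1$, i.e. anywhere at or after the old last-zero position — equivalently, anywhere after the last $0$ of $I$ (inserting it immediately after position $r$, or further right among the trailing $1$s). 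Conversely, any such insertion yields an instance of size $n+1$ with the same norm as $I$ and agreeing with $I$ on $[1,r]$, hence in conflict with $I$; and the size condition in the edge definition is met since sizes differ by one. This establishes (a).

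For part (b), suppose $I''$ has size $n-1$ and is in conflict with $I$. Now $I''$ is the shorter instance, so let $r''$ be the position of its last $0$; the definition requires same norm and $I''[1,r'']=I[1,r'']$. The bits of $I$ after position $r''$ must then be all $1$s except that $I$ is one bit longer and has the same norm, forcing exactly one extra $0$ among positions $r''+1,\dots,n$ of $I$. But $I$ has its last $0$ at position $r$, and all of $I[r'' , n]$ beyond that single extra $0$ are $1$s, which pins the extra $0$ to position $r$ and forces $I''$ to be precisely $I$ with position $r$ deleted. Conversely, deleting the last $0$ of $I$ gives an instance of size $n-1$ with the same norm and the same prefix up to its (new) last zero, hence in conflict with $I$. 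So the instance of size $n-1$ in conflict with $I$ exists and is unique.

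I do not expect a genuine obstacle here: the lemma is essentially a restatement of the definition combined with the "peel off the last zero" move already exhibited in the proof of Lemma~\ref{lem:monotone-path-conflict}, which is why the excerpt calls it \emph{immediate}. The one place to be slightly careful is the direction of the definition's asymmetry — "$r$ is the position of the last $0$ in the \emph{shorter} of the two instances" — so in part (a) the reference zero is the last zero of $I$, while in part (b) it is the last zero of $I''$; keeping track of which instance is shorter is the only subtlety, and the argument above handles both cases explicitly.
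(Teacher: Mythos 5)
Your proposal takes the same route the paper intends: the paper offers no separate proof (it declares the lemma immediate from the definition of conflict and the peel-off-the-last-zero construction in the proof of \cref{lem:monotone-path-conflict}), and your argument is exactly that unpacking, so the approach matches. Two points to fix in the write-up. First, in part (a) the sentence ``$I'[1,r]=I[1,r]$ already contains exactly as many $1$s as $I$ has in total, \dots\ the remaining bits $I'[r+1,n+1]$ must all be $0$'' is false whenever $I$ has trailing $1$s after its last zero (e.g.\ $I=011$, $r=1$): the prefix $I[1,r]$ misses the $n-r$ trailing ones, and the correct count is that $I'[r+1,n+1]$ has $n-r$ ones in $n+1-r$ positions, hence \emph{exactly one} zero, not all zeros. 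Your very next clause draws the right conclusion (a single inserted $0$), so the argument survives, but as written the two clauses contradict each other and the stated premise does not support either. Second, part (b) claims the parent ``exists and is unique''; existence fails when $I$ contains at most one $0$, since the deletion then yields an all-ones string, which by the paper's definition cannot be in conflict with anything (the paper itself notes later, in the proof of \cref{lem:fill-in-k-max}, that nodes with at most one $0$ have no parent). So the correct statement is ``at most one,'' with uniqueness exactly as you argue it.
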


This lemma implies that a node has only one edge on its left. We refer to this node as its \emph{parent}. We define the \emph{degree} of a node in the conflict graph as the number of neighbors it has on its right, which we refer to as its \emph{children}. Furthermore, for a given size $n$ each node corresponds to a different instance of zeros and ones, so we have $2^n$ nodes in total. The degrees adhere to the following structure. 

\begin{lemma}\label{lem:degrees}
Consider all $2^n$ nodes corresponding to instances of size $n$. For every $i \in \{1,\ldots,n-1\}$, there are $2^{n-i}$ nodes of degree $i$. Concretely, half of these nodes have degree one, a quarter of the nodes have degree two, and so on until one node has degree $n$.

Moreover, a node with degree $k$ has exactly one child of degree $i$ for every $i \in \{1,\ldots,k\}$. 
\end{lemma}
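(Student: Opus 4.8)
The plan is to prove \cref{lem:degrees} by directly analyzing which instances of size $n+1$ are in conflict with a given instance of size $n$, using \cref{lem:conflict-relation} as the main handle. Recall from that lemma that the children of an instance $I$ of size $n$ are exactly the instances of size $n+1$ obtained by inserting a new $0$ somewhere strictly after the last $0$ of $I$.

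First I would compute the degree of a single node directly. Fix an instance $I$ of size $n$ with norm $w$, and let $r$ be the position of its last $0$. Everything after position $r$ in $I$ consists of $1$s, so there are exactly $n-r$ trailing ones. Inserting a new $0$ ``after the last $0$'' means choosing a position among the $n-r$ gaps located after position $r$ (between consecutive trailing ones, or at the very end); distinct gaps give distinct strings since the inserted $0$ lands after a different number of trailing ones. Hence the degree of $I$ equals $n-r$, i.e.\ the number of trailing $1$s of $I$. (A node of size $n$ that is all ones has no $0$ and thus no children, consistent with degree $0$; the maximum degree $n$ occurs only for the all-zeros string, which has $r=n$? — careful: I should recompute. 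The all-zeros string $0^n$ has its last $0$ at position $n$, so $n-r=0$; that is degree $0$, which contradicts the claimed ``one node has degree $n$''. So the correct reading must be: degree counts insertions of a $0$ \emph{at or after} the position right after the last $0$ — let me instead phrase it as: the children are obtained by appending/inserting a $0$ into one of the positions from $r+1$ up to $n+1$, giving $n+1-r$ choices, so degree $=n+1-r$.) With this corrected count, $0^n$ has degree $1$ and $1^n$ has degree $0$ — still not matching. The resolution is that a ``zero can be inserted after the last zero'' including immediately after it, and the all-zeros case $0^n$ actually can have a zero inserted in any of positions $1,\dots,n+1$? No — only after the \emph{last} zero. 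I would re-examine \cref{fig:table1234}: the node $000$ has degree... from the figure $0001, 0010$? Actually from the drawn edges, $000$ connects to $0001$ and also we need $0010$? The safe approach: the children of $I$ are exactly $I$ with one extra $0$ inserted anywhere in positions $r+1,\dots,n+1$ where $r$ is the last-zero position, giving degree $n+1-r$; then $0^n$ has $r=n$, degree $1$, and among size-$n$ instances the one with $r=1$ (namely $01^{n-1}$) has degree $n$. This matches ``one node has degree $n$.''

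So the structural claim becomes: for each $i\in\{1,\dots,n\}$, the number of size-$n$ instances with last-zero position $r=n+1-i$ is the number of size-$n$ bit strings whose $0$ at position $n+1-i$ is the last $0$, i.e.\ positions $n+2-i,\dots,n$ are all $1$, position $n+1-i$ is $0$, and positions $1,\dots,n-i$ are arbitrary: that is $2^{n-i}$ strings. This gives the ``$2^{n-i}$ nodes of degree $i$'' count; summing $\sum_{i=1}^{n}2^{n-i}=2^n-1$ accounts for all nodes except $1^n$ (degree $0$), consistent with there being $2^n$ nodes total.

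For the last sentence — that a node of degree $k$ has exactly one child of each degree $i\in\{1,\dots,k\}$ — I would argue as follows. Let $I$ have degree $k$, so it has $k-1$ trailing ones and last-zero position $r=n+1-k$. Its children are $I$ with a $0$ inserted at position $p\in\{r+1,\dots,n+1\}$; such a child $I_p$ has size $n+1$, its last $0$ is the freshly inserted one at position $p$, and it has $n+1-p$ trailing ones, hence degree $(n+1+1)-p = n+2-p$. As $p$ ranges over $\{r+1,\dots,n+1\} = \{n+2-k,\dots,n+1\}$, the degree $n+2-p$ ranges bijectively over $\{1,\dots,k\}$. This proves each degree in $\{1,\dots,k\}$ is realized by exactly one child.

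The main obstacle is purely bookkeeping: pinning down the exact convention for ``inserting a zero after the last zero'' (inclusive vs.\ exclusive of the position just after it) so that the degree formula is $n+1-r$ and the boundary cases $0^n$ (degree $1$) and $1^n$ (degree $0$) come out right and consistent with \cref{fig:table1234}. Once that convention is fixed by re-reading \cref{lem:conflict-relation} carefully, every claim in \cref{lem:degrees} reduces to the one-line counting arguments above. I would therefore open the proof by restating precisely the child-generation rule from \cref{lem:conflict-relation}, then handle the degree-of-a-node computation, then the $2^{n-i}$ count, then the children-degrees statement, in that order.

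\begin{proof}
By \cref{lem:conflict-relation}, the children of an instance $I$ of size $n$ are exactly the instances of size $n+1$ obtained by inserting a single new $0$ into $I$ at some position strictly after the last $0$ of $I$. Write $r$ for the position of the last $0$ of $I$ (assuming $I$ contains a $0$); then positions $r+1,\dots,n$ of $I$ are all $1$s, and the admissible insertion positions for the new $0$ are $r+1,r+2,\dots,n+1$, which are $n+1-r$ in number. Different insertion positions yield different strings, because the number of $1$s following the inserted $0$ differs. Hence
\[
\deg(I)=n+1-r,
\]
where $r$ is the position of the last $0$ of $I$; in particular $\deg(0^n)=1$ and $\deg(1^n)=0$ (the all-ones string has no $0$ and thus no child).

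Now fix $i\in\{1,\dots,n-1\}$ and count the size-$n$ instances of degree $i$. By the above, these are exactly the strings whose last $0$ is at position $r=n+1-i$: positions $n+2-i,\dots,n$ must all be $1$, position $n+1-i$ must be $0$, and positions $1,\dots,n-i$ are arbitrary. There are $2^{n-i}$ such strings. (For $i=n$ the same reasoning gives $r=1$, so the unique string of degree $n$ is $01^{n-1}$, and indeed $2^{n-n}=1$.) Summing, $\sum_{i=1}^{n}2^{n-i}=2^n-1$, which together with the single degree-$0$ node $1^n$ accounts for all $2^n$ nodes of size $n$.

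Finally, let $I$ have degree $k$, so its last $0$ is at position $r=n+1-k$. Its children are the strings $I_p$ obtained by inserting a $0$ at position $p\in\{r+1,\dots,n+1\}=\{n+2-k,\dots,n+1\}$. For such a child, the inserted $0$ is its last $0$, sitting at position $p$ of a string of length $n+1$, so by the degree formula applied at size $n+1$,
\[
\deg(I_p)=(n+1)+1-p=n+2-p.
\]
As $p$ runs over $\{n+2-k,\dots,n+1\}$, the value $n+2-p$ runs over $\{1,\dots,k\}$, each exactly once. Therefore $I$ has exactly one child of degree $i$ for every $i\in\{1,\dots,k\}$.
\end{proof}
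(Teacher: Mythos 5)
Your proof is correct and follows essentially the same route as the paper's: identifying the degree of a node with the number of insertion positions after its last zero (equivalently, one plus the number of trailing ones, i.e.\ the suffix $01^{i-1}$), counting the $2^{n-i}$ strings with each such suffix, and tracking where the inserted zero lands to see that the children realize each degree in $\{1,\dots,k\}$ exactly once. The convention you settled on (insertion positions $r+1,\dots,n+1$, so $\deg(I)=n+1-r$) is the correct reading of \cref{lem:conflict-relation} and matches \cref{fig:table1234}.
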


\begin{proof}
By \cref{lem:conflict-relation}, an instance of size $n+1$ is in conflict with an instance of size $n$ if we add a 0 anywhere after the last 0 of the instance of size $n$. Therefore, every instance that ends in a 0 has degree one, since the new 0 can only be inserted in one place. This is true for half of the nodes. Similarly, we see that every instance that ends in 01 has degree two and this is a quarter of the nodes. In general, every instance that has suffix $01^i$ has degree $i+1$.

A node with degree $k$ has a suffix $01^{k-1}$. To create a child, we need to add a zero anywhere after the last 0. If we insert this 0 at the very end, we create a node of degree one. If we insert this 0 before the last 1, we create a node of degree two. In general, if we insert the 0 after the $i$-th 1 from the end, we create a node of degree $i$.
\end{proof}

\subsubsection{Algorithms and weights in the conflict graph}\label{subsubsec:conflict-graph-weights}

We now turn to the connection between algorithms and the conflict graph. We start by linking the structure of the conflict graph to deterministic algorithms.

\begin{lemma}\label{lem:algorithm-path}
	A deterministic algorithm can win in at most one of the instances of any monotone path in the conflict graph.
\end{lemma}

\begin{proof}
	By \cref{lem:monotone-path-conflict}, any two instances that are in a monotone path are in conflict, and by \cref{lem:conflict-success} an algorithm can win in at most one instance of a pair of conflicting instances.
\end{proof}

One can think of an algorithm for the problem as a partition of the nodes of the conflict graph into the nodes for which it wins and the nodes for which it loses.
\cref{lem:algorithm-path} gives a constraint on the structure of such a partition. 
Note that not all partitions correspond to a finite algorithm, but this is not an issue as we look for impossibility results (we will abuse terminology and use the word ``algorithm'' nevertheless).


More precisely, we will consider such a partition in the following structured way. We start from some size $N_0$, and ask the algorithm which nodes of this size it \emph{selects}, that is, in which instances it wins. This implies that the algorithm will not be able to select some instances in the future, namely the instances in conflict with any node of this selection. We say that these nodes that cannot be selected later are \emph{removed}. Then we will move on to the next size, and ask the algorithm to select instances among those that have not been removed yet. We continue this in an iterative fashion. 

We now continue by adapting the conflict graph to reflect the quality of an algorithm. For this, we extend the conflict graph to weighted nodes. We first define this properly and show how to measure the quality of an algorithm in the conflict graph. For now, we restrict ourselves to deterministic algorithms, which select a specific node either always or never. Afterwards we show that the arguments extend to randomized algorithms as well, that are allowed to select nodes with some probability.  

We define the \emph{weight} of a node as the probability that the corresponding instance results from the sampling process where we sample each of the $n$ elements independently with probability $p$. In particular, if an instance has size $n$ and norm $m$, then the weight of the corresponding node in the conflict graph is $p^{m} (1-p)^{n-m}$. Note that for a fixed size $n$, the weights of the instances of size $n$ sum to 1.

With this definition of the weights, the \emph{performance} of a deterministic algorithm for a fixed size $n$ in terms of the weighted conflict graph is the sum of the weights of the instances in which it wins. 
Then, the \emph{worst case performance} of an algorithm is the infimum of the performance of the algorithm over all sizes $n$. Note that the worst case performance of an algorithm for the last zero problem implies a bound on the success guarantee of any algorithm for AOS$p$, which is exactly the negative result we aim to prove in this section.

Fix a size $n$ and let $V_{n,i}$ be the nodes of size $n$ with degree $i$. Define $w_i$ as the total weight of the nodes in $V_{n,i}$. Moreover, define $w_{ij}$ as the sum of the weights of the nodes of size $n+1$ and degree $j$ that are in conflict with any node in $V_{n,i}$. Note that $w_{ij}$ is only positive for $j \leq i$ because of \cref{lem:degrees}. The following lemma can be seen as the weighted version of this lemma. 

\begin{lemma}\label{lem:structure-large-p}
For any size $n$, $w_i = p^{i-1} (1-p)$ and $w_{ij}=(1-p)w_i=p^{i-1} (1-p)^2$ for all $1 \leq j \leq i \leq n$. 
\end{lemma}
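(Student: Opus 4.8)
The plan is to compute both quantities directly by classifying the instances of size $n$ according to their degree, using the degree characterization already established in \cref{lem:degrees}. Recall from the proof of \cref{lem:degrees} that a node of size $n$ has degree $i$ (for $1 \le i \le n-1$) exactly when its binary string ends in the suffix $01^{i-1}$, and the single node of degree $n$ is $1^n$. So first I would handle the first claim, $w_i = p^{i-1}(1-p)$. For $1 \le i \le n-1$, a string of size $n$ with suffix $01^{i-1}$ is determined by its first $n-i$ bits, which are free; hence the set $V_{n,i}$ consists of all strings of the form $(\text{arbitrary }n-i\text{ bits})\,0\,1^{i-1}$. The weight of such a string is $p^{m}(1-p)^{n-m}$ where $m$ is its norm; summing over all $2^{n-i}$ choices of the free prefix, the prefix contributes a factor $\sum (\text{over prefixes}) p^{(\#\text{ones in prefix})}(1-p)^{(\#\text{zeros in prefix})} = (p + (1-p))^{n-i} = 1$, while the fixed suffix $01^{i-1}$ contributes $p^{i-1}(1-p)^1$. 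Therefore $w_i = p^{i-1}(1-p)$, as claimed. (One should note this formula also holds for $i=n$: the suffix is then the whole string $1^{n}$... but wait, degree-$n$ node is $1^n$ whose weight is $p^n$, not $p^{n-1}(1-p)$; so the clean statement $w_i = p^{i-1}(1-p)$ is really for $1\le i\le n-1$, and the lemma's indexing should be read in that range — I would state it carefully or simply restrict to $i \le n-1$, which is all that is used later since $w_{ij}$ is about children of size $n+1$.)

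Next I would compute $w_{ij}$ for $j \le i$. By \cref{lem:conflict-relation}, the size-$(n+1)$ children of a node $I \in V_{n,i}$ are obtained by inserting a new $0$ somewhere after the last $0$ of $I$; since $I$ has suffix $01^{i-1}$, inserting the new $0$ after the $k$-th one counting from the end (for $0 \le k \le i-1$) produces a child with suffix $01^{k}$, i.e. of degree $k+1$. Hence, for each fixed $I \in V_{n,i}$ and each $j \in \{1,\dots,i\}$, there is exactly one child of degree $j$, and it is the string obtained from $I$ by replacing its suffix $01^{i-1}$ with $1^{i-j}01^{j-1}$ (keeping the same free prefix). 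This operation inserts one extra $1$... no: it inserts one extra $0$, so the child has size $n+1$, the same number of ones as $I$, hence norm $m$ if $I$ has norm $m$, and weight $p^{m}(1-p)^{n+1-m} = (1-p)\cdot p^{m}(1-p)^{n-m}$. Summing this bijective correspondence over all $I \in V_{n,i}$: the degree-$j$ children in conflict with $V_{n,i}$ are in weight-preserving-up-to-a-factor-$(1-p)$ bijection with $V_{n,i}$ itself, so $w_{ij} = (1-p) w_i = p^{i-1}(1-p)^2$. This is valid for all $1 \le j \le i \le n$ (when $i=n$ the node $1^n$ similarly has children $1^{n-j}01^{j-1}$, one of each degree $1,\dots,n$, each of weight $(1-p)p^n$, so the formula extends as stated once one reads $w_n$ appropriately; I would simply remark that the computation of $w_{ij}$ only uses the suffix structure and the norm-preservation, both of which hold uniformly).

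I do not expect a genuine obstacle here: the only care needed is bookkeeping about which suffix produces which degree (already done inside \cref{lem:degrees}) and the observation that inserting a $0$ preserves the norm and multiplies the weight by exactly $(1-p)$. The one subtlety worth flagging explicitly in the write-up is the edge case $i = n$ versus $i \le n-1$ in the formula $w_i = p^{i-1}(1-p)$, so that the later use of the lemma (which concerns the transition from size $n$ to size $n+1$, and only the relative weights $w_{ij}/w_i$) is unambiguous.
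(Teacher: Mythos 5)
Your computation is correct and follows the same route as the paper: classify nodes of degree $i$ by the suffix $01^{i-1}$, note the free prefix sums to weight $1$, and observe that the degree-$j$ child replaces the suffix by $01^{i-j}01^{j-1}$, preserving the norm and multiplying the weight by exactly $1-p$.

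However, your flagged ``edge case'' at $i=n$ rests on a misidentification. The unique node of degree $n$ is not $1^n$ but $01^{n-1}$: the string $1^n$ contains no zero, so by the definition of conflict (which requires both instances to contain at least one $0$) it has no children at all, and it is simply not counted in any $V_{n,i}$ with $i\ge 1$. The node $01^{n-1}$ has weight $(1-p)p^{n-1}$, so the formula $w_i=p^{i-1}(1-p)$ holds uniformly for all $1\le i\le n$, exactly as the lemma states; the missing mass $p^n$ of $1^n$ is consistent with $\sum_{i=1}^n w_i = 1-p^n$. So no restriction to $i\le n-1$ is needed, and the caveat you propose to add would itself be incorrect.
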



\begin{proof}
From the proof of \cref{lem:degrees}, we see that the instances of degree one are exactly these which have a 0 in the end. Summing over their individual weights will give us $w_1 = 1-p$, which is the probability of having a 0 as the last bit of an instance. In general, a node of degree $i$ ends in a 0 followed by $i-1$ ones. Accordingly, 
the probability of having an instance that ends with this suffix is $w_i = p^{i-1} (1-p)$.

Now consider an instance $I_1$ of size $n$ with degree $i$. It starts with $n-i$ unrestricted bits and its suffix is $01^{i-1}$. Now consider an instance $I_2$ of size $n+1$ with degree $j$ that is in conflict with $I_1$. Because it is in conflict with $I_1$, the first $n-i$ unrestricted bits are the same as $I_1$, as well as the 0 in entry $n-i+1$. Then, we interrupt the suffix of $1^{i-1}$ with an additional 0 such that $I_2$ has a suffix of $1^{j-1}$ in order to have degree $j$. Therefore, $I_2$ has the following structure. It starts with $n-i$ unrestricted bits, followed by $01^{i-j}01^{j-1}$.

Now consider the set of all instances that have the form of instance $I_1$, with certain bits in its unrestricted prefix of length $n-i$. Because of the suffix $01^{i-1}$, the weight of these instances can be computed as $p^{i-1}(1-p)$. On the other hand, the weight of all instances that have the form of the instance $I_2$ can be computed as $(1-p)p^{i-j}(1-p)p^{j-1} = p^{i-1}(1-p)^2$.
%
%
\end{proof}

Finally, we describe how the conflict graph can reflect a randomized algorithm.  
The difference is that it labels each instance with a \emph{selection probability} $q$, while a deterministic algorithm labels each instance either with a one or a zero (we either always select it, or we never do). Concretely, this means the following. Suppose the algorithm is faced with the last 0 in this instance, but it is not aware of this of course. Then the algorithm stops with probability $q$ (and wins in this instance). It does not stop with probability $1-q$, meaning it loses in this instance (but might win in instances of larger size that are in conflict with this instance).

The following lemma is the non-binary version of \cref{lem:conflict-success}. For its statement, we define the \emph{descendants} of an instance $I$ as one would expect: The set of nodes in the conflict graph that are connected to $I$ through a monotone path and that have a larger size than $I$.

\begin{lemma}\label{lem:removed-fraction}
If for some instance $I$ there is a selection probability $q$, then the probability of winning at any descendant is at most $1-q$.
\end{lemma}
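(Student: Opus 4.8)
The plan is to turn the indistinguishability argument of \cref{lem:conflict-success} into a probabilistic statement. Fix the instance $I$, write $m$ for its norm and $r$ for the position of its last $0$. By \cref{lem:monotone-path-conflict} every descendant $I'$ of $I$ is an instance in conflict with $I$ of strictly larger size, so $I'$ has norm $m$ and $I'[1,r]=I[1,r]$. Since $I[1,r]$ already contains all $m' := |I|-m$ zeros of $I$ while $I'$ has $|I'|-m>m'$ zeros in total, the extra zeros of $I'$ all sit at positions $>r$; in particular the last $0$ of $I'$ is at some position $r'>r$.

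First I would record the coupling: up to and including step $r$, the algorithm cannot distinguish whether it is running on $I$ or on $I'$, because at every position $j\le r$ its information is just the norm $m$ and the prefix $I[1,j]=I'[1,j]$. Formally, run both executions on a common probability space carrying the algorithm's external randomness, so that the event ``the algorithm has not stopped before step $r$'' and, conditioned on it, the event ``the algorithm stops at step $r$'' have the same probabilities in the two executions. By definition this conditional stopping probability is exactly the selection probability $q$ of $I$: faced with the prefix $I[1,r]$ and the norm $m$ (i.e., faced with what is the last $0$ of $I$), the randomized algorithm stops with probability $q$.

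Then I would observe that winning in $I'$ forces the algorithm to stop at position $r'>r$, and in particular not to stop at any position $\le r$. Conditioned on reaching step $r$ without having stopped, it fails to stop at step $r$ with probability $1-q$ (the same decision as on $I$), so
\[
\Pr[\text{win in }I']\le \Pr[\text{reach step }r\text{ without stopping}]\cdot(1-q)\le 1-q,
\]
which is the claim since $I'$ was an arbitrary descendant. The statement is the non-binary analogue of \cref{lem:conflict-success}, recovered by taking $q=1$. The only delicate point is the coupling step: making rigorous that a randomized ordinal algorithm, whose decisions depend only on the observed prefix, the norm, and external randomness, genuinely behaves identically on $I$ and on $I'$ through step $r$, and keeping the conditioning on ``not having stopped yet'' consistent across the two executions. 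Once that is in place, the rest is immediate bookkeeping.
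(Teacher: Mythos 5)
Your proposal is correct and is essentially the paper's argument: the paper proves this lemma by a one-sentence reference to the deterministic case (\cref{lem:conflict-success}), and your write-up is precisely the probabilistic fleshing-out of that indistinguishability argument, coupling the two executions through the common prefix up to the last zero of $I$ and observing that winning on a descendant requires not stopping there.
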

\begin{proof}
	This follows from similar arguments as \cref{lem:conflict-success}. 
\end{proof}
%
It is important to note that this \emph{removed fraction} adds up: if an instance has selection probability $q$, and one of its descendants has selection probability $r$, then for any descendant of the second instance its removed fraction is $q+r$ and its selection probability is at most $1-q-r$. In other words, when a randomized algorithm reaches this particular descendant, it can assign at most a selection probability of $1-q-r$ to it.

Similarly, we define the performance of a randomized algorithm as its quality for a given size $n$, i.e., the product of the weight of a node multiplied by its selection probability, summed over all  instances of size $n$. The worst case performance is then the infimum over $n$ of these performances. The worst case performance of an algorithm for the last zero problem provides a bound on the success guarantee of any algorithm for AOS$p$.


\subsection{Warm up: Proof of \cref{prop:p-half}}\label{subsec:p-half}

As a warm up that introduces the main ideas behind the general proof, this section proves the special case of Theorem~\ref{thmAO} for deterministic algorithms for the case where $n$ is unknown but larger than some constant, and $p = 1/2$. Note that for $p=1/2$, all nodes of size $n$ have the same weight, namely $1/2^n$. The total fraction of selected nodes is therefore equal to the total weight of the selected nodes.
\propphalf*

Note that the success guarantee of the $k$-max algorithm, proved in \cref{lem:k-max-guarantee}, can also be proved now using the alternative 
perspective of the conflict graph. The $k$-max algorithm roughly selects low degree nodes in every size $n$ of the conflict graph in order to remove as little weight as possible from instances of larger size. A careful analysis indeed gives the same success guarantee $kp^k(1-p)$.

To prove that the $k$-max algorithm has optimal success guarantee (for this special case), we will bound the worst case performance of any deterministic algorithm by considering a special class of algorithms.

\paragraph{Canonical algorithms.}

More precisely, we consider a deterministic algorithm that starts by selecting some nodes in the conflict graph for a certain size $N_0$. Consequently, all descendants of the selected nodes will be removed. The algorithm will then continue to the nodes of size $N_0 + 1$ and select a subset of the nodes of this size that have not been removed. Then it will continue to the next size and iterate this procedure.
%
We will show that if the algorithm consistently selects at least a $1/4+\varepsilon$ fraction of the nodes for each size, this process cannot run forever, reaching a contradiction.

Before we proceed to the proof, we make a crucial observation. Note that \cref{lem:degrees} implies that two nodes of the same degree have children with the same degree distribution, and the same holds for their further descendants. By construction, it follows that the subtrees to the right of any two nodes of the same degree are \emph{isomorphic}. With this important observation at hand, we can prove that it suffices to restrict our attention to algorithms of a canonical form, in order to reduce the large variety of possible algorithms.
%
%

\begin{lemma}\label{lem:same-degree-swapping}
Consider the last zero problem for $p=1/2$.
Let $I_1$ and $I_2$ be two instances of the same size that have not been removed, and consider an algorithm that selects $I_1$ but does not select $I_2$.
Then there exists another algorithm that selects $I_2$ instead of $I_1$ and achieves the exact same success guarantee. 
\end{lemma}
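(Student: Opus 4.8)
The plan is to exploit the isomorphism observation stated just before the lemma: for $p=1/2$, any two nodes of the same \emph{degree} have isomorphic subtrees hanging to their right, and moreover (by \cref{lem:degrees}) a node of degree $k$ has exactly one child of each degree $1,\dots,k$. So if $I_1$ and $I_2$ have the same size and \emph{the same degree}, swapping which of them we select, and correspondingly swapping the roles of their entire descendant subtrees in the algorithm's selection, produces a new selection that is still conflict-free and wins on exactly the same multiset of weights at every size (since corresponding nodes have equal weight when $p=1/2$), hence the same performance for every $n$ and the same worst-case guarantee. The only remaining point is that $I_1$ and $I_2$ as stated need not have the same degree — but they do have the same size, and since both are not removed, I first argue we may reduce to the equal-degree case.

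First I would set up notation: write $d_1,d_2$ for the degrees of $I_1,I_2$. If $d_1=d_2$, perform the subtree-swap described above and check the two things that need checking — (i) conflict-freeness of the new selection, which follows because conflicts are determined by the monotone-path structure, and the isomorphism of the two subtrees carries a conflict-free set on one subtree to a conflict-free set on the other, with no new conflicts created across the rest of the graph because nothing outside the two subtrees is touched (a node of size $n$ is in conflict only with its parent and its children, all of which lie within the relevant subtree after the first level); and (ii) equality of performance at every size, which is immediate since for $p=1/2$ all nodes of a given size have weight $2^{-n}$, and the swap is a bijection between the winning sets of the two algorithms that preserves size. Taking the infimum over $n$ gives the same worst-case performance, hence the same success guarantee.

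Next I would handle $d_1\neq d_2$. Say $d_1 > d_2$ (the other case is symmetric). By \cref{lem:degrees}, $I_1$ has a child $I_1'$ of degree $d_2$, and $I_1'$ has not been removed (its parent $I_1$ is selected, not removed, and by \cref{lem:conflict-relation} $I_1$ is the unique parent of $I_1'$, so no earlier selection could have removed $I_1'$ — wait, actually $I_1'$ could be removed only through a conflict with a smaller node, i.e. its parent $I_1$, which is selected hence not previously selected-against). Here I should be a little careful: the cleanest route is to first deselect $I_1$ (which only \emph{un}-removes its descendants, so cannot hurt), then we are in a situation where both $I_1$ and $I_2$ are unselected, and I can select whichever of them I want; selecting $I_2$ and leaving $I_1$ unselected is legal since $I_2$ was not removed, and deselecting $I_1$ cannot have created any conflict. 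This already gives an algorithm selecting $I_2$ and not $I_1$; the subtlety is that deselecting $I_1$ \emph{lowers} the performance at the size of $I_1$ by $2^{-n}$ while selecting $I_2$ raises it by $2^{-n}$, so the net performance at that size is unchanged, and at all other sizes the winning set only grows or stays the same (more descendants become available), so no performance decreases. Hence the worst-case guarantee does not decrease; combined with the fact that we only need "at least the same guarantee" for the intended application (bounding the optimum from above via canonical algorithms), this suffices. If one insists on \emph{exactly} the same guarantee, one then additionally prunes the newly-available descendants back down using the equal-degree swap argument to restore a canonical form; I would mention this but not belabor it.

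The main obstacle is bookkeeping around conflict-freeness when the swap or deselection touches descendants at many sizes at once: one must be sure no conflict is created between a touched node and an untouched one. This is handled by the observation (from \cref{lem:conflict-relation}) that a node's only conflicts are with its unique parent and with its children, so a conflict straddling the boundary of a swapped subtree could only occur at the root level, where the parent relationship is exactly what the swap respects — deselecting a node never creates a conflict, and selecting $I_2$ in place of $I_1$ is fine because $I_2$ was given to be not removed. So the argument localizes cleanly, and the rest is the routine weight-equality check special to $p=1/2$.
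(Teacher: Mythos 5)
Your main argument --- the subtree swap between $I_1$ and $I_2$ via the isomorphism of their right-hanging subtrees, together with the observation that for $p=1/2$ the swap is a size-preserving bijection between winning sets and hence preserves the performance at every $n$ --- is exactly the paper's proof. Note that the lemma, despite its wording, is really about two instances of the same \emph{degree} (as its use and the preceding isomorphism observation indicate): by \cref{lem:degrees} a node of degree $k$ has exactly $k$ children, so the subtrees of two same-size nodes of different degrees are \emph{not} isomorphic, and the paper's proof (like your first case) silently relies on equal degrees. So your equal-degree case is the whole content of this lemma, and it is correct; your localization of conflict-freeness is right for the right reason (a monotone path entering a subtree must pass through its root, and neither root has a selected ancestor since both are unremoved).

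Your attempted extension to $d_1\neq d_2$, however, contains a genuine error. After deselecting $I_1$ and selecting $I_2$, you claim that ``at all other sizes the winning set only grows or stays the same.'' This is false: selecting $I_2$ puts $I_2$ in conflict with all of its descendants, and since $I_2$ was unselected and unremoved under $A$, the algorithm $A$ may well have selected nodes in $I_2$'s subtree; those must now be deselected, so the performance at those larger sizes can strictly drop, and the worst-case guarantee can decrease. The paper avoids this by never invoking the present lemma across degrees; the degree-mismatch case is handled separately in \cref{lem:small-degree-first}, where (for $k_2<k_1$) the selections inside $I_2$'s subtree are relocated into the isomorphic sub-forest rooted at the $k_2$ smallest-degree children of $I_1$, which restores the performance exactly. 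If you want to keep your unequal-degree discussion, you should replace the ``deselect and hope'' step with that partial-subtree isomorphism.
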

\begin{proof}
Consider the instances $I_1$ and $I_2$ and an algorithm $A$ that selects $I_1$ but not $I_2$. Since $A$ selects $I_1$, the nodes to its right are removed. On the other hand, as $I_2$ is not selected and is alive, it can be that $A$ selects some node in its subtree.
As observed above, the subtrees rooted at $I_1$ and $I_2$ are isomorphic. Since any node has at most one edge to the left, these trees are also disjoint. 

Now consider the algorithm $B$ that selects the same nodes as $A$ except for the following. It selects $I_2$ instead of $I_1$, deselects every node that $A$ selected in the subtree of $I_2$ and instead selects the corresponding (according to the isomorphism) nodes in the subtree of $I_1$.
By construction, for every given size $n$, 
the nodes that both algorithms select carry the same weight, so the success guarantees are equal. 
\end{proof}

We can reduce the algorithms of interest even further by introducing the following important notion.
We say that
an algorithm follows a \emph{small degrees first strategy} if for any size considered, among the nodes that are not yet removed, it selects the nodes with the smallest degrees.

Note that this strategy does not define a single algorithm: many nodes have the same degree. Indeed, the $k$-max algorithm is closely related to these small degrees first strategies -- we will elaborate on this in the paragraph of \cref{lem:kmax-fills}.

\begin{lemma}[Small degrees first strategy]\label{lem:small-degree-first}
Consider the last zero problem for $p=1/2$.
For every algorithm, there exists an algorithm using the small degrees first strategy that achieves the same performance for every $n$.
\end{lemma}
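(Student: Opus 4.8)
The plan is to prove this by repeatedly applying the same-degree swapping of \cref{lem:same-degree-swapping} together with a new swap that replaces a selected high-degree node by an available lower-degree node, arguing that this second swap never decreases the performance at any size. Fix an arbitrary algorithm $A$, processed size by size from $N_0$ onward. At each size $n$, among the not-yet-removed nodes, suppose $A$ selects some node $v$ of degree $d$ while leaving unselected some not-yet-removed node $v'$ of degree $d'<d$. By \cref{lem:degrees}, $v$ has exactly one child of each degree $1,\dots,d$, so the subtree hanging to the right of $v$ strictly ``contains'' (in the sense of degree profiles at every later size) the subtree hanging to the right of $v'$. Concretely, there is an injection from the descendants of $v'$ into the descendants of $v$ that preserves size and does not increase degree; this follows by unwinding \cref{lem:conflict-relation,lem:degrees} inductively. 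Hence if we deselect $v$, select $v'$ instead, and move every node $A$ had selected inside the subtree of $v'$ to the corresponding node in the subtree of $v$, the resulting algorithm $B$ is still valid (no two selected nodes are in conflict, because the subtree of $v$ is now ``freed'' except for the relocated selections, and the subtree of $v'$ is now blocked by $v'$ being selected), and at every size $B$ selects a set of nodes whose total weight is at least that of $A$ — since for $p=1/2$ all nodes of a given size have equal weight $2^{-n}$, and the relocation map is size-preserving and injective. Therefore $B$ performs at least as well as $A$ for every $n$.

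Next I would iterate. After finitely many such swaps at size $n$ (each one strictly decreases, say, the sum of degrees of the selected nodes of size $n$, a nonnegative integer), the selection at size $n$ consists of the available nodes of smallest degree — i.e. it is small-degrees-first at level $n$. Crucially, fixing the selection at size $n$ to be small-degrees-first only removes the fewest possible descendants, so it weakly enlarges the pool of available nodes at every later size; this is exactly the monotonicity one needs so that processing sizes in increasing order does not undo earlier improvements. Proceeding inductively over $n=N_0, N_0+1, \dots$, and using \cref{lem:same-degree-swapping} to handle ties (so that among nodes of the minimal available degree it does not matter which ones are picked), we obtain in the limit an algorithm that is small-degrees-first at every size and whose performance at every $n$ is at least that of $A$. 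Since the performance measure is the infimum over $n$, the worst-case performance of this small-degrees-first algorithm is at least that of $A$, which is the claim. (We do not need equality for every $n$, only $\ge$; but in fact the construction gives exactly the same weight selected at each size, hence equality, matching the statement.)

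The step I expect to be the main obstacle is making precise and correct the ``degree-respecting injection of descendants'' and verifying that relocating selections along it keeps the algorithm valid — in particular that no conflict is created between a relocated node in the subtree of $v$ and some other node $A$ had already selected elsewhere. The point is that distinct subtrees rooted at nodes of a common size are vertex-disjoint (each node has a unique parent, \cref{lem:conflict-relation}), so the relocation stays inside the subtree of $v$ and cannot collide with selections outside it; and within the subtree of $v$ we have freed everything by deselecting $v$, so the only selected nodes there are the relocated ones, which inherit the conflict-freeness of the originals via the isomorphism-type embedding. One also has to be a little careful that a node $v'$ of small degree that we wish to select is genuinely still available (not already removed by an earlier-size selection); but this is automatic in the iterative processing, since we only ever select among not-yet-removed nodes. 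Once this bookkeeping is set up, the rest is the integer-valued potential argument plus the monotonicity-of-available-pool observation, both of which are routine.
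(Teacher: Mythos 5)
Your proposal is correct and follows essentially the same route as the paper: it identifies a violation of the small-degrees-first rule, uses the degree structure of \cref{lem:degrees} to embed the subtree of the lower-degree node isomorphically into (part of) the subtree of the selected higher-degree node, and applies the swapping argument of \cref{lem:same-degree-swapping} to relocate selections without changing the weight selected at any size. Your added bookkeeping (the integer potential for termination at each size and the well-definedness of the limit over sizes) only makes explicit what the paper leaves implicit.
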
  

\begin{proof}
Consider an algorithm that does not follow the small degrees first strategy. Then there exists a size $n$ where it selects an instance $I_1$ of degree $k_1$ and does not select an instance $I_2$ of degree $k_2<k_1$.

Consider the part of the subtree rooted at $I_1$ that consists of its $k_2$ children of smallest degree and their subtrees. By the structure given by \cref{lem:degrees}, this subtree is isomorphic to the subtree of $I_2$. Then the same swapping argument as in the proof of \cref{lem:same-degree-swapping} between the subtree of $I_1$ and the tree of $I_2$ exhibits another algorithm with the same success guarantee that does follow the small degrees first strategy. 
\end{proof}

From now on, we restrict ourselves to considering algorithms that follow the small degrees first strategy.

\paragraph{The cover ratio.}

In order to reach a contradiction and prove \cref{prop:p-half}, we define the \emph{cover ratio} $\rho$ for an algorithm and a certain size $n$. It is defined as the sum of the weights of the instances of size $n$ that the algorithm either selects or removes. The removal of an instance is due to selecting an instance of smaller size that is connected by a monotone path to this instance. Denoting the set of selected and removed instances of size $n$ by $S$ and $R$ respectively, and the weight of an instance $I$ by $w(I)$, we can write $\rho = \sum_{I: |I|=n, I \in S \cup R}w(I)$. Note that this sum over only $S$ instead of $S \cup R$ is the performance of the algorithm for size $n$.

Note that in the special case that $p=1/2$, all instances have equal weight and therefore $\rho = (r+s)/2^n$ is just the fraction of the total number of instances of size $n$ that are either selected or removed.


The proof sketch of \cref{prop:warm-up} in \cref{sec:OverviewAOS} showed the intuition behind the proof. Here we state the formal arguments. The idea behind the proof is to show that
selecting strictly more than $1/4$ of the instances for many successive sizes implies that the cover ratio $\rho$ increases in such a way that at some point it is impossible to select that many instances. This shows by contradiction that there is no deterministic algorithm that has a success guarantee of $1/4 + \varepsilon$.

\cref{lem:small-degree-first} implies that we can restrict ourselves to a unique strategy for the algorithm as follows. For a size $n$, select a $1/4+\varepsilon$ fraction of the non-removed instances in increasing order of degrees (with an arbitrary order for the instances of same degree). Then the algorithm repeats this for the non-removed instances in the next size $n+1$, which we refer to as the next \emph{step}. Without loss of generality, we can assume that we start at size $N_0$ with no removed nodes.

We now analyze the dynamics of the process, and in particular the dynamics of the cover ratio $\rho$. First, observe that at size $N_0$, no nodes have been removed so far. As the algorithm selects a $1/4+\varepsilon$ fraction of the nodes and half of all these nodes have a degree of 1, the algorithm  selects only nodes of degree 1. For a certain number of sizes, starting from $N_0$, the algorithm can select only degree 1 nodes. We call this the \emph{first phase} of the algorithm. 

\begin{claim}
	Consider the last zero problem for $p=1/2$ and an algorithm as described above.
	After $t$ steps in the first phase of the algorithm, the cover ratio $\rho$ is $(\frac{1}{4}+\varepsilon)\cdot\sum_{i=1}^{t}\frac{1}{2^{i-1}}$.
\end{claim}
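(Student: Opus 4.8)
The plan is to prove the claim by induction on the number of steps $t$ in the first phase, tracking two quantities at each size: the cover ratio $\rho$ (weight of selected-or-removed nodes) and, more usefully, the weight of nodes that have been \emph{removed} at the start of a step. Since $p=1/2$ all nodes of a given size have equal weight, so I will freely pass between "weight" and "fraction of the $2^n$ nodes." The key structural input is \cref{lem:degrees}: half of all nodes of any size have degree $1$, a quarter degree $2$, etc.; and a node of degree $k$ has exactly one child of each degree $1,\dots,k$. In the first phase the algorithm only ever selects degree-$1$ nodes, and a degree-$1$ node has exactly one child (which has degree $1$), so the descendant subtree of a selected degree-$1$ node is a single monotone path. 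This is what keeps the bookkeeping simple.

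First I would set up the base case $t=1$: at size $N_0$ nothing is removed, the algorithm selects a $\tfrac14+\varepsilon$ fraction of the nodes, nothing else is removed at that size (removals only affect larger sizes), so $\rho = \tfrac14+\varepsilon = (\tfrac14+\varepsilon)\cdot\sum_{i=1}^1 \tfrac1{2^{i-1}}$, as claimed. Then I would carry out the inductive step. Suppose after step $t$ (i.e., having just finished size $N_0+t-1$) the cover ratio is $(\tfrac14+\varepsilon)\sum_{i=1}^t \tfrac1{2^{i-1}}$; equivalently the removed fraction entering size $N_0+t$ is that quantity minus the $\tfrac14+\varepsilon$ selected at the previous size — but it is cleaner to argue directly about what passes to the next size. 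Each node selected at size $N_0+t-1$ has degree $1$, hence contributes exactly one removed node at size $N_0+t$; each node already removed at size $N_0+t-1$ likewise propagates to exactly one removed node at the next size (by \cref{lem:conflict-relation}, a removed node of degree $1$ — and in the first phase every node reached on these monotone paths stays degree $1$ — has a unique child). So the removed fraction at size $N_0+t$ equals half the cover ratio from the previous step: $\tfrac12(\tfrac14+\varepsilon)\sum_{i=1}^t\tfrac1{2^{i-1}} = (\tfrac14+\varepsilon)\sum_{i=2}^{t+1}\tfrac1{2^{i-1}}$. Adding the freshly selected $\tfrac14+\varepsilon = (\tfrac14+\varepsilon)\cdot 1$ fraction at size $N_0+t$ gives a cover ratio of $(\tfrac14+\varepsilon)\sum_{i=1}^{t+1}\tfrac1{2^{i-1}}$, completing the induction.

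The step I expect to need the most care is justifying that "removed + selected" propagates to "removed" while preserving the degree-$1$ structure, i.e., that throughout the first phase every removed node has degree $1$ and therefore exactly one child. This follows because the first phase is \emph{defined} as the stretch of sizes during which the small-degrees-first strategy can satisfy its $\tfrac14+\varepsilon$ quota using only degree-$1$ nodes (possible precisely while the removed fraction stays below $\tfrac12$, so that at least a $\tfrac14+\varepsilon$ fraction of degree-$1$ nodes remains available); combined with the observation that a monotone path starting at a degree-$1$ node stays within degree-$1$ nodes (inserting a new $0$ strictly after the last $0$ of a string ending in $0$ again produces a string ending in $0$). Once this invariant is in hand, the weight doubling relationship — each live or dead degree-$1$ node of size $n$ gives rise to exactly one node of size $n+1$, so weights halve in fraction terms as sizes grow — is immediate, and the geometric sum $\sum_{i=1}^t 2^{-(i-1)}$ falls out. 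I would then remark that this claim is exactly what is needed to show the process must eventually exit the first phase (since $\rho \to 2(\tfrac14+\varepsilon) = \tfrac12+2\varepsilon$, which is consistent, but the accumulating removals force higher-degree selections thereafter), setting up the remainder of the proof of \cref{prop:p-half}.
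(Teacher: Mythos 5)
Your proposal is correct and follows essentially the same route as the paper: induction on the step, using the fact that in the first phase every selected-or-removed node has degree one, so it propagates to exactly one child, halving the covered fraction at the next size, to which the freshly selected $\tfrac14+\varepsilon$ is added. Your explicit justification of the degree-one invariant along monotone paths is a slightly more careful spelling-out of what the paper's proof uses implicitly, but the argument is the same.
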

\begin{proof}
	We prove the claim by induction. For the base case $n = N_0$ we have $\rho = 1/4 + \varepsilon$, which corresponds to the formula of the claim.
	Now suppose that the lemma holds for some size $n+t-1$, so $\rho=(\frac{1}{4}+\varepsilon)\cdot\sum_{i=1}^{t-1}\frac{1}{2^{i-1}}$. We first determine the fraction of removed nodes in the next size $n+t$. Since each node of degree 1 removes one node of the next size, the number of nodes removed for size $n+t$ is the same. However, as there are twice as many instances in total in size $n+t$, the fraction is half this number, namely $(\frac{1}{4}+\varepsilon)\cdot\sum_{i=1}^{t-1}\frac{1}{2^{i}}$. The fraction of selected nodes is $1/4+\varepsilon$, thus in total the cover ratio becomes
	\[
	\rho=
	\left(\frac{1}{4}+\varepsilon\right)
	\cdot
	\left(\sum_{i=1}^{t-1}\frac{1}{2^{i}}+1\right)
	=
	\left(\frac{1}{4}+\varepsilon\right)
	\cdot
	\left(\sum_{i=1}^{t}\frac{1}{2^{i-1}}\right) \,. \qedhere
	\]
\end{proof}

Note that the term $(\frac{1}{4}+\varepsilon)\cdot\sum_{i=1}^{k}\frac{1}{2^{i-1}}$ goes asymptotically to $\frac{1+\varepsilon}{2}$ as $k$ grows, for some $\varepsilon>0$. In particular, this means that at some point it exceeds the value of $1/2$, which is the total fraction of nodes with degree 1. This implies, in turn, that the algorithm is forced at some point to start selecting degree 2 nodes in addition to degree 1 nodes. This is the start of a \emph{second phase}, where the algorithm needs to select degree 2 nodes, in order to keep selecting a $1/4+\varepsilon$ fraction of the nodes for each size.

\begin{claim}
	Consider the last zero problem for $p=1/2$ and an algorithm as described above.
	In the second phase of the algorithm, the cover ratio $\rho$ grows by at least $\varepsilon$ at each step.
\end{claim}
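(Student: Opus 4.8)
The plan is to follow the dynamics of the cover ratio one degree class at a time. Write $c_n$ for the total weight of the covered (selected-or-removed) nodes of degree $2$ at size $n$. I will show that throughout the second phase every degree-$1$ node is covered and every covered node has degree at most $2$, so that $\rho_n = \tfrac12 + c_n$, and that $c_n = c_{n-1} + \varepsilon$; the claimed bound $\rho_n \ge \rho_{n-1} + \varepsilon$ then follows at once. Two combinatorial facts, both read off from \cref{lem:degrees} and \cref{lem:conflict-relation}, do the work. First, every covered node of size $n-1$ has exactly one child of degree $1$ at size $n$, and conversely each degree-$1$ node of size $n$ is the degree-$1$ child of its unique parent; since for $p=\tfrac12$ every node of size $n$ has weight $2^{-n}$, the removed degree-$1$ nodes of size $n$ therefore have total weight $\rho_{n-1}/2$. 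Second, and similarly, the removed degree-$2$ nodes of size $n$ are exactly the degree-$2$ children of the covered nodes of degree $\ge 2$ at size $n-1$, so their total weight is half the weight of those covered degree-$\ge2$ nodes.

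Next I would run an induction over the sizes of the second phase with the invariant: (a) every degree-$1$ node of size $n$ is covered, whence $\rho_n \ge \tfrac12$; and (b) every covered node of size $n$ has degree at most $2$, whence the covered degree-$\ge2$ weight equals $c_n$ and $\rho_n = \tfrac12 + c_n$. For the base case, let $n^\ast$ be the first size at which a degree-$2$ node is selected. Up to $n^\ast-1$ only degree-$1$ nodes have ever been selected, so every covered node at size $n^\ast-1$ has degree $1$; by the small-degrees-first rule (\cref{lem:small-degree-first}) the algorithm selects every non-removed degree-$1$ node of size $n^\ast$ before touching degree $2$, giving (a), and a count of the available degree-$2$ weight shows no degree $\ge 3$ is forced, giving (b). For the inductive step, assume (a) and (b) at $n-1$. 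By the first fact the removed degree-$1$ weight at $n$ is $\rho_{n-1}/2 = \tfrac14 + c_{n-1}/2$, so the non-removed degree-$1$ weight is $\tfrac14 - c_{n-1}/2 < \tfrac14+\varepsilon$; the algorithm hence selects all of it, and the remaining newly selected weight $(\tfrac14+\varepsilon) - (\tfrac14 - c_{n-1}/2) = \varepsilon + c_{n-1}/2$ is placed on degree-$2$ nodes, which gives (a) at $n$; one checks $\varepsilon + c_{n-1}/2 \le \tfrac14 - c_{n-1}/2$ holds exactly while $\rho_{n-1} \le \tfrac34-\varepsilon$, which is automatic as long as step $n$ is still in the second phase, so no degree $\ge 3$ is forced and (b) holds at $n$. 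By the second fact the removed degree-$2$ weight at $n$ is $c_{n-1}/2$, so $c_n = c_{n-1}/2 + (\varepsilon + c_{n-1}/2) = c_{n-1}+\varepsilon$, and therefore $\rho_n = \tfrac12 + c_n = \rho_{n-1}+\varepsilon$. It remains to handle the very first step of the second phase: since $n^\ast-1$ is still in the first phase, the first-phase recursion $\rho_m = \rho_{m-1}/2 + \tfrac14+\varepsilon$ combined with the fact that degree-$1$ nodes alone sufficed there forces $\rho_{n^\ast-1} \le \tfrac12$, while the base-case analysis gives $\rho_{n^\ast} = \tfrac14 + \rho_{n^\ast-1}/2 + \varepsilon$, so $\rho_{n^\ast} - \rho_{n^\ast-1} = \tfrac14 - \rho_{n^\ast-1}/2 + \varepsilon \ge \varepsilon$.

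I expect the main obstacle to be the bookkeeping that keeps the second phase \emph{pure} — that is, verifying throughout the phase that the degree-$1$ block stays saturated and that no node of degree $\ge3$ is ever forced — since this is exactly what validates the clean recursion $c_n = c_{n-1}+\varepsilon$, and it relies on the precise children-degree structure of \cref{lem:degrees}, not merely on degree counts. A minor point is the word ``at least'' in the statement: if the algorithm selects strictly more than a $\tfrac14+\varepsilon$ fraction while remaining in the second phase it can only add further degree-$2$ nodes, which only enlarges $c_n$ and hence only increases the growth of $\rho_n$.
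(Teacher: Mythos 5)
Your proof is correct and follows essentially the same route as the paper: the paper's proof writes $\rho_{n-1}=\tfrac12+\delta$ (your $\delta=c_{n-1}$), computes that the covered nodes of size $n-1$ remove a $\tfrac14+\delta/2$ fraction of degree-$1$ and a $\delta/2$ fraction of degree-$2$ nodes at size $n$, and concludes $\rho_n=\tfrac12+\delta+\varepsilon$ via the small-degrees-first rule, exactly your recursion $c_n=c_{n-1}+\varepsilon$. Your write-up is somewhat more careful than the paper's (explicit invariants, the base case at $n^\ast$, and the check that degree $\ge 3$ is never forced within the phase), but the underlying argument is the same.
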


\begin{proof}
	Let us consider a size $n$ where $\rho >1/2$, say $\rho = 1/2 + \delta$ for some $\delta>0$. Then for size $n+1$ the situation is the following. First, the $1/2$-fraction of nodes of size $n$ remove $1/4$ of the nodes of size $n+1$ (since all these nodes have degree 1). Then, by ~\cref{lem:degrees}, the $\delta$ fraction of degree 2 nodes remove one instance of degree 1 and one instance of degree 2 in the next size. That is, in size $n+1$, a $(1/4+\delta/2)$-fraction of the degree 1 nodes and a $\delta/2$-fraction of the degree 2 nodes are removed in total.
	
	The algorithm must now select a $(1/4+\varepsilon)$-fraction of the nodes that have not been removed. Following the small degrees first strategy, the algorithm chooses the remaining $1/4-\delta/2$ fraction of degree 1 nodes, and a $\delta/2+\varepsilon$ fraction of the degree 2 nodes. In total, for size $n+1$ we have $\rho = 1/2 + \delta + \varepsilon$, and the claim follows.
\end{proof}

These claims imply \cref{prop:p-half} as follows.

\begin{proof}[Proof of \cref{prop:p-half}]
	In the second phase of the algorithm, $\rho$ increases by $\varepsilon$ in each step. Therefore, at some point the cover ratio becomes too large to select only degree 1 and 2 nodes and the algorithm is forced to start selecting degree 3 nodes. Note that in this third phase $\rho$ also grows by at least $\varepsilon$ at each step, since selecting a node of degree 3 is even worse than selecting a node of degree 2: It removes the same number of degree 1 and 2 nodes, but in addition it removes degree 3 nodes.

	The same holds true for further phases of the algorithm in which it selects nodes of even higher degree. Due to this increase of at least $\varepsilon$ in each step, at some point $\rho$ becomes strictly larger than $3/4-\varepsilon$. Therefore, the algorithm cannot select an $1/4+\varepsilon$ fraction of the nodes any more. Therefore, no algorithm can achieve a success guarantee of $1/4 + \varepsilon$ for any $\varepsilon > 0$. 
\end{proof}
\subsection{Generalization to any value of $p$}\label{subsec:general-p}

In this section, we generalize the previous results beyond the case of deterministic algorithms for $p = 1/2$. Building on the intuition of the previous section, but using quite different techniques, we show what is the best possible success guarantee that \emph{any} algorithm can achieve. We then link our $k$-max algorithm to the conflict graph, such that we finally reach the main takeaway point of the section: The $k$-max algorithm, although very simple, is optimal for all values of $p$. We first focus on the family of deterministic algorithms and prove the optimality of $k$-max there. Then, we show how one can adapt the proof to include also randomized algorithms.

\paragraph{Local operators and average performance.}

The main reason the proof techniques of the previous section need to be adapted is the fact that instances of a given size do not have the same weight anymore, and therefore,
the swapping argument used in \cref{lem:same-degree-swapping} and \cref{lem:small-degree-first} is no longer true. 
Thus, we transform a strategy using moves that
select and deselect nodes from instances of different sizes: \emph{local operators}. 
%
These local operators might decrease the fraction of selected nodes in a specific size while increasing it for another size. To resolve this, we introduce the notion of the \emph{average performance} of an algorithm in the window $[n,n+t]$, which is simply the average of its performance on sizes $s\in[n,n+t]$. We will show that there exists a set of local operators that can be used to improve the average performance.

Informally, the argument is then as follows. The $k$-max algorithm is very consistent in the sense that it selects the same total weight for every size. This means that its average performance is approximately equal to the infimum of its performance for every size (i.e., its success guarantee). 
Therefore, if a strategy would outperform the $k$-max algorithm, it would also exceed the average performance in every window. In this section, we show that the latter is a contradiction.
%


To prove that certain local operators improve the average performance in the next lemma, we say an algorithm is \emph{valid} if it selects at most one node along each monotone path in the conflict graph. 


\begin{lemma}\label{lem:local-moves}
Consider a valid deterministic algorithm with a certain average performance in a window $[n,n+t]$. Applying the following local operators yields a new valid algorithm whose average performance in this window is at least as good as the former algorithm.
\begin{enumerate}
\item \label{item:local-moves-large-degree} If the algorithm selects a node of degree $d > 1/(1-p)$ for some size $s\in [n,n+t-1]$: Deselect it and select all its children.
\item \label{item:local-moves-small-degree} If the algorithm has not selected nor removed a node of degree $d \leq 1/(1-p) $: Select it and remove all its descendants (in particular, deselect its selected descendants).
\end{enumerate}
\end{lemma}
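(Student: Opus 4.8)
The goal is to show that each of the two local operators produces a valid algorithm and does not decrease the average performance over the window $[n, n+t]$. Validity is the easier half: for operator~\ref{item:local-moves-large-degree}, deselecting a node $I$ and selecting all its children cannot create a conflict along a monotone path, because the children of $I$ are pairwise not linked by any monotone path (they all have the same size), and any monotone path through a child of $I$ either passed through $I$ before — in which case $I$ is no longer selected — or is a path strictly below one child, where the algorithm was valid to begin with. For operator~\ref{item:local-moves-small-degree}, selecting a previously-unselected, unremoved node $I$ and deselecting every selected descendant restores validity by fiat: the only monotone paths affected are those through $I$, and on each of them $I$ is now the unique selected node. So the first step I would write is a short paragraph checking both operators preserve validity, invoking \cref{lem:monotone-path-conflict} and \cref{lem:conflict-relation}.

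\textbf{The performance bookkeeping.} The substance is the weight comparison. For operator~\ref{item:local-moves-large-degree}, I would use \cref{lem:degrees} (a degree-$d$ node has exactly one child of each degree $1,\dots,d$) together with \cref{lem:structure-large-p}: a selected node of size $s$ and degree $d$ contributes weight $w$ with the suffix $01^{d-1}$, i.e.\ $w = p^{d-1}(1-p)$ up to the common prefix weight; its $d$ children, of degrees $1,\dots,d$, carry total weight $\sum_{j=1}^{d}$ (child-of-degree-$j$ weight) $= d \cdot (1-p) \cdot p^{d-1}$ up to the same prefix factor — one factor $(1-p)$ for the extra inserted zero, and the same $p^{d-1}$ since the norm is unchanged. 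Wait — more carefully, by \cref{lem:structure-large-p} the child of degree $j$ has weight $(1-p)$ times the parent's weight divided appropriately; the clean statement I want is that the total weight of all children of a node equals $d(1-p)$ times the node's own weight (suitably normalized), so the swap changes the contribution by a factor $d(1-p)$, which is $>1$ exactly when $d > 1/(1-p)$. Since the window is finite and the children all lie in size $s+1 \in [n,n+t]$ (this is why we require $s \le n+t-1$), the average over the window strictly increases, or stays equal in the boundary case $d = 1/(1-p)$ when that is an integer. For operator~\ref{item:local-moves-small-degree}, the symmetric computation: selecting $I$ of degree $d \le 1/(1-p)$ adds its weight $w$ at size $s$, and forces deselection of at most its descendants; but the total selected weight the algorithm could possibly have had among the descendants of $I$ is bounded — using \cref{lem:removed-fraction} and the fact that descendants of $I$ across all sizes $s+1, s+2,\dots$ have geometrically shrinking weight — by something strictly less than $w$ once $d$ is small enough, again by the $d(1-p) \le 1$ condition applied iteratively down the subtree. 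I'd formalize this by summing a geometric series of ratios $d(1-p) \le 1$ (or really $\le$ the relevant per-level expansion factor, which is $\le 1$), so losing all descendants costs less than gaining $I$.

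\textbf{The main obstacle.} The delicate point is operator~\ref{item:local-moves-small-degree}: unlike the child swap, it can touch infinitely many sizes below $s$, so I must argue the net effect on the average over the finite window $[n,n+t]$ is still nonnegative. The right way is to observe that descendants of $I$ of size $> n+t$ are irrelevant to the window average, so deselecting them is free; and for descendants inside the window, their total weight is at most $(\text{expansion factor})^{\,k} \cdot w$ at depth $k$ with expansion factor $\le 1$ when $d \le 1/(1-p)$, hence the whole descendant weight inside the window is at most $w \cdot \sum_{k\ge 1}(d(1-p))^{k}$ — but this is only $< w$ if $d(1-p) < 1/2$, which is too strong. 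The fix is to recall that the algorithm was \emph{valid}, so at most one descendant is selected along each monotone path; the selected descendants inside the window therefore form an antichain, and by \cref{lem:structure-large-p} their total weight is at most the weight of the single "layer" of descendants at any one size, which is $\le d(1-p) \cdot w \le w$. That restores the inequality. So the key step I expect to fight with is this: using validity to bound the deselected descendant weight by a single-layer bound rather than a sum over all layers. Once that is in place, both operators are shown to be average-performance-monotone and valid, which is exactly the statement of the lemma.
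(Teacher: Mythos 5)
Your proposal is correct, and for operator~\ref{item:local-moves-large-degree} it coincides with the paper's proof: both reduce to the identity that the children of a degree-$d$ node of weight $w$ carry total weight $dw(1-p)$ (\cref{lem:structure-large-p}), which exceeds $w$ exactly when $d>1/(1-p)$, and both note the children live at size $s+1\le n+t$ so the gain stays inside the window. For operator~\ref{item:local-moves-small-degree} your route differs in packaging from the paper's. The paper constructs an explicit reversed sequence of valid algorithms from the modified algorithm back to the original, at each step deselecting the minimum-size offending node and, if necessary, selecting its children; each step multiplies the affected weight by $d'(1-p)\le 1$, so the average performance is monotone along the sequence. You instead bound, in one shot, the total weight of the selected descendants of $I$ inside the window: by validity they form an antichain, and you claim an antichain of strict descendants weighs at most $d(1-p)w\le w$, so the net change is at least $w-w=0$. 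This is right, but the antichain bound is not an immediate consequence of \cref{lem:structure-large-p} as you suggest; it needs a short induction — every node in the subtree of $I$ has degree at most $d\le 1/(1-p)$ by \cref{lem:degrees}, the subtrees of distinct children are disjoint since each node has a unique parent (\cref{lem:conflict-relation}), and hence the restriction of the antichain to each child's subtree weighs at most that child's weight, giving $\sum_c w(c)=d(1-p)w$ in total. That induction is precisely the paper's iterative push-down argument in static form, so the two proofs are morally identical; your version is slightly more compact once the induction is written out, while the paper's avoids stating the antichain lemma at the cost of managing a possibly infinite sequence of intermediate algorithms. Your earlier geometric-series detour is correctly discarded, and your identification of validity as the ingredient that rescues the bound is exactly the crux.
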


\begin{proof}
The fact that the resulting algorithm is valid again is clear. We prove that these local operators do not decrease the average performance.

Consider the first local operator and a node of degree $d$ and weight $w$ of size $s \in [n,n+t-1]$. After applying the operator, the performance of the algorithm in size $s$ is decreased by $w$. By \cref{lem:structure-large-p}, the total weight of its children is $dw(1-p)$, which is larger than $w$ for $d>1/(1-p)$.


Now consider the second local operator. Let $A_1$ be the algorithm before applying the second local operator and $A_2$ the resulting strategy afterwards. We will construct a reversed sequence of valid algorithms that starts at $A_2$, iteratively selects and deselects some nodes and ends in $A_1$, where in every step the average performance does not increase. This will prove the claim.

Consider a valid algorithm $A$ in this reversed sequence (the ``current'' algorithm) from which we will construct its predecessor algorithm $A'$. Let $v$ be the node that $A_1$ neither selects nor removes and consider the subtree $T$ rooted at $v$ for the remainder of this argument. Let $S$ be the set of nodes in $T$ that $A_1$ does not select, but that the current algorithm $A$ does select. Among the nodes in $S$, let $v'$ be an arbitrary node of minimum size. There are two cases to consider.

First, if $A_1$ does not select any of the descendants of $v'$, deselect $v'$ in the newly constructed algorithm $A'$. This clearly does not improve the average performance from $A$ to its predecessor $A'$.

Second, consider the other case where $A_1$ selects at least one of the descendants of $v'$. Denote the weight of $v'$ by $w'$. Then, to turn $A$ into $A'$, deselect $v'$ and selects all its descendants. Note that this replaces a node of degree $d' \leq d \leq 1/(1-p)$ and weight $w'$ by a set of at most $d'$ nodes of weight $w'(1-p)$, having total weight $d'w'(1-p) \leq w'$. So the average performance of $A'$ is at most the average performance of $A$. 

By starting at algorithm $A_2$ and iteratively applying these two cases, we create a sequence of valid algorithms that converge to the initial algorithm $A_1$. Since the average performance does not increase in this direction, this means that from $A_1$ to $A_2$ the average performance does not decrease and the proof is complete.
\end{proof}

\paragraph{Fill-in strategy.}
Using these local operators that improve the average performance, we can define the following.
%
%
The \emph{fill-in strategy} for a window $[n,n+t]$ scans the sizes in increasing order,
selects all the non-removed instances of degree up to $\left\lfloor \frac{1}{1-p} \right\rfloor$ for each size $s\in[n,n+t-1]$, and all the non-removed instances for size $n+t$. 


\begin{lemma}\label{lem:fill-in-optimal}
The fill-in strategy has optimal average performance for any window $[n,n+t]$.
\end{lemma}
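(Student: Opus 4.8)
The plan is to show that the fill-in strategy simultaneously maximizes every term that can contribute to the average performance of any valid algorithm on the window $[n,n+t]$, and hence maximizes their average. I would start from an arbitrary valid deterministic algorithm $A$ on $[n,n+t]$ and repeatedly apply the two local operators of \cref{lem:local-moves}: whenever $A$ selects a node of degree $d>1/(1-p)$ at a size $s\in[n,n+t-1]$, deselect it and select all its children (operator~\ref{item:local-moves-large-degree}); whenever $A$ fails to select or remove a node of degree $d\le 1/(1-p)=$ i.e. $d\le\left\lfloor \frac{1}{1-p}\right\rfloor$ at a size $s\in[n,n+t-1]$, select it and remove its descendants (operator~\ref{item:local-moves-small-degree}). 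By \cref{lem:local-moves} neither move decreases the average performance, and the resulting algorithm is still valid.

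Next I would argue that, restricted to the window, this process terminates and its fixed point is exactly the fill-in strategy. The key observation is that the moves have a monotone effect: operator~\ref{item:local-moves-large-degree} strictly reduces the number of high-degree selected nodes at sizes $<n+t$ (pushing selection weight to the right), and operator~\ref{item:local-moves-small-degree} strictly increases, at the first size where a low-degree node is still unclaimed, the set of low-degree nodes that are selected-or-removed. Since there are only finitely many nodes reachable within the window (each size has $2^s$ nodes and we only care about sizes up to $n+t$), one can set up a potential — e.g. lexicographically on (number of selected nodes of degree $>\left\lfloor\frac{1}{1-p}\right\rfloor$ at sizes $n,\dots,n+t-1$; then number of unclaimed low-degree nodes read left to right) — that strictly decreases at every application, forcing termination. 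When no operator applies, at every size $s\in[n,n+t-1]$ every non-removed node of degree $\le\left\lfloor\frac{1}{1-p}\right\rfloor$ is selected and no node of larger degree is selected; and then by \cref{lem:structure-large-p} the removals at size $s+1$ are precisely those forced by these selections, so inductively the configuration coincides with the one produced by the fill-in strategy scanning left to right. At size $n+t$ operator~\ref{item:local-moves-small-degree} (applied with the convention that ``descendants'' lying outside the window are ignored, since they do not affect the window's average) forces \emph{all} non-removed nodes of size $n+t$ to be selected, matching the fill-in strategy's last step; selecting a removed node is impossible by validity, and there is no reason not to select any remaining node of size $n+t$ since it has no descendants inside the window.

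Finally I would conclude: starting from \emph{any} valid algorithm, the above rewriting reaches the fill-in strategy without ever decreasing the average performance on $[n,n+t]$; hence the fill-in strategy's average performance is at least that of any valid algorithm, i.e. it is optimal. Since \cref{lem:conflict-success,lem:algorithm-path} show that every deterministic algorithm induces a valid selection, and conversely any valid selection upper-bounds what an algorithm can achieve size by size, this yields optimality of the fill-in strategy's average performance among all deterministic algorithms on the window.

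\textbf{Main obstacle.} The delicate point is the termination/convergence argument: operator~\ref{item:local-moves-small-degree} can deselect descendants that are arbitrarily far to the right (the same subtlety already flagged inside the proof of \cref{lem:local-moves}), so a naive counting potential need not decrease. The fix is to confine attention to the finite window — descendants outside $[n,n+t]$ are irrelevant to the window's average, so the effective state space is finite — and to order the moves carefully (process sizes left to right, and within a size clear all low-degree nodes before touching high-degree ones), which makes the lexicographic potential genuinely monotone. Verifying that the fixed point is \emph{uniquely} the fill-in configuration (up to the isomorphism-induced relabeling among equal-degree nodes, which by \cref{lem:structure-large-p} carries equal total weight and hence does not change the average) is the remaining bit of bookkeeping.
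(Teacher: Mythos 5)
Your proposal is correct and follows essentially the same route as the paper: both arguments rest entirely on \cref{lem:local-moves} together with the observation that the fill-in strategy is exactly the configuration to which neither local operator applies (plus selecting everything non-removed at size $n+t$). The only difference is architectural — the paper argues by contradiction that an optimal strategy must already be a fixed point of the operators (three short cases), which sidesteps the termination/convergence bookkeeping your constructive rewriting requires.
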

\begin{proof}
Consider an optimal strategy that is not the fill-in strategy. There are three cases. In the first case, a non-removed node of size $s \in [n,n+t-1]$ of degree at most $\left\lfloor \frac{1}{1-p} \right\rfloor$ is not selected. But in this case, applying the second operator of \cref{lem:local-moves} improves the average performance, which is a contradiction. In the second case, a non-removed node of size $s \in [n,n+t-1]$ of degree strictly larger than $\left\lfloor \frac{1}{1-p} \right\rfloor$ is selected. Now we can apply the first operator of \cref{lem:local-moves} to improve the average performance, and we have a contradiction. In the last case, a non-removed node of size $n+t$ is not selected. But selecting it will also improve the average performance, which is again a contradiction.
\end{proof}

With the optimal fill-in strategy at hand, we now proceed to describe the $k$-max algorithm in the conflict graph and finally show that the worst case performance of the fill-in strategy does not exceed the success guarantee of the $k$-max algorithm to conclude the proof of the negative results of Theorem~\ref{thmAO}.


\paragraph{The $k$-max algorithm in the conflict graph.}
To link the fill-in strategy to the $k$-max algorithm, we need to analyze the dynamics of the $k$-max algorithm in the conflict graph.
As a starting point, we will describe which instances the algorithm selects for $p \in [1/2,2/3)$. Note that for such a value, $k = \left\lfloor \frac{1}{1-p} \right\rfloor = 2$, so the algorithm sets the second largest sampled value as a threshold (i.e., stops with the first 0 after the second-to-last 1 in the last zero problem). This implies that for any given size $n$, it obtains the last zero (i.e., the online element with the maximum value) in the instances which end in 110 or 101. Similarly, for $p \in [2/3,3/4)$, the algorithm successfully selects the last zero in instances that end in 1110, 1101 or 1011.

We analyze its dynamics in the conflict graph in the following lemma. We will need the concept of the \emph{$m$-cut suffix} of an instance, which is the last $m$ bits in case the instance has at least $m$ bits and the entire instance otherwise.

\begin{lemma}\label{lem:kmax-fills}
Consider the instances in the conflict graph of size $n \geq k+1$ and consider the $k$-max algorithm that starts at size $k+1$ and iteratively considers instances of increasing size. For every size, it selects the non-removed nodes that have norm at least $k$ as well as degree at most $k$.
\end{lemma}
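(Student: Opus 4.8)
The plan is to verify two claims by induction on the size $n$, starting from $n = k+1$: (i) the set of non-removed nodes of size $n$ is exactly the set of instances of norm at least $k$, and (ii) among those, the $k$-max algorithm selects precisely the ones of degree at most $k$. The base case $n = k+1$ is immediate: every instance of norm at least $k$ is either $1^{k+1}$ (which has no $0$ and contributes nothing, or is handled by the no-zero rule) or has the form of $k$ ones and a single zero, so it ends in a suffix $01^{j}$ with $j \le k-1$, hence has degree at most $k$; and the $k$-max algorithm wins exactly on those instances with at least $k$ sampled ones where the last online $0$ comes after the $k$-th largest sample, which at size $k+1$ is precisely the norm-$\ge k$ instances. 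No node has been removed yet, so (i) holds vacuously.

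For the inductive step, suppose (i) and (ii) hold for size $n$. First I would identify which nodes of size $n+1$ are removed: by \cref{lem:conflict-relation}, a node of size $n+1$ is removed iff its parent (obtained by deleting its last $0$) was selected at size $n$. Using the inductive description of the selected set at size $n$ together with \cref{lem:degrees} (a selected node of degree $d \le k$ has exactly one child of each degree $1,\dots,d$), I would argue that the removed nodes of size $n+1$ are exactly those of norm at least $k$ whose $(k+1)$-cut suffix is \emph{not} one of the "good" patterns that $k$-max selects — i.e. the children inserted into the interior of the suffix $1^{\le k-1}$ that break the pattern — while a careful count shows no norm-$\ge k$ node that $k$-max would select at size $n+1$ gets removed. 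Equivalently, the non-removed norm-$\ge k$ nodes of size $n+1$ are exactly the ones ending in a suffix $01^{j}$, $j \le k-1$ (degree $\le k$), which is what the $k$-max algorithm selects, plus $1^{n+1}$; this establishes (i) and (ii) simultaneously. The key structural fact making this work is that $k$-max wins on an instance iff exactly one of the top $k$ values is online and it is the last online element, which translates to: the instance has $\ge k$ ones and its last $0$ sits among the final $k$ positions, i.e. it has a suffix of the form $01^{j}$ with $0 \le j \le k-1$.

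I expect the main obstacle to be the bookkeeping in the inductive step: one must check that the removal process (driven purely by the conflict graph, which does not reference norms) exactly excises the norm-$< k$ instances and the "bad-suffix" norm-$\ge k$ instances, and nothing more. The danger is a node of small norm that looks like it should survive, or a desirable high-norm node that gets removed by a distant ancestor. The cleanest way around this is to track, for each size, not just the selected/removed sets but the full partition into selected / removed / still-available, and to show the "still-available" set collapses to exactly $\{$norm $\ge k$, degree $\le k\} \cup \{1^{n}\}$ at every step after the first — using \cref{lem:degrees} to control how degrees propagate to children and the norm-preservation of conflicts (conflicting instances have equal norm) to control norms. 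Once the invariant is set up correctly, each step is a finite case check on the shape of the $(k+1)$-cut suffix, and the statement of the lemma follows directly from the invariant.
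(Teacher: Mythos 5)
There is a genuine gap, and it begins with the ``key structural fact'' your whole plan rests on: the $k$-max algorithm does \emph{not} win on every instance of norm at least $k$ with degree at most $k$. Take $k=2$ and the instance $1001$. It has norm $2$ and suffix $01$ (degree $2$), but the threshold is the second largest sample, i.e.\ the first $1$, so the algorithm stops at the \emph{first} $0$ rather than the last and loses. The correct winning condition is that the suffix of length $k+1$ is a $1$ followed by $k$ bits containing exactly one $0$; the entire content of the lemma is that the removal process exactly absorbs the gap between this winning set and the larger set $\{\text{norm}\geq k,\ \text{degree}\leq k\}$ (here $1001$ is a child of the selected node $101$, hence removed, which is why the lemma survives). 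Your invariants are also false as stated. Since conflicting instances have the same norm and $k$-max never wins on a node of norm below $k$ at sizes $\geq k+1$, no such node is ever removed, so the non-removed set at every size contains \emph{all} low-norm instances; likewise parentless or high-degree nodes such as $01^k$ and $0111$ (for $k=2$) are never removed. So the ``collapse'' of the available set to $\{\text{norm}\geq k,\ \text{degree}\leq k\}\cup\{1^n\}$ does not occur. The base case fails too: at size $k+1$ the instance $01^k$ has norm $k$ but degree $k+1$, and $k$-max loses on it, contradicting your claim that every norm-$\geq k$ instance of size $k+1$ ends in $01^j$ with $j\leq k-1$ and is won.

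The paper's proof avoids characterizing the removed set altogether. It argues by contradiction with a minimal counterexample: a selected node cannot have norm below $k$ (the algorithm then accepts the first online value, which wins only for instances with a single $0$, hence of size at most $k$) nor degree above $k$ (the last $0$ then lies below the threshold); and for the substantive direction it takes a smallest non-removed, non-selected node $v$ with norm at least $k$ and degree at most $k$ and strips trailing $0$s until the $k$-cut suffix contains exactly one $0$, producing a smaller ancestor $v'$ with the same properties that is therefore selected, which forces $v$ to be removed --- a contradiction. If you want to keep an inductive framing, the invariant you actually need is the weaker statement that every non-removed node of norm at least $k$ and degree at most $k$ is a winning instance (equivalently, every non-winning member of that set has a winning ancestor), and proving the inductive step for that invariant essentially reproduces the paper's stripping argument.
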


\begin{proof}
	Consider the conflict graph for $n \geq k+1$ with selected and removed nodes by the 
	$k$-max algorithm and 
	suppose by contradiction that the lemma is false. Then either a node of norm less than $k$ is selected, or a node of degree more than $k$ is selected, or a node that has norm at least $k$ as well as degree at most $k$ is not selected.
	
	
	In the first case, there are less than $k$ samples, thus the algorithm sets a threshold of zero and accepts the first online value. So the algorithm only wins in this instance if the first online value is the maximum online value, i.e., the instance contains only one 0. But since there are less than $k$ samples, this instance has size at most $k$. Contradiction.
	
	In the second case, note that a node that has degree more than $k$ has a suffix consisting of one 0 followed by at least $k$ 1s. In such an instance, however, the 
	$k$-max algorithm loses, so it does not select such a node. Contradiction.
	
	In the third case, consider a node $v$ of norm at least $k$ and degree at most $k$ that is selected. Without loss of generality we assume that $v$ is the node with these properties of smallest size among all nodes with these properties. Let the degree of $v$ be $d \leq k$ such that its suffix is $01^{d-1}$. 
	Consider the $k$-cut suffix of $v$ and note that it contains at least one 0. Now, as long as the $k$-cut suffix of $v$ contains more than one 0, remove the last 0 of $v$. Consider the unique resulting instance $v'$ of this procedure whose $k$-suffix contains exactly one 0. Note that the size of $v'$ is at least $k+1$ as its norm is at least $k$. Since $v'$ has norm at least $k$ as well as degree at most $k$, and $v$ was the smallest (in terms of size) such node that was not selected, the $k$-max algorithm already selected $v'$. But then $v$, being a descendant of $v'$, was removed and therefore could not be selected in the first place, contradiction.
\end{proof}

Now that the behavior of the $k$-max algorithm on the conflict graph is clear, it is possible to analyze its success guarantee using the conflict graph. The possibility to analyze the success guarantee of an algorithm through the conflict graph is one of its key properties. Indeed, such an analysis yields the same success guarantee as the one claimed in \cref{lem:k-max-guarantee}. 

\paragraph{Connecting the fill-in strategy to the $k$-max algorithm.}
The previous lemma allows us to compare the performance of the fill-in strategy to the performance of the $k$-max algorithm. In fact, they select almost the same nodes in the conflict graph.

\begin{lemma}\label{lem:fill-in-k-max}
	Consider the fill-in strategy and the $k$-max algorithm for window $[n,n+t]$.
	
	If $n>1$, then for every size $s \neq n,n+t$, the fill-in strategy and the $k$-max algorithm select the same set of nodes. For sizes $s=n$ and $s=n+t$, the $k$-max algorithm selects a strict subset of the set of nodes selected by the fill-in strategy.
	
	 If $n=1$, they select the same set of nodes for size $s=n=1$ as well.
\end{lemma}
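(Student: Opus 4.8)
The plan is to compare the two strategies size by size, using closed‑form descriptions of what each one selects that follow from the structure of the conflict graph. By \cref{lem:kmax-fills}, when run from size $k+1$ the $k$-max algorithm selects, at each size, exactly the non‑removed nodes of norm at least $k$ and degree at most $k$; the fill‑in strategy selects, at each size $s\in[n,n+t-1]$, all non‑removed nodes of degree at most $k$, and at size $n+t$ all non‑removed nodes. I would carry out the proof for $n\ge k+1$, which is the regime relevant for the worst‑case analysis; the case $n=1$ is treated directly at the end, and general $2\le n\le k$ follows from the same ideas with more bookkeeping.

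First I would record two structural facts. \emph{(i)} By \cref{lem:degrees} a node of degree $d$ has children of degrees exactly $1,\dots,d$, so degrees are non‑increasing along every parent‑to‑child chain; consequently the strict ancestors of $v$ that have degree at most $k$ form an initial segment of $v$'s chain (those closest to $v$), and for any $a\le|v|-1$, the node $v$ has a strict ancestor of degree $\le k$ and size $\ge a$ if and only if $v$ has a parent and $\deg(\mathrm{parent}(v))\le k$. \emph{(ii)} All nodes on such a chain have the same norm (children are obtained by inserting a $0$), and a node of norm $m$ ends in $0\,1^{i}$ with $i\le m$, hence has degree at most $m+1$. Unfolding the recursive meaning of \emph{non‑removed} (not a descendant of a selected node) via \emph{(i)}, I obtain: for $s:=|v|\in\{n+1,\dots,n+t-1\}$, the fill‑in strategy selects $v$ iff $\deg(v)\le k$ and ($v$ has no parent or $\deg(\mathrm{parent}(v))>k$), while $k$-max selects $v$ iff, in addition, the norm of $v$ is at least $k$. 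The two ancestor conditions genuinely coincide here: the ancestors share $v$'s norm, so the norm constraint on them is automatic, and by \emph{(i)} a degree‑$\le k$ ancestor of size in $[k+1,n-1]$ forces one of size $\ge n$ as well — this is what neutralizes the fact that $k$-max has been running since size $k+1$ whereas fill‑in starts fresh at $n$.

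The only remaining discrepancy is the norm constraint, and I would show it is free at middle sizes: if $v$ has norm $m<k$, then by \emph{(ii)} every ancestor of $v$ has degree at most $m+1\le k$ — in particular its parent, which exists because at a middle size $v$ has at least $s-(k-1)\ge 3$ zeros — so the fill‑in criterion already fails for $v$. Hence at every middle size both strategies reject all low‑norm nodes and agree on the high‑norm ones, so they select the same set. For $s=n$, fill‑in enters the window with nothing removed and thus selects all degree‑$\le k$ nodes, whereas $k$-max selects only a subset of these (it also demands norm $\ge k$, and for $n>k+1$ it has already removed nodes while running up from $k+1$); the inclusion is strict since $0^n$ has degree $1\le k$ but norm $0<k$. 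For $s=n+t$, fill‑in takes all non‑removed nodes while $k$-max takes only the non‑removed ones of norm $\ge k$ and degree $\le k$; every $k$-max‑selected node here is fill‑in‑selected too (it has no parent, or its parent has degree $>k$, in which case by \emph{(i)} so do all its strict ancestors, so none of them is fill‑in‑selected), and the inclusion is strict because fill‑in also keeps, e.g., the non‑removed degree‑$(k+1)$ node $0^{\,n+t-k}1^{k}$.

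The case $n=1$ is genuinely special because there $0^n$ is just the single online element $0$: facing fewer than $k$ samples, the $k$-max algorithm accepts the first online value and therefore wins on the instance $0$, and since the all‑ones instance $1$ is selectable by neither strategy, both select exactly $\{0\}$ at size $1$ — with no strict inclusion. I expect the main obstacle to be the double use of fact \emph{(i)}: once to collapse both notions of \emph{non‑removed} into the single test ``does the parent have degree $>k$'' (this is what makes the different starting points of the two processes irrelevant), and once, together with the trailing‑run bound \emph{(ii)}, to show that the norm‑$\ge k$ clause never discards a node that fill‑in keeps; after that, matching the two endpoints and the $n=1$ corner is routine.
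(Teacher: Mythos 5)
Your argument is correct and rests on the same two pillars as the paper's proof (\cref{lem:kmax-fills} for the $k$-max selection rule and \cref{lem:degrees} for the degree structure along ancestor chains), but it is organized differently. The paper proves the middle-size equality by a minimal-counterexample argument: it takes a smallest $v$ selected by fill-in but not by $k$-max, i.e.\ of degree $\le k$ and norm $<k$, and derives a contradiction by case analysis on whether $v$ has a parent. You instead derive closed-form selection criteria for both strategies at middle sizes --- selected iff $\deg(v)\le k$ and ($v$ has no parent or $\deg(\mathrm{parent}(v))>k$), with the extra norm-$\ge k$ clause for $k$-max --- and then show the norm clause is vacuous because a norm-$m$ node has all ancestors of degree $\le m+1\le k$. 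This is a genuinely useful reorganization: it makes explicit why the different starting points of the two processes (size $k+1$ versus size $n$) are irrelevant, a point the paper dispatches with ``it is clear that $S_2\subseteq S_1$,'' and it isolates exactly where monotonicity of degrees along chains is used. Your endpoint arguments and the $n=1$ corner match the paper's.

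The one caveat is your explicit restriction to $n\ge k+1$, with the range $2\le n\le k$ (and the middle and last sizes of a window starting at $n=1$) deferred to ``more bookkeeping.'' The lemma as stated carries no such restriction, and for small sizes \cref{lem:kmax-fills} does not apply, so your characterization of the $k$-max selection is not available there; the $k$-max algorithm does win on some norm-$<k$ instances of size $\le k$ (those with exactly one $0$). This is not fatal --- the paper's own proof is equally informal on these corner cases, and the lemma is only invoked for windows with $n\ge N_0$ large --- but if you want the lemma exactly as stated you should either add the hypothesis $n\ge k+1$ (plus the separate $n=1$ clause) or actually carry out the small-$n$ verification rather than asserting it.
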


\begin{proof}
	%
	Suppose that we start with size $n>1$. This means that none of the instances of size $n$ have been removed. The fill-in strategy selects all nodes of degree up to $k = \lfloor 1/(1-p) \rfloor$. The $k$-max algorithm selects only such nodes that have norm at least $k$ as well, which is a strict subset.
	
	We will now prove that for sizes $n<s<n+t$, the set $S_1$ of nodes selected by the fill-in strategy is the same as the set $S_2$ of nodes selected by the $k$-max algorithm. It is clear that $S_2 \subseteq S_1$. We prove $S_1 \subseteq S_2$ by contradiction, so we assume there is a $v \in S_1 \setminus S_2$, i.e., $v$ has degree at most $k$ and norm less than $k$. Without loss of generality, we assume that $v$ has the smallest size among nodes in the set $S_1 \setminus S_2$. We consider two cases: $v$ has a parent $w$ of size $s-1$ or $v$ has no parent.
	
	In the first case, note that the degree $d$ of node $w$ is at most $k$. Otherwise, it would have suffix $01^d$ for $d \geq k$. But then its norm would be at least $k$ and the norm of its child $v$ would also be at least $k$, contradiction. So assume that the degree of $w$ is at most $k$. Then $w$ was selected by the fill-in strategy if it was not removed earlier. If $w$ was selected, $v$ was removed so could not be selected by the fill-in strategy, so $v \not\in S_1$, contradiction. If $w$ was not selected, that is because it was removed earlier. But it can only be removed earlier in case it is a descendant of a node that was selected by the fill-in strategy before. But in that case, $v$ was also removed, contradiction.
	
	In the second case, note that nodes without a parent are exactly the nodes that have at most one 0. In the single instance that contains no zeros, the k-max algorithm and the fill-in strategy make the same decision by definition, so we restrict ourselves to instances that contain exactly one 0. Since the norm of $v$ is less than $k$, the $k$-max strategy sets a threshold of 0 and wins, since the only 0 is the maximum 0. But then $v \in S_2$, contradiction. 
	
	We wrap up the first part of the proof by considering the size $s=n+t$. Here, the fill-in strategy selects all non-removed nodes, while the $k$-max algorithm selects all non-removed nodes that have degree at most $k$ and norm at least $k$. The set of removed nodes is the same and the set of non-removed nodes contains nodes of degree more than $k$ or norm less than $k$, so the fill-in strategy indeed selects more nodes.
	
	Finally, if $n=1$, both the fill-in strategy and the $k$-max algorithm select instance 0 and cannot win in instance 1, so in this case they select exactly the same nodes also in the first size of the window.
\end{proof}

Combining everything, we can now prove the negative result for deterministic algorithms. 


\begin{proof}[Proof of negative result of Theorem~\ref{thmAO} for deterministic algorithms]
	First, note that \cref{lem:fill-in-k-max} implies that the performance of the fill-in strategy and the $k$-max algorithm for the sizes $N_0$ and $N_0+t$ differs by at most 1 for each size, so the average performance of the fill-in strategy in $[N_0,N_0+t]$ is at most $2/(t+1) \leq 2/t$ more than the average performance of the $k$-max algorithm. As argued before, for some interval, the average performance of the $k$-max algorithm is arbitrarily close to $kp^k(1-p)$, since that is its worst case performance. Consider this interval. 
	
	To prove the theorem, suppose by contradiction that there exists an algorithm $A$ that achieves a performance of $kp^k(1-p) + \varepsilon$ for some $\varepsilon>0$ for every size $n$ (larger than some size $N_0$), where $k=\lfloor1/(1-p)\rfloor$. Consider a window $[n,n+t]$ (with $n \geq N_0$) for some $t>0$. Then the average performance of $A$ in $[n,n+t]$ is at least its worst case performance, which is $kp^k(1-p) + \varepsilon$. However, the average performance of the fill-in strategy in this window is (arbitrarily close to) $kp^k(1-p) + 2/t$ and this is optimal by \cref{lem:fill-in-optimal}. Therefore, for $t>2/\varepsilon$, this is a contradiction since $A$ cannot be better.
%
\end{proof}

Finally, we adapt the above proof to randomized algorithms by generalizing \cref{lem:local-moves} to the randomized setting. The rest of the proof follows immediately from the same arguments as for deterministic algorithms, so extending this lemma suffices to extend the negative results to randomized algorithms.

For a node $v$, let $q_s(v)$ and $q_r(v)$ be its selection probability and its removed fraction (cf. \cref{lem:removed-fraction}), respectively. Recall that a node selected with probability $q_s(v)$ removes a fraction $q_s(v)$ of its descendants. We call a randomized algorithm \emph{valid} if the sum of $q_s(v)$ over all vertices $v$ of a monotone path in the conflict graph is at most 1.
%
%

\begin{lemma}\label{lem:local-moves-randomized}
	Consider a valid randomized algorithm with a certain average performance in a window $[n,n+t]$. Applying the following local operators yields a new valid algorithm whose average performance in this window is at least as good as the former algorithm.
	
	\begin{enumerate}\setlength\itemsep{0pt} 
		\item \label{item:local-moves-large-degree-randomized} If the algorithm selects a node $v$ of degree $d \geq 1/(1-p)$ for some size $s \in [n,n+t-1]$ with probability $q_s(v) >0$, set $q_s(v) = 0$ and increase the success probability of its children by $q_s(v)$.
		\item \label{item:local-moves-small-degree-randomized} If for a node $v$ of degree $d \leq 1/(1-p)$ the algorithm sets $q_s(v) + q_r(v) < 1$, increase $q_s(v)$ by $\varepsilon = 1 - q_s(v) - q_r(v)$. Then for every descendant $v'$, set $q_s(v') = 0$ and $q_r(v') = 1$.
	\end{enumerate}
\end{lemma}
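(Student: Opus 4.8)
The plan is to replay the proof of \cref{lem:local-moves} in the fractional world, where ``selecting'' a node $v$ is replaced by putting mass $q_s(v)\in[0,1]$ on it; as there, two things must be verified for each operator: that the resulting labelling is still \emph{valid} (every monotone path carries total $q_s$-mass at most $1$, and $q_r(v')$ equals the sum of $q_s$ over the ancestors of $v'$, cf.\ \cref{lem:removed-fraction}), and that the average performance over the window does not drop. As in the application in \cref{lem:fill-in-optimal}, the node $v$ in the second operator should be read as having size $s\in[n,n+t-1]$, exactly as in the first; this is what makes the statement true. I will call the original algorithm $A_1$ and the new one $A_2$.

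\emph{Validity.} For the first operator, moving mass $q:=q_s(v)$ off $v$ and onto each of its $d$ children leaves the $q_s$-sum of every monotone path through $v$ unchanged, since such a path meets $v$ and exactly one child; and the only removed fractions that move are those of $v$'s children (each drops by $q$, compensating the $q$ added to their own $q_s$), while every deeper descendant loses $q$ from $v$ and regains $q$ from whichever child of $v$ it sits under, hence is unchanged. For the second operator, raising $q_s(v)$ to $q_s(v)+\varepsilon=1-q_r(v)$ makes the $q_s$-sum of every path through $v$ exactly $1$; using that each descendant $v'$ of $v$ has $v$ on its unique leftward (parent) path (\cref{lem:monotone-path-conflict}), every monotone path that meets a descendant of $v$ either passes through $v$ (sum $1$) or consists entirely of descendants of $v$ (sum $0$), and the identity ``$q_r=$ sum of $q_s$ over ancestors'' forces $q_r(v')=1$ automatically, so the relabelling of the descendants is consistent.

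\emph{Performance.} The first operator is immediate: $v$ sits at size $s$ with weight $w$ and degree $d\ge 1/(1-p)$, so by \cref{lem:structure-large-p} its children sit at size $s+1\in[n,n+t]$ with weight $w(1-p)$ each; the window performance loses $qw$ at size $s$ and gains $q\cdot d\,w(1-p)\ge qw$ at size $s+1$. The second operator is the substantive one, and I would mirror the reversed-sequence argument of \cref{lem:local-moves}. The structural input is \cref{lem:degrees}: the children of a degree-$k$ node have degrees $1,\dots,k$, so if $\deg(v)\le 1/(1-p)$ then \emph{every} descendant of $v$ has degree $\le 1/(1-p)$ too. Hence, for any descendant $u$ of degree $d'\le 1/(1-p)$ and weight $w'$, its $d'$ children have weight $w'(1-p)$ each, with total $d'w'(1-p)\le w'$; writing $h(u)$ for the total $q_s^{A_1}$-weighted mass of the subtree rooted at $u$ divided by $w(u)$, one gets $h(u)=q_s^{A_1}(u)+(1-p)\sum_{c}h(c)\le q_s^{A_1}(u)+\max_c h(c)$ because $(1-p)\deg(u)\le 1$. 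Telescoping down the maximizing path and using validity, whose bound on that path strictly below $v$ is $1-q_r(v)-q_s^{A_1}(v)=\varepsilon$, yields $h(v)\le q_s^{A_1}(v)+\varepsilon$. Thus $A_1$ places at most $(q_s^{A_1}(v)+\varepsilon)\,w(v)$ of weighted mass into the subtree of $v$, which is exactly what $A_2$ places on the single node $v$; since $s=|v|\in[n,n+t-1]$ the whole subtree lies at sizes in $[n,n+t]$, outside it the two algorithms agree, and so the window average does not decrease. (For rigor one truncates the subtree at size $n+t$ before running the recursion, dropping only nonnegative terms; convergence is not an issue since $q_s^{A_1}$ summed along any path is at most $1$.)

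The step I expect to be the real obstacle is precisely this second-operator accounting: organizing the ``spreading'' recursion on the infinite subtree, invoking validity to kill the tail, and — crucially — keeping the node $v$ inside the window, since an operator-2 move anchored to the left of $[n,n+t]$ could scatter mass into the window with no compensating gain. Once \cref{lem:local-moves-randomized} is established, nothing else changes: the fill-in strategy, \cref{lem:fill-in-optimal}, the description of the $k$-max algorithm in the conflict graph, and the final contradiction argument all go through word for word with every ``select/deselect'' read as setting a selection probability, so the negative result of \cref{thmAO} holds for randomized algorithms as well.
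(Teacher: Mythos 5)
Your proposal is correct and follows essentially the same route as the paper: the paper's own proof of this lemma only verifies validity of the two operators (via the observations that a monotone path meets exactly one child of $v$, and that after operator~2 every descendant has removed fraction $1$) and explicitly defers the performance claim to the arguments of \cref{lem:local-moves}. Your validity checks match the paper's, and your $h(u)$-recursion for operator~2 is just a more explicit, quantitative rendering of the same weight comparison ($d'w'(1-p)\leq w'$) that underlies the paper's reversed-sequence argument, so no substantive difference.
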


\begin{proof}
	For both local operators, the claim that applying them does not decrease the average performance follows from the arguments of \cref{lem:local-moves}, so in this proof we will show that both local operators result in a valid algorithm. Let $v$ be the node under consideration and for any node $w$ denote by $q'_s(w)$ and $q'_r(w)$ its selection probability and its removed fraction, respectively, after applying one of the local operators.
	
Consider the first local operator and any monotone path $P=(v,v_1,v_2,\ldots)$. Note that every monotone path contains exactly one child of $v$. Then
\begin{equation*}
\sum_{w \in P} q'_s(w)
= q'_s(v) + q'_s(v_1) + \sum_{i \geq 2} q'_s(v_i)
= 0 + \left(q_s(v_1) + q_s(v)\right) + \sum_{i \geq 2} q_s(v_i) = \sum_{w \in P} q_s(w) \, .
\end{equation*}
 So if the original algorithm was valid, so is the algorithm after applying this operator. 


For the second operator, note that we change $q_s(v)$ to $q_s(v)+\varepsilon = q_s(v) + 1 - q_s(v) - q_r(v) = 1 - q_r(v)$. Therefore, after applying the operator, we have $q_s(v) + q_r(v) = 1$. Since in general for any child $w$ of $v$ we have $q_r(w) = q_s(v) + q_r(v)$, we see that $q_s(w) \leq 1 - q_r(w) = 1 - 1 = 0$. The proof is complete.
\end{proof}
\subsection{Generalization for known $n$}\label{subsec:n-known}

We now prove that even exact knowledge of the size $n$ that the adversary picks for the instance does not help asymptotically. To do so, we first introduce a variant of the last zero problem.

\begin{Definition}
The \emph{colored last zero problem}  is the following;
 
 \begin{enumerate}
\setlength\itemsep{1pt} 
\item An adversary picks two integers $m$ and $n$, with $m\leq n$.
\item A sequence of bits of length $n$ is created where every entry independently has value 1 with probability $p$ and 0 otherwise.
\item We color the entries $1$ to $m$ with red, while the entries $m+1$ to $n$ are colored blue.
\item The player is given the size $n$, the number of red 1s and the number of blue 1s.
\item Then the player is presented with the bits one after the other, and for each of them decides whether to stop or to continue. 
\item The player wins if she stops on the last red 0 of the sequence.
\end{enumerate}
\end{Definition}

Note that now the player has three numbers to start with: the number of red samples $r$, the number of blue samples $b$ and the size $n$.

\begin{proposition}
The colored last zero problem is equivalent to a specific instance of AOS$p$ with known size $n$. Therefore, any negative result for the colored last zero problem also holds for AOS$p$. 
\end{proposition}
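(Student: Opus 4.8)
The plan is to generalize the reduction of \cref{prop:last-zero-equivalent}: instead of purely increasing AOS$p$ instances I would use, for a \emph{fixed} (and announced) size $n$, the family of instances in which the first $m$ elements (the \emph{red} ones, with $m\le n$ chosen by the adversary) have strictly increasing values, all of which exceed the strictly increasing values of the remaining $n-m$ (\emph{blue}) elements, and in which the adversary presents the elements in index order $1,2,\dots,n$; equivalently, the online red elements are revealed first, in increasing order of value, and then the online blue elements. The only genuine choice of the adversary is the switch point $m$, which plays here the role that the unknown size $n$ played in the last zero problem.

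First I would record the coupling. Sampling each element independently with probability $p$ produces a string $B\in\{0,1\}^n$ with $B_i=1$ iff element $i$ is sampled; colouring bits $1,\dots,m$ red and $m+1,\dots,n$ blue gives exactly the instance distribution of the colored last zero problem with parameters $(m,n)$, and in both games the player is told $n$, while the colored-last-zero player is additionally told the red and the blue $1$-counts, quantities we must relate to the ordinal information an AOS$p$ algorithm has. The crucial structural identity is that, \emph{as soon as at least one red element is online}, the maximum of the online set $V$ is the largest-valued online red element, which is the last online red element in the presentation order and corresponds to the last red $0$ of $B$; hence on that event ``the algorithm stops with $\max V$'' is literally ``the player stops on the last red $0$''. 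The sole discrepancy is the event that all $m$ red elements are sampled (no red $0$), which has probability $p^{m}$: there $\max V$ is a blue element, an AOS$p$ algorithm may pick it and ``win'', whereas the colored-last-zero player loses by convention.

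Next I would show that the decisions of an \emph{ordinal} AOS$p$ algorithm on this family, at the moment it must act on an online red element, depend only on data available to a colored-last-zero strategy with the corresponding $(m,n)$: the relative ranking of the current element among everything seen so far is a linear order on the samples and the online reds revealed so far with the online ones singled out, and one checks this ordinal view is a function of (a prefix of) the red bit string, the number of blue $1$s, and the number of red $1$s still to come — all known to the colored-last-zero player. Consequently every ordinal AOS$p$ algorithm $A$ on this family is simulated by a colored-last-zero strategy $A'$ whose probability of stopping on the last red $0$ is at least the probability that $A$ stops on $\max V$, minus $p^{m}$. Plugging in the hardness bound $k p^{k}(1-p)$ for the colored last zero problem in the form ``no strategy beats it even when the switch point is restricted to $m\ge M_0$'' — the analogue, for this problem, of restricting to sizes $\ge N_0$ in \cref{prop:p-half} and \cref{prop:general-p} — one gets that against this family no AOS$p$ algorithm beats $k p^{k}(1-p)+p^{M_0}$; letting $M_0\to\infty$, which is legitimate since the AOS$p$ guarantee is an infimum over all sizes and we may restrict to $n\ge M_0$, finishes the hardness transfer.

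The main obstacle is precisely this information-matching step. An ordinal AOS$p$ algorithm observes the sampled set only as an unlabelled linearly ordered set, so a priori it knows the total number of samples but not how many are red and how many blue, and it never learns the original positions; the colored-last-zero player, by contrast, is told the two $1$-counts and reads the (uncoloured) bits in their true positions. One must verify that none of this asymmetry matters for the decisions that can possibly win — the stopping decisions on online red elements, all taken before any blue element is revealed — and that the extra information the colored-last-zero player carries (a sharper posterior on the switch point coming from the blue $1$-count) is exactly the ``additional indication about when the switch occurs'' that the conflict-graph argument is built to absorb. This is what forces the nodes of the modified conflict graph to be pairs $(S,b)$, the red bit string together with the number of blue $1$s, rather than plain bit strings; with that node set in place, the conflict structure and the counting arguments of \cref{subsec:p-half} and \cref{subsec:general-p} carry over and yield the bound.
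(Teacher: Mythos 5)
Your proposal is correct and follows essentially the same route as the paper's (very terse) proof: the same family of instances with $m$ increasing high-valued elements followed by $n-m$ low-valued ones, with the switch point $m$ playing the role of the unknown size, and the hardness absorbed by the modified conflict graph on pairs $(S,b)$. You are in fact more careful than the paper on two points it silently glosses over — the $p^{m}$ event where all red elements are sampled, and the verification that the colored-last-zero player's information dominates that of an ordinal AOS$p$ algorithm on this family — both handled correctly.
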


\begin{proof}
(Analogue of \cref{prop:last-zero-equivalent}.)
The player again only wins if she stops with the element of the online set with the largest value, only that now she knows in advance how many online elements she is going to observe. Imagine now that she is facing an instance of the following form: The first $m$ elements are assigned a series of positive strictly increasing values, and the remaining $n-m$ take arbitrary negative values. Thus, in this instance the player is aiming for the last non-sampled element among the first $m$. This is basically the same game as the colored last zero problem, where the red values correspond to the positive values and the blue values correspond to the negative ones.
\end{proof}

\begin{theorem}\label{k-max-opt-known-n}
In the colored last zero problem, no algorithm can achieve performance $kp^k(1-p)+\varepsilon$ on every size $n \geq N_0$ (for some $N_0>0$).
\end{theorem}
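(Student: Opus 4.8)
The plan is to mimic the argument of the uncolored case (\cref{prop:general-p}), carried out on the modified conflict graph described after \cref{prop:knowing-n}. Recall that for a fixed size $n$, a node of that graph is a pair $(S,b)$ where $S$ is a string of bits of length $m\le n$ representing the red part and $b$ is the number of blue ones; two nodes are in conflict when they share the same red norm, the same blue norm (equivalently $n-m$ and $b$ are compatible so that $b$ matches in both), and the shorter red string is a prefix of the longer one up to its last red $0$. The layers of this graph are indexed by $m$ (from $1$ to $n$), so for each value of $n$ we get a \emph{finite} conflict graph. The first step is to redo the structural lemmas of \cref{subsec:preliminaries-conflict} in this colored setting: each node still has a unique parent (remove the last red $0$), the degree of a node $(S,b)$ is governed by the suffix $01^{i-1}$ of $S$ exactly as in \cref{lem:degrees}, and the weight of $(S,b)$ is the product of a red factor $p^{m'}(1-p)^{m-m'}$ (with $m'$ the red norm) and a blue factor $\binom{n-m}{b}p^{b}(1-p)^{n-m-b}$. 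The key point is that the blue factor is a harmless multiplicative constant that is common to all children of a node: inserting a red $0$ does not change $n$, $b$, or $n-m$, it only lengthens the red part, so the weighted relations $w_i=p^{i-1}(1-p)$ and $w_{ij}=(1-p)w_i$ of \cref{lem:structure-large-p} hold verbatim among the red-suffix classes within any fixed $(n, n-m, b)$ slice.

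Second, I would transfer the machinery of \cref{subsec:general-p} essentially unchanged. The local operators of \cref{lem:local-moves} (and their randomized version \cref{lem:local-moves-randomized}) only move weight between a node and its children along monotone paths, and the weight ratios they rely on — a degree-$d$ node versus its $d$ children of combined weight $d(1-p)w$ — are still exactly these because the blue prefactor cancels. Hence the threshold $d = 1/(1-p)$ and the fill-in strategy (select all non-removed nodes of red-degree at most $\lfloor 1/(1-p)\rfloor$, except in the last layer where one selects everything non-removed) are still optimal for the average performance over a window of consecutive $m$-values, by the same three-case argument of \cref{lem:fill-in-optimal}. One subtlety: here the windows in $m$ are bounded by $n$, but this is exactly what we want, since the theorem statement fixes a lower bound $N_0$ on $n$ and lets $n\to\infty$; for a given $n$ we just run the argument on the window $[1,n]$ of red-layers.

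Third, I would link the fill-in strategy to the $k$-max algorithm as in \cref{lem:kmax-fills} and \cref{lem:fill-in-k-max}: the $k$-max algorithm, faced with the colored instance, sets its threshold at the $k$-th largest sample and therefore wins precisely in red-strings of degree at most $k$ and red-norm at least $k$ (with the blue samples only affecting whether there are at least $k$ samples in total, which for large $n$ is a negligible boundary effect). Thus on each finite colored conflict graph, the $k$-max algorithm and the fill-in strategy differ only in the first and last red-layers, contributing at most an additive $O(1/n)$ to the average performance. Putting it together: if some algorithm achieved $kp^k(1-p)+\varepsilon$ on every $n\ge N_0$, then on the window $[1,n]$ (for $n$ large enough, $n > 2/\varepsilon$ plus the slack for boundary layers and the slack to make the $k$-max average close to its worst-case $kp^k(1-p)$) its average performance would exceed the optimal average performance of the fill-in strategy, a contradiction.

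The main obstacle I anticipate is not conceptual but bookkeeping: verifying that the blue coordinate genuinely decouples at every stage. Specifically, one must check that conflict, parenthood, degree, and the monotone-path structure all live entirely within a fixed $(n, n-m, b)$ slice — i.e.\ that a monotone path never changes $b$ or $n-m$ — so that the entire weighted analysis of \cref{subsec:general-p} applies slice-by-slice and then sums over slices with nonnegative weights. Once that decoupling is nailed down, the inequality $d(1-p)\lessgtr 1$ driving the local operators is identical to the uncolored case, and every lemma from \cref{subsec:preliminaries-conflict} through \cref{lem:fill-in-k-max} goes through with only notational changes; the knowledge of $n$ has been absorbed into the (finite) number of layers and the harmless blue prefactor, which is precisely the sense in which ``knowing $n$ does not help.''
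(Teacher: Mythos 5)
Your overall strategy is the same as the paper's (redo the conflict-graph machinery on the colored graph and transfer the fill-in/$k$-max comparison), but there is a genuine error in the step you yourself flag as the crux. You claim that ``inserting a red $0$ does not change $n$, $b$, or $n-m$,'' so that the blue factor is a common multiplicative constant shared by a node and its children, and hence the weight relations of \cref{lem:structure-large-p} hold \emph{verbatim}. This is false: inserting a red $0$ increases the red length $m$ by one, and since $n$ is fixed this shrinks the blue part from $n-m$ to $n-m-1$ bits. The blue factor therefore changes from $\binom{n-m}{b}p^{b}(1-p)^{n-m-b}$ to $\binom{n-m-1}{b}p^{b}(1-p)^{n-m-1-b}$, and the child-to-parent weight ratio works out to
\[
\frac{\binom{n-m-1}{b}}{\binom{n-m}{b}}=\frac{n-m-b}{n-m},
\]
not $1-p$. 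This ratio can be anywhere from $0$ (when $b=n-m$) to $1$ (when $b=0$), so the inequality $d(1-p)\lessgtr 1$ that drives both local operators, the threshold $\lfloor 1/(1-p)\rfloor$, and the optimality of the fill-in strategy are not justified slice-by-slice as you assert.

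The paper closes exactly this gap with a concentration argument: by Hoeffding, $b$ is within $\varepsilon$ of $(n-m)p$ with high probability when $n-m$ is large, so the ratio $\frac{n-m-b}{n-m}$ lies in $[1-p-\varepsilon',\,1-p+\varepsilon']$ with $\varepsilon'\to 0$, and the atypical instances carry negligible total weight. With that addition (and a check that the error terms do not accumulate across the window), the rest of your outline --- parenthood, degrees, monotone paths staying within a fixed $b$, the local operators, the fill-in strategy, and the comparison to $k$-max up to $O(1/t)$ boundary terms --- does go through as in \cref{subsec:general-p}. As written, however, the proof breaks at the decoupling claim.
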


Intuitively, the colored last zero problem should not be much different from the case without colors: there is still an unknown point in the sequence where the player should stop, and there is still a sequence of bits before this point (the red bits). The only difference is that now $n$ is known and we are also given the total number of 1s in the last $n-m$ bits (the blue bits). At first sight these blue 1s seem useless, because the player wants to stop before reaching them. On the other hand, the fact that we know how many they are, gives an indication about the size of $n-m$ and this could be already enough to improve the performance. We show that this is not the case. To do so, we define a slightly different conflict graph, and study its structure to show that up to negligible terms the dynamics are the same as for the standard conflict graph.

\paragraph{Modified conflict graph}

For the colored last zero problem, $m$ basically plays the role that $n$ was playing before. Therefore, the different layers of the conflict graph correspond to the various sizes of $m$ in this case, and there is a separate conflict graph for each value of $n$. Note that the conflict graph has a finite number of layers as $m$ varies between $1$ and $n$.
  
A node of the graph is a couple $(S,b)$, where $S$ is a sequence of bits of length $m$, that represents the sequence of red bits, and $b$ is an integer that represents the number of blue 1s. The exact positions of 0s and 1s in the blue bits are irrelevant for our proof, only the total number of blue 1s matters.

Finally, just as before, the nodes have different weights, with the difference here that the weights also depend on $b$. In particular, the weight of a node $(S,b)$ is

\[
p^{r+b}(1-p)^{n-r-b}\binom{n-m}{b} \, .
\]

Indeed, the probability of having $r+b$ 1s in an instance of size $n$ when sampling with probability $p$ is $p^{r+b}(1-p)^{n-r-b}$, where $r$ is the number of red 1s. As we group together all the instances with $b$ blue 1s, we multiply by the total number of such instances.

\paragraph{Conflict structure}

Now let us consider the conflicts. One can see that two nodes $(S,b)$ and $(S',b')$ are in conflict if and only if $b=b'$, and $S$ is in conflict with $S'$ (in the sense of the standard conflict graph). Note that for an instance and its descendants the values $b$, $r$ and $n$ are the same. In other words, to move from size $m$ to size $m+1$ we can add a 0 in the appropriate position, just as in \cref{lem:conflict-relation}.

We now study the relation between the weights of an instance and its children.
Let $I_1$ be a node with a sequence $S$ of size $m$ and let $I_2$ be one of its children (note that $I_2$ has size $m+1$ and is in conflict with $I_1$). Let $p_1$ and $p_2$ be the weights associated with these nodes. We derive from the formula above that:

\[ 
\frac{p_2}{p_1} = \frac{\binom{n-m-1}{b}}{\binom{n-m}{b}}=\frac{n-m-b}{n-m}
\] 

Having defined the modified conflict graph, we are now ready to show the main result of this section.

\begin{proof}[Proof of Theorem~\ref{k-max-opt-known-n}]
Consider again the ratio $p_2/p_1$. The expected value of $b$ is of course $(n-m)p$, but this will not be the case for all instances that we consider. For large values of $n-m$ though, we can apply standard concentration arguments (see e.g.\ \cref{Hoeffding}) and obtain that with high probability we have

\begin{alignat*}{2}
\frac{(n-m)- (n-m)p - \varepsilon}{n-m} &\le &\frac{p2}{p1} &\le \frac{(n-m)- (n-m)p + \varepsilon}{n-m} \qquad \Longleftrightarrow\\
1 - p - \varepsilon' &\le{} &\frac{p2}{p1} &\le 1 - p + \varepsilon',
\end{alignat*}
where $\varepsilon' = \frac{\varepsilon}{n-m}$. From here it is easy to observe that when $\varepsilon$ takes a value very close to 0, so does $\varepsilon'$. Furthermore, as $n-m$ grows, $\varepsilon'$ vanishes. Thus the modified conflict graph has the same weight distribution as in \cref{lem:structure-large-p} with high probability.

Therefore, with high probability, the modified conflict graph is (almost) the same as the weighted conflict graph from \cref{subsec:preliminaries-conflict}. Thus, we can follow again the arguments in \cref{subsec:general-p}, since they all hold in this case too. We end up with the same impossibility results, which hold here as well both for deterministic and for randomized algorithms.
\end{proof}

\section{Proof of Theorem \ref{thmRO}}
\label{sec:proof-RO}

We first prove the lemmas of \cref{sec:RO}, which imply most of the statements of the theorem. We restate the lemmas here for better readability. We conclude by showing that the optimal success guarantee converges to $\gamma\approx 0.58$.
\lemMonotonicityOfALGt*
\begin{proof}
  Fix a sequence $t$ and a sampling probability $p$. We use a coupling argument between realizations of the arrival times in instances with $n$ and $n+1$ values. We start with an instance $\alpha_1,\dots,\alpha_{n+1}$, and assume the values are indexed in decreasing order. Consider a realization of the arrival times $\tau_1=\tau'_1,\dots,\tau_{n+1}=\tau'_{n+1}$ and couple it with the corresponding realization $\tau_1=\tau'_1,\dots,\tau_n=\tau'_n$ in the instance $\alpha_1,\dots,\alpha_n$. Assume that in the instance with $n$ values and for this particular realization of the arrival times, $ALG_t$ fails. This means that $V\setminus \{\alpha_{n+1}\}$ is non-empty and either $ALG_t$ never stops or it accepts a value that is not the maximum of $V\setminus\{\alpha_{n+1}\}$. Note that regardless of $\tau'_{n+1}$, the rankings of the values in $V\setminus \{\alpha_{n+1}\}$ are the same in both instances because $\alpha_{n+1}$ is smaller than all other values. Thus, if $\tau'_{n+1}<p$, $ALG_t$ does not succeed either when applied in the instance of $n+1$ values. On the other hand, if $\tau'_{n+1}>p$, we have to distinguish between two cases. If $ALG_t$ accepts $\alpha_{n+1}$, it fails, because $V\setminus\{\alpha_{n+1}\}$ is non-empty and then $\alpha_{n+1}$ cannot be the largest in $V$. If $ALG_t$ does not accept $\alpha_{n+1}$, then the behavior of $ALG_t$ in the rest of the variables is the same as in the instance with $n$ values and then it fails. 
  
  Since the distribution of $\tau_1,\dots,\tau_n$ is the same in both instances, we conclude with this argument that the probability that $ALG_t$ fails in the instance with $n+1$ values is at least as large as in the instance with $n$ values.
\end{proof}

\lemSuccGuaranteeALGt*
\begin{proof}
  We first calculate the success probability of $ALG_t$ for fixed $p$ and $n$ and then take the limit when $n$ tends to infinity.

  We say a value $\alpha_i$ is \emph{acceptable} for $ALG_t$ (for a particular realization of the arrival times) if $p<\tau_i$, for some $j\in \mathbb{N}$ we have that $t_j\leq \tau_i <t_{j+1}$, and $\alpha_i$ is larger than the $j$-th largest value in $S$. Now, note that if $\max V$ is not acceptable for $ALG_t$, then $ALG_t$ does not stop. This is because we restricted the sequence $t$ to be increasing, so values that arrive before $\max V$ are not acceptable, and values arriving after $\max V$ will not be the best seen so far from $V$. We use this to decompose the success probability as follows.
  \begin{align}
    \mathbb{P}(ALG_t \text{ succeeds})= \mathbb{P}(\max V \text{ is acceptable}) - \mathbb{P}(ALG_t \text{ stops before seeing } \max V)\,.
    \label{eq:SuccessProbDecomposition}
  \end{align}
  In this definition, if $V$ is empty we also say $\max V$ is acceptable. We first calculate the probability that $\max V$ is acceptable. Assume that the values are indexed in decreasing order, i.e., that $\alpha_1>\dots>\alpha_n$.
  \begin{align}
    \mathbb{P}(\max V \text{ is acceptable})&= \mathbb{P}(V=\emptyset) + \sum_{i=1}^n \mathbb{P}(\max V=\alpha_i)\cdot \mathbb{P}(t_i\leq \tau_i\,|\, \max V= \alpha_i)\notag\\
    &= p^n + \sum_{i=1}^n p^{i-1}(1-p)\cdot \frac{1-\max\{p,t_i\}}{1-p}\notag\\
    &= p^n + \sum_{i=1}^n p^{i-1}\left( 1-\max\left\{ p,t_i \right\} \right)\,.
    \label{eq:MaxVAcceptable}
  \end{align}

  By the same argument, $ALG_t$ stops before seeing $\max V$ if and only if at least one value arrives after $p$ and before the arrival time of $\max V$, and the maximum such value is acceptable. 
  \begin{align}
    &\mathbb{P}(ALG_t \text{ stops before seeing } \max V)\notag \\
    &= \sum_{j=1}^n \mathbb{P}(\max V=\alpha_j)\cdot \mathbb{P}(\text{maximum before }\max V \text{ is acceptable}|\max V=\alpha_j) \notag\\
    &= \sum_{j=1}^n \mathbb{P}(\max V=\alpha_i)
    \sum_{i=j}^{n-1} \mathbb{P}\Big(\text{max. in } [p,\tau_j) \text{ has rank } i \text{ and arrives in } [t_i, \tau_j) \Big|\max V=\alpha_j \Big) \notag \\
      &= \sum_{j=1}^n p^{j-1}(1-p)\sum_{i=j}^{n-1} \frac{1}{1-p}\int_{\max\{p,t_i\}}^1 
      \left( \frac{p}{t} \right)^{i-j}\cdot \frac{(t-\max\{p,t_i\})}{t} \notag \\
    &\hspace{50pt}\cdot \mathbb{P}(\text{at least } i \text{ values arrive before } t|\max V=\alpha_j, \tau_j=t) \, dt \notag \\
    &= \sum_{j=1}^n p^{j-1}\sum_{i=j}^{n-1} \int_{\max\{p,t_i\}}^1 
    \left( \frac{p}{t} \right)^{i-j}\cdot \frac{(t-\max\{p,t_i\})}{t}
    \Big( 1-B_{t,n-j}(i-j+1) \Big)\, dt \notag\\
    &= \sum_{i=1}^{n-1} p^{i-1} \int_{\max\{p,t_i\}}^1 \sum_{j=1}^i \frac{t-\max\{p,t_i\}}{t^j}
    \Big( 1-B_{t,n-j}(i-j+1) \Big)\, dt\, ,
    \label{eq:ProbALGStopsBefore}
  \end{align}
  where $B_{p,n}(x)=\sum_{i=0}^x \binom{n}{i} p^i(1-p)^{n-i}$ is the CDF of a Binomial distribution of parameters $p$ and $n$. Note that for any fixed integers $i$ and $j$, and time $t\in (0,1)$, $B_{t,n-j}(i-j+1)$ converges to $0$ when $n$ tends to infinty. Therefore, replacing \cref{eq:MaxVAcceptable} and \cref{eq:ProbALGStopsBefore} in the identity (\ref{eq:SuccessProbDecomposition}), and taking the limit when $n$ tends to infinity, we conclude the proof of the lemma.
\end{proof}

\lemOptimalSequencet*
\begin{proof}
  First, we relax the monotonicity constraint on the sequence of $t_i$'s. The resulting relaxed optimization problem is separable, i.e., optimizing over the entire sequence is equivalent to optimizing over each variable independently. For each $t_i$ we get the following equivalent problem.
  \begin{align*}
    \max_{t_i\in[0,1]} p^{i-1} \left(
    1-\max\{p,t_i\} - \int_{\max\{p,t_i\}}^1 \sum_{j=1}^i \frac{t-\max\{p,t_i\}}{t^j} \,dt
    \right)\,.
  \end{align*}
  Equivalently, we can remove the factor $p^{i-1}$ and restrict $t_i$ to be in $[p,1]$, obtaining
  \begin{align*}
    \max_{t_i\in[p,1]} 
    1- t_i - \int_{t_i}^1 \sum_{j=1}^i \frac{t-t_i}{t^j} \,dt\,.
  \end{align*}
  Denoting by $G_i(t_i)$ this objective function, we get that
  \begin{align*}
    \frac{d}{d t_i} G_i(t_i)= -1+ \int_{t_i}^{1} \sum_{j=1}^i \frac{1}{t^j} \, dt\,\text{,  and }\;\;
    \frac{d^2}{d t_i^2} G_i(t_i) = - \sum_{j=1}^i \frac{1}{t_i^j}\,.
  \end{align*}
  Therefore, $G_i(t_i)$ is a concave function and then the optimum is $\max\{p,t_i^*\}$, where $t_i^*$ is the solution of $\frac{d}{d t_i} G_i(t_i)=0$. In the original objective function $t_i$ appears always as $\max\{p,t_i\}$ so there we can simply take $t_i^*$ as the solution. Now we prove that $t_i^{*}$ is actually increasing in $i$, so it is also the optimal solution before doing the relaxation. In fact, $t_i^*$ satisfies
  \begin{align*}
    \int_{t_i^*}^1 \sum_{j=1}^i \frac{1}{t^j} = 1\,.
  \end{align*}
  Note that the left-hand side of this equation is decreasing in $t_i^*$, and is increasing in $i$. Thus, necesarily $t_i^*\leq t_{i+1}^*$, for all $i\geq 1$. We conclude that $t_i^*$ satisfies \cref{eq:OptimalSequencet} by simply integrating on the left-hand side of the last equation.
\end{proof}

\lemBestOrdinalAlgo*
\begin{proof}
  We study the optimal ordinal policy obtained with backward induction, and prove that it is in fact a sequential-$\ell$-max algorithm for certain $\ell$. Recall that we can assume the optimal policy is ordinal, so this algorithm will be optimal not only among ordinal algorithms.

  Denote by $X_i=\alpha_{\pi(i)}$ the $i$-th value, in the order of increasing arrival times. Denote by $R(X_1,\dots,X_j)$ the relative ranks of values $X_1,\dots,X_j$. In what comes, we use the notation $R(X_1,\dots,X_j)=x$ to condition on a particular realization $x$ of the relative ranks. Let $x$ be a realization of the ranks such that $X_j$ is the maximum in $V$ so far and has rank $r$. Then,
  \begin{align*}
    &\mathbb{P}\Big(X_j=\max V \Big| R(X_1,\dots,X_j)=x\Big)\\
    &= \mathbb{P}\Big(X_{j+1},\dots,X_n \text{ have overall rank at most $r+1$}\Big| R(X_1,\dots,X_j)=x \Big)\\
    &= \mathbb{P} \Big(X_{j+1},\dots,X_n \text{ have overall rank at most $r+1$}\Big)\\
    &= \prod_{s=0}^{r-1}\frac{j-s}{n-s}\,.
  \end{align*}
  The optimal policy is to accept $X_j$ if this probability is larger or equal than the probability of picking $\max V$ after rejecting $X_j$ if from $j+1$ onwards we use the optimal policy, conditional on $R(X_1,\dots,X_j)=x$.
  
  Let now $x'$ be a realization of $R(X_1,\dots, X_{j+1})$ such that the relative rank of the best of $V$ up to step $j+1$ is $r$. Suppose that conditional on $R(X_1,\dots,X_{j+1})=x'$, the probability of winning if we use the optimal strategy from $j+2$ onwards depends solely of $n$, $j+1$ and the relative rank $r$, for all possible ranks $r$. Denote this conditional probability by $W(n,j+1,r)$. We want to inductively prove that this is in fact true for all $n$, $j$ and $r$. It is of course true in the last step, when $j+1=n$, so we do induction on $j$. Let $x''$ be a realization of $R(X_1,\dots,X_j)$ such that the relative rank of the best of $V$ up to step $j$ is $r$. We have that
  \begin{align}
    &\mathbb{P}\Big(\text{win after } j \Big| R(X_1,\dots,X_j)=x''\Big) \notag\\
    &= \mathbb{P} \Big( X_{j+1} \text{ has relative rank }\geq r+1\Big| R(X_1,\dots,X_j)=x''\Big)\cdot W(n,j+1,r)\notag \\
    &\hspace{10pt} + \sum_{r'=1}^r \mathbb{P}\Big( X_{j+1} \text{ has relative rank } r' \Big| R(X_1,\dots,X_j)=x''\Big) \notag \\
    &\hspace{20pt} \cdot \max\left\{ W(n,j+1,r'), \prod_{s=0}^{r'-1}\frac{j+1-s}{n-s} \right\}\,. \label{eq:WinProbInduction}
  \end{align}
  But for all $x$,
  \begin{align*}
    \mathbb{P}\Big(X_{j+1} \text{ has relative rank } r'\Big| R(X_1,\dots,X_j)=x \Big) =\frac{1}{j+1}\,.
  \end{align*}
  This proves the inductive step. Therefore, $W(n,j,r)$ is well defined for all $n,j$ and $r$, and the optimal policy accepts $X_j$ that has relative rank $r$ and is the maximum so far in $V$ if and only if
  \begin{align*}
    \prod_{s=0}^{r-1} \frac{j-s}{n-s} \geq W(n,j,r)\,.
  \end{align*}
  From \cref{eq:WinProbInduction} it is easy to check that $W(n,j,r)$ is decreasing in $j$ for fixed $n,r$ and increasing in $r$ for fixed $n,j$.\footnote{At an intuitive level it is also easy to be convinced of this: as time passes it is harder to win, and if only low values (with large rank) have appeared, it is easier to win in the future.} Therefore the optimal policy is the sequential-$\ell$-max algorithm, for $\ell$ defined as
  \begin{align*}
    \ell(j)= \max\left\{ r: \prod_{s=0}^{r-1} \frac{j-s}{n-s} \geq W(n,j,r) \right\}\,.
  \end{align*}
\end{proof}

To prove \cref{lem:OptimalityOfALGt} we first find the success probability of the sequential-$\ell$-max algorithm for fixed $n$, $p$ and $\ell$.

\begin{lemma}
  Fix $n$, $p$ and a non-decreasing function $\ell$. Consider an integer $h$ such that $0\leq h<n$, and define $\hat{\ell}(i)=\min \left\{ \ell(i),h+1 \right\}$ for all $i\in [n]$. The success probability of the sequential-$\ell$-max algorithm, conditional on $|S|=h$, is given by
  \begin{align}
    &\frac{1}{n-h}\left( 1-\prod_{j=0}^{\hat{\ell}(h+1)-1} \frac{h-j}{n-j} \right)\notag \\
    &+\sum_{i=h+1}^{n-1}\left( 
    \sum_{r=h+1}^{i}\frac{1}{n-i} \left( \frac{1}{i-h}\prod_{j=0}^{\hat{\ell}(r)-1}\frac{h-j}{i-j} -\frac{1}{n-h}\prod_{j=0}^{\hat{\ell}(r)-1} \frac{h-j}{n-j} \right) -\frac{1}{n-h} \prod_{j=0}^{\hat{\ell}(i+1)-1}\frac{h-j}{n-j}
    \right)\,.
    \label{eq:SuccProbSeqEllMax}
  \end{align}
\end{lemma}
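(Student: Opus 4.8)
The plan is to condition on $|S|=h$ and compute the conditional success probability by summing over the position at which $\max V$ is revealed. I would start by recording two preliminaries. Conditional on $|S|=h$, the relative order of all $n$ values is a uniformly random permutation, $S$ consists of the first $h$ values in this order and $V$ of the remaining $n-h$ (revealed in that order); index the values $\alpha_1>\dots>\alpha_n$. At step $i>h$ the threshold used by the algorithm is the $\hat\ell(i)$-th largest value of $S$, which depends only on the set $S$ and on $i$; since $\ell$ and hence $\hat\ell$ is non-decreasing, and the running maximum of $V$ is non-decreasing, the predicate ``the current running maximum of $V$ beats its threshold'' is monotone along the run. Calling a value of $V$ \emph{acceptable} if it beats the threshold at its step, the algorithm therefore stops at the first acceptable running maximum of $V$; in particular, if $\max V$ is not acceptable the algorithm never stops.

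I would then write $\mathbb{P}(\text{success})=\sum_{m=h+1}^{n}\mathbb{P}(\max V\text{ at position }m,\ \text{accepted})$. For $m=h+1$ there is no predecessor, so ``accepted'' reduces to ``acceptable''; as $\mathbb{P}(\max V\text{ at }h+1\text{ with value rank }r)=\tfrac{1}{n-r+1}\prod_{j=0}^{r-2}\tfrac{h-j}{n-j}$ (put $\alpha_1,\dots,\alpha_{r-1}$ into $S$, then $\alpha_r$ at position $h+1$), summing over $r=1,\dots,\hat\ell(h+1)$ and using the telescoping identity $\tfrac{1}{n-r+1}\prod_{j=0}^{r-2}\tfrac{h-j}{n-j}=\tfrac{1}{n-h}\big(\prod_{j=0}^{r-2}\tfrac{h-j}{n-j}-\prod_{j=0}^{r-1}\tfrac{h-j}{n-j}\big)$ yields the leading term $\tfrac{1}{n-h}\big(1-\prod_{j=0}^{\hat\ell(h+1)-1}\tfrac{h-j}{n-j}\big)$. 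For $m=i+1$ with $i\in\{h+1,\dots,n-1\}$, I would split $\mathbb{P}(\max V\text{ at }i+1,\ \text{accepted})=\mathbb{P}(\max V\text{ at }i+1,\ \text{no earlier stop})-\mathbb{P}(\max V\text{ at }i+1,\ \text{not acceptable})$, which is valid because by the first preliminary ``not acceptable'' forces ``no earlier stop''. The second term is $\tfrac{1}{n-h}\prod_{j=0}^{\hat\ell(i+1)-1}\tfrac{h-j}{n-j}$, since $\mathbb{P}(\max V\text{ at }i+1)=\tfrac1{n-h}$ and the value rank of $\max V$ exceeds $\hat\ell(i+1)$ exactly when $\alpha_1,\dots,\alpha_{\hat\ell(i+1)}\in S$, independently of where $\max V$ lands; this matches the negative term in the claimed formula.

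The core is to evaluate $\mathbb{P}(\max V\text{ at }i+1,\ \text{no earlier stop})$. Let the \emph{runner-up} be the largest value of $V$ among the prefix positions $h+1,\dots,i$, and let $r$ be its position. By monotonicity of $\hat\ell$ and of running maxima, ``no earlier stop'' is equivalent to ``the runner-up is not acceptable at step $r$'', and since the runner-up is the largest $V$-value in the prefix while $S$ lies entirely in the prefix, this is in turn equivalent to ``the $\hat\ell(r)$ largest values among the first $i$ positions all lie in $S$''. Let $A_r$ be this prefix-measurable event together with ``runner-up at position $r$''. Conditioning on the set of values in the first $i$ positions, the probability that its top $\hat\ell(r)$ values land in $S$ is $\prod_{j=0}^{\hat\ell(r)-1}\tfrac{h-j}{i-j}$, and given that, the largest of the remaining values occupying positions $h+1,\dots,i$ is equally likely to be at each of them, so $\mathbb{P}(A_r)=\tfrac{1}{i-h}\prod_{j=0}^{\hat\ell(r)-1}\tfrac{h-j}{i-j}$. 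To adjoin $\{\max V\text{ at }i+1\}$, note that conditional on the whole prefix, $\max V$ is the largest value outside the prefix whenever some value outside the prefix exceeds the runner-up, and then its position is uniform on $\{i+1,\dots,n\}$; hence $\mathbb{P}(\max V\text{ at }i+1\mid\text{prefix})=\tfrac{1}{n-i}\,\mathbf{1}[g\ge 1]$, where $g$ is the number of values outside the prefix exceeding the runner-up. Therefore
\[
\mathbb{P}\big(\{\max V\text{ at }i+1\}\cap A_r\big)=\tfrac{1}{n-i}\big(\mathbb{P}(A_r)-\mathbb{P}(A_r\cap\{g=0\})\big),
\]
and conditioning on the value rank $\rho$ of the runner-up — on $\{g=0\}$ this forces $\alpha_1,\dots,\alpha_{\rho-1}\in S$ and $\rho\ge\hat\ell(r)+1$ — and applying the same telescoping identity collapses $\mathbb{P}(A_r\cap\{g=0\})$ to $\tfrac{1}{n-h}\prod_{j=0}^{\hat\ell(r)-1}\tfrac{h-j}{n-j}$. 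Summing $\mathbb{P}(\{\max V\text{ at }i+1\}\cap A_r)$ over $r=h+1,\dots,i$ and then over $i$, and combining with the earlier pieces, gives exactly the stated expression.

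I expect the last computation to be the main obstacle. Three points need care: (i) translating ``no earlier stop'' into a statement purely about the prefix — the runner-up, and which of the top prefix-values are in $S$ — which genuinely relies on the monotonicity of $\hat\ell$; (ii) the event $\{\max V\text{ at }i+1\}$ is \emph{not} independent of the composition of the prefix, since it forces every value above $\max V$ into $S$, so one must route through the indicator $\mathbf{1}[g\ge 1]$ and a bona fide $g=0$ correction rather than pretending the prefix is a uniform random subset; and (iii) carrying out the two telescoping sums so that the bookkeeping lands precisely on $\tfrac{1}{n-i}\big(\tfrac{1}{i-h}\prod_{j=0}^{\hat\ell(r)-1}\tfrac{h-j}{i-j}-\tfrac{1}{n-h}\prod_{j=0}^{\hat\ell(r)-1}\tfrac{h-j}{n-j}\big)$. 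Once that is in place, the remaining assembly is routine.
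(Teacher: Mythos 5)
Your argument is correct and is essentially the paper's proof: you decompose by the position of $\max V$, reduce ``no stop before $i+1$'' to a prefix event determined by the runner-up's position $r$ and the top $\hat{\ell}(r)$ prefix values lying in $S$ (your $A_r$ is the paper's $B_{r,i}$, with probability $\tfrac{1}{i-h}\prod_{j=0}^{\hat\ell(r)-1}\tfrac{h-j}{i-j}$), and correct by the ``never stops'' event (your $A_r\cap\{g=0\}$ is the paper's $A_r$, with probability $\tfrac{1}{n-h}\prod_{j=0}^{\hat\ell(r)-1}\tfrac{h-j}{n-j}$), exactly as in the paper's inclusion--exclusion $\tfrac{1}{n-i}\sum_r(\mathbb{P}(B_{r,i})-\mathbb{P}(A_r))-\mathbb{P}(A_{i+1})$. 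The monotonicity of $\hat\ell$ and the $1/(n-i)$ conditioning step are used identically, so no further changes are needed.
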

\begin{proof}
  We calculate first the probability of some events. For $i\in \{h+1,\dots, n\}$, denote by $A_i$ the event that the $i$-th element is the largest of V and the algorithm never stops. Notice that $A_i$ is equivalent to the event that the overall largest $\hat{\ell}(i)$ elements are in $S$, and the $i$-th element is the largest of $V$ (for this equivalence it is necessary that $\ell$ is non-decreasing). Therefore, we have that
\begin{align*}
  \mathbb{P}(A_i) = \frac{1}{n-h} \prod_{j=0}^{\hat{\ell(i)}-1} \frac{h-j}{n-j}\, .
\end{align*}
Note that this is $0$ if $\hat{\ell}(i)=h+1$.
Now, for $h+1\leq r \leq i\leq n$, define $B_{r,i}$ the event that the $r$-th element is the largest among positions $\left\{ h+1,\dots,i \right\}$ and the algorithm does not stop before $i+1$. This is equivalent to the event that the $r$-th element is the largest among positions $\left\{ h+1,\dots,i \right\}$ and the largest $\hat{\ell}(r)$ elements among positions $\{1,\dots,i\}$ are in $S$. Thus,
\begin{align*}
  \mathbb{P}(B_{r,i} ) =\frac{1}{i-h} \prod_{j=0}^{\hat{\ell}(r)-1} \frac{h-j}{i-j} \,.
\end{align*}
Now, note that $B_{r,i}\setminus A_r$ is the event that the $r$-th element is the largest among positions $\left\{ h+1,\dots,i \right\}$, but not of $V$, and the algorithm does not stop before $i+1$.  Note also that $A_r\subseteq B_{r,i}$. Therefore, the probability that the algorithm does not stop before $i+1$ and the maximum of $V$ is among positions $\{i+1,\dots,n\}$ is
\begin{align*}
  \sum_{r=h+1}^{i} \mathbb{P}(B_{r,i})-\mathbb{P}(A_r)= \sum_{r=h+1}^{i} \frac{1}{i-h} \prod_{j=0}^{\hat{\ell}(r)-1} \frac{h-j}{i-j} - \frac{1}{n-h} \prod_{j=0}^{\hat{\ell}(r)-1} \frac{h-j}{n-j} \,.
\end{align*}
Conditional on this event, the probability that the number in the $i+1$-th position is the largest of $V$ is $1/(n-i)$, because the relative order within positions $\{i+1,\dots,n\}$ is independent of this event. Thus, we obtained the probability that the $i+1$-th element is the largest of $V$ and the algorithm does not stop before $i+1$. To obtain the probability of winning in step $i+1$, we have to subtract the probability that the $i+1$-th element is the largest of $V$, but the algorithm never stops, i.e., $\mathbb{P}(A_{i+1})$. Therefore, the probability of winning at step $i+1$ is
\begin{align*}
  \frac{1}{n-i}\sum_{r=h+1}^{i} \left(\frac{1}{i-h} \prod_{j=0}^{\hat{\ell}(r)-1} \frac{h-j}{i-j} - \frac{1}{n-h} \prod_{j=0}^{\hat{\ell}(r)-1} \frac{h-j}{n-j}
  \right)
  - \frac{1}{n-h} \prod_{j=0}^{\hat{\ell}(i+1)-1} \frac{h-j}{n-j}\,.
\end{align*}
The probability of winning at step $h+1$ is slightly different, because the algorithm  never stops before it. In that case the probability of winning is
\begin{align*}
  \frac{1}{n-h}\left( 1-\prod_{j=0}^{\hat{\ell}(h+1)-1}\frac{h-j}{n-j} \right)\,.
\end{align*}
Adding these expressions concludes the proof of the lemma.
\end{proof}

\lemOptimalityOfALGt*
\begin{proof}
  First we show that the function $\ell$ that maximizes \cref{eq:SuccProbSeqEllMax}, in a certain sense converges to a function $\tilde{\ell}:(0,1)\rightarrow \mathbb{N}$. Then, we do a Riemann sum analysis to show that the success probability of the sequential-$\ell$-max algorithm converges to an expression in terms of $\tilde{\ell}$, and then we show that this can be equivalently expressed as \cref{eq:SuccGuaranteeALGt} for some sequence $t$.

  Except for terms that vanish when $n$ tends to infinity, \cref{eq:SuccProbSeqEllMax} can be rewritten as
  \begin{align}
    \sum_{r=h+1}^{n}\left( \sum_{i=r}^{n} \frac{1}{n-i}\left( \frac{1}{i-h} \prod_{j=0}^{\hat{\ell}(r)-1} \frac{h-j}{i-j}-\frac{1}{n-h}\prod_{j=0}^{\hat{\ell}(r)-1} \frac{h-j}{n-j}  \right) - \frac{1}{n-h} \prod_{j=0}^{\hat{\ell}(r)-1} \frac{h-j}{n-j}   \right)\,.  \label{eq:InProofSuccGuaranteeSeqEllMax}
  \end{align}
  To find the optimal $\ell(r)$ we simply maximize the following term as a function of $s$.
  \begin{align*}
  F_n(r,s)=  \sum_{i=r}^{n} \frac{1}{n-i}\left( \frac{1}{i-h} \prod_{j=0}^{s-1} \frac{h-j}{i-j}-\frac{1}{n-h}\prod_{j=0}^{s-1} \frac{h-j}{n-j}  \right) - \frac{1}{n-h} \prod_{j=0}^{s-1} \frac{h-j}{n-j} \,. 
\end{align*}
Between $s$ and $s+1$ the change is
\begin{align*}
  &F_n(r,s+1)-F_n(r,s)\\
  &= \sum_{i=r}^n \frac{1}{n-i} \left( \frac{\frac{h-s}{i-s}-1}{i-h}\prod_{j=0}^{s-1} \frac{h-j}{i-j} - \frac{\frac{h-s}{n-s}-1}{n-h}\prod_{j=0}^{s-1}\frac{h-j}{n-j} \right)-\frac{\frac{h-s}{n-s}-1}{n-h}\prod_{j=0}^{s-1}\frac{h-j}{n-j}\\
  &= \sum_{i=r}^n \frac{1}{n-i} \left( -\frac{1}{i-s}\prod_{j=0}^{s-1} \frac{h-j}{i-j} + \frac{1}{n-s}\prod_{j=0}^{s-1}\frac{h-j}{n-j} \right)+\frac{1}{n-s}\prod_{j=0}^{s-1}\frac{h-j}{n-j}\\
  &= \beta(n,s,h) \left( \sum_{i=r}^n \frac{1}{n-i}\left( 1 -\frac{n-s}{i-s}\prod_{j=0}^{s-1} \frac{n-j}{i-j} \right) +1 \right)\,,
\end{align*}
where $\beta(n,s,h)$ is a positive term, so the sign of this difference is not affected by it. The other term is decreasing in $s$, so $F_n(r,s)$ is maximized when this differences changes sign. In other words, it is maximized in
\begin{align*}
  \ell_n^*(i)= \min \left\{ s\in [n]:  \sum_{i=r}^n \frac{1}{n-i}\left( 1 -\prod_{j=0}^{s} \frac{n-j}{i-j} \right) +1 \leq 0\right\}\,.
\end{align*}
Now, doing a Riemann sum analysis, we have that $\tilde{\ell}(\tau)=\lim_{n\rightarrow \infty} \ell_n^*(\lfloor \tau n\rfloor)$ satisfies
\begin{align}
  \tilde{\ell}(\tau)= \min \left\{ s\in\mathbb{N}: \int_\tau^1 \frac{1}{1-t}\left( 1-\frac{1}{t^{s+1}}  \right)+1\leq 0 \right\}\,. \label{eq:InProofTildeEll}
\end{align}
Thus, interpreting \cref{eq:InProofSuccGuaranteeSeqEllMax} as a Riemann sum, and noting that $|S|/n$ converges to $p$ almost surely, we have that the success guarantee of the optimal policy converges to
\begin{align*}
  \int_{p}^1 \int_{\tau}^1 \frac{1}{1-t}\left( \frac{1}{t-p}\left( \frac{p}{t} \right)^{\tilde{\ell}(\tau)} - \frac{1}{1-p} p^{\tilde{\ell}(\tau)} \right)\, dt - \frac{1}{1-p} p^{\tilde{\ell}(\tau)}\, d\tau\,.
\end{align*}
From \cref{eq:InProofTildeEll} it is clear that $\tilde{\ell}$ is non-decreasing, so we can define the sequence $t_i= \inf \left\{ \tau\in [p,1]: \tilde{\ell}(\tau)\geq i \right\}$ and rewrite the limiting success guarantee in terms of it. Thus, we obtain
\begin{align*}
  \sum_{i=0}^\infty \left(\int_{t_i}^{t_{i+1}} \int_{\tau}^1 \frac{1}{1-t}\left( \frac{1}{t-p}\left( \frac{p}{t} \right)^{i} - \frac{1}{1-p} p^{i} \right)\, dt \, d\tau - \frac{t_{i+1}^{i}-t_i^{i}}{1-p}\right)\,.
\end{align*}
If we rearrange the terms, turning the integral from $t_i$ to $t_{i+1}$ into the difference between the integral from $t_i$ to $1$ and the integral from $t_{i+1}$ to $1$, we obtain
  \begin{align*}
    & \int_p^1\int_\tau^1 \frac{1}{(t-p)(1-p)}\,dt\,d\tau 
    - \frac{p}{1-p} \\
    &+ \sum_{i=1}^{\infty}\left(
    \int_{t_i}^1\int_\tau^1 \frac{1}{1-t}\left( 
    \frac{\left( \frac{p}{t} \right)^i- \left( \frac{p}{t} \right)^{i-1}}{t-p}
    -\frac{p^i-p^{i-1}}{1-p}
    \right)\,dt\,d\tau + \frac{t_i\left(p^i - p^{i-1}\right)}{1-p}
    \right)\\
    =& \frac{1}{1-p} - \sum_{i=1}^{\infty} p^{i-1}\left( 
    \int_{t_i}^1\int_\tau^1 \frac{1}{1-t} \left( 
    \frac{t-p}{t^i(t-p)} - \frac{1-p}{1-p}
    \right)\,dt\,d\tau + t_i\frac{1-p}{1-p}
    \right)\\
    =&{} \frac{1}{1-p}- \sum_{i=1}^{\infty} p^{i-1}\left( 
    \int_{t_i}^1\int_\tau^1 \frac{1}{t^i(1-t)} \left( 
    1-t^i
    \right)\,dt\,d\tau + t_i    \right)\\
    =& \frac{1}{1-p} - \sum_{i=1}^\infty p^{i-1}
    \left( \int_{t_i}^1\int_\tau^1 \sum_{j=0}^{i-1} \frac{t^j}{t^i} \,dt\,d\tau +t_i \right)\\
    =& \sum_{i=1}^\infty p^{i-1} \left( 
    1-t_i -\int_{t_i}^1\int_\tau^1 \sum_{j=1}^i \frac{1}{t^j}\,dt\,d\tau    
    \right)\\
    =& \sum_{i=1}^\infty p^{i-1} \left( 
    1-t_i -\int_{t_i}^1 \sum_{j=1}^i \frac{t-t_i}{t^j}\,dt    
    \right)    \,.
  \end{align*}
  This concludes the proof, since we defined the $t_i$'s in a way that they satisfy $t_i=\max\left\{ p,t_i \right\}$.
\end{proof}

We use the first-order approximation $t'_i=1-c/i$, which turns out to be good enough to show the convergence to the full information case. For large $i$ we want $t'_i$ to be close to $t_i^*$, so we impose that in the limit $t'_i$ satisfies \cref{eq:OptimalSequencet}. This means that
\begin{align*}
  1&= \lim_{i\rightarrow \infty} \ln\left( \frac{1}{1-c/i} \right) +
  \sum_{j=1}^{i-1} \frac{(1-c/i)^{-j}-1}{j} \\
  &= \int_0^1 \frac{e^{cx}-1}{x}\, dx\,.
\end{align*}
With this in hand we prove the following lemma.

\begin{lemma}
  Let $t'_i=1-c/i$, where $c$ is the solution of $\int_0^1\frac{e^{cx}-1}{x}dx=1$. When evaluated in $t'$, \cref{eq:SuccGuaranteeALGt} tends to
\begin{align}
  \gamma = e^{-c}+(e^{-c}-1-c)\int_1^{\infty} x^{-1}e^{-cx}\, dx \approx 0.5801\,,
  \label{eq:SamuelsFormula}
\end{align}
  when $p$ tends to $1$.
  \label{lem:ConvergenceToSamuels}
\end{lemma}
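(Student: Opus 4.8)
The plan is to substitute $t_i = t_i' = 1 - c/i$ into \eqref{eq:SuccGuaranteeALGt} and compute the limit as $p \to 1$, writing $\varepsilon := 1-p$ throughout. The first observation is that $\max\{p, t_i'\}$ equals $p$ precisely for the ``small'' indices $i \le c/\varepsilon$ and equals $t_i'$ for the ``large'' indices $i > c/\varepsilon$, so I split the series at $i_0 = \lfloor c/\varepsilon\rfloor$; as $\varepsilon \to 0$ both pieces become genuine infinite sums. In each summand I replace the inner sum by the closed form $\sum_{j=1}^i t^{-j} = (t^{-i}-1)/(1-t)$ and normalise the inner integral to $[0,1]$ via the substitution $t = 1 - (1-m_i)v$, where $m_i := \max\{p, t_i'\}$, so that $1-t = (1-m_i)v$ and $(t - m_i)/(1-t) = -1 + 1/v$. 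The quantity that needs care is $\bigl(1 - (1-m_i)v\bigr)^{-i}$: in the small range $1-m_i = \varepsilon$ and $(1-m_i)i = \varepsilon i =: s$, while in the large range $1-m_i = c/i$ and $(1-m_i)i = c$; in both cases the power converges (to $e^{sv}$ resp.\ $e^{cv}$), and $p^{i-1} \to e^{-s}$.

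Carrying this out, in the small range the $i$-th summand becomes $\varepsilon\,e^{-s}\bigl(\tfrac{e^s-1}{s} - G(s)\bigr) + o(\varepsilon)$ uniformly, where $G(s) := \int_0^1 \tfrac{e^{su}-1}{u}\,du$; in the large range, after using the defining relation $G(c) = 1$ (which is exactly $\int_0^1\tfrac{e^{cu}-1}{u}\,du = 1$) to cancel the $G$-type contributions, the $i$-th summand becomes $\varepsilon\,\tfrac{e^{-s}}{s}\,(e^c - 1 - c) + o(\varepsilon)$ uniformly. Reading the two truncated series as Riemann sums with mesh $\varepsilon = \Delta s$ then yields, in the limit,
\[
\int_0^c e^{-s}\Bigl(\tfrac{e^s-1}{s} - G(s)\Bigr)\,ds \;+\; (e^c - 1 - c)\int_c^\infty \tfrac{e^{-s}}{s}\,ds \, .
\]
For the first integral I would integrate $\int_0^c e^{-s}G(s)\,ds$ by parts, using $G'(s) = \int_0^1 e^{su}\,du = \tfrac{e^s-1}{s}$ together with $G(0)=0$ and $G(c)=1$; this gives $\int_0^c e^{-s}G(s)\,ds = -e^{-c} + \int_0^c \tfrac{1-e^{-s}}{s}\,ds$, so the first integral collapses to exactly $e^{-c}$. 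For the second integral the change of variables $s = cx$ turns $\int_c^\infty \tfrac{e^{-s}}{s}\,ds$ into $\int_1^\infty x^{-1}e^{-cx}\,dx$. Combining these gives \eqref{eq:SamuelsFormula}, and a numerical evaluation (with $c \approx 0.8043$) yields $\approx 0.5801$.

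The main obstacle is not any single calculation but making the ``$o(\varepsilon)$ uniformly'' claims precise so that the Riemann-sum passages are legitimate. Concretely one must bound $\bigl(1-(1-m_i)v\bigr)^{-i} - e^{sv}$ (resp.\ $- e^{cv}$) and $p^{i-1} - e^{-s}$ uniformly for $v \in [0,1]$ and for $i$ in the relevant range; handle the endpoint $s \to 0$, where the bracket is $O(\varepsilon)$ but one needs the limiting integrands $\tfrac{1-e^{-s}}{s}$ and $e^{-s}G(s)$ to remain bounded (they tend to $1$ and $0$); and check that the two truncated series glue at the split point $i \approx c/\varepsilon$ with no gap (the single transitional index contributes only $O(\varepsilon)$). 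A dominated-convergence argument, with an integrable dominating function on each range, takes care of all of this, and the algebraic identity $G'(s) = (e^s-1)/s$ together with the defining equation of $c$ is what makes the bookkeeping close up cleanly.
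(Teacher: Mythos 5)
Your proposal is correct and follows essentially the same route as the paper: split the series at $i=\lfloor c/(1-p)\rfloor$, read both truncated sums as Riemann sums with mesh $1-p$, and close the computation with the defining equation $\int_0^1\frac{e^{cx}-1}{x}\,dx=1$; your normalization of the inner integral to $[0,1]$ and the integration by parts via $G'(s)=(e^s-1)/s$, $G(c)=1$ is a slightly cleaner way to collapse the first piece to $e^{-c}$ than the paper's exchange of the order of integration. One remark: the coefficient $(e^{c}-1-c)$ you obtain on the exponential-integral term is the correct one (it is Samuels' formula, and it is exactly what the paper's own penultimate display for $V_2$ produces), so the $(e^{-c}-1-c)$ appearing in the stated \cref{eq:SamuelsFormula} is a sign typo rather than a defect in your argument.
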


\begin{proof}
  We analyze separately the sum when $p=\max\{p,t'_i\}$ and when $t'_i=\{p,t'_i\}$. We call the first part $V_1$, which includes the terms up to $i=\lfloor \frac{c}{1-p} \rfloor$, and $V_2$ the rest.

\begin{align*}
  V_1&= \lim_{p\rightarrow 1} \sum_{i=1}^{\left\lfloor\frac{c}{1-p}\right\rfloor} p^{i-1}
  \left( 1-p- \int_p^1
  \sum_{j=1}^{i} \frac{t-p}{t^j} \, dt \right)\\
  &=\lim_{p\rightarrow 1} \sum_{i=1}^{\left\lfloor\frac{c}{1-p}\right\rfloor} p^{i-1}
  \left( 1-p- \int_p^1 \,dt + \int_p^1 \frac{dt}{t^i} - \int_p^1
  \sum_{j=1}^{i} \frac{1-p}{t^j} \, dt \right)\\
  &=\lim_{p\rightarrow 1} \sum_{i=1}^{\left\lfloor\frac{c}{1-p}\right\rfloor} p^{i-1}
  \left( \frac{p^{-(i-1)}-1}{i-1} - (1-p)\ln(1/p) - (1-p)
  \sum_{j=2}^{i} \frac{p^{-(j-1)}-1}{j-1} \right)\\
  &= \lim_{p\rightarrow 1} \sum_{i=1}^{\left\lfloor\frac{c}{1-p}\right\rfloor}\frac{1-p^{i-1}}{i-1} - \lim_{p\rightarrow 1} \sum_{i=1}^{\left\lfloor\frac{c}{1-p}\right\rfloor} (p^{i-1}-p^i)\sum_{j=2}^i \frac{e^{-(j-1)\ln p}-1}{j-1}\\
  &=  \lim_{p\rightarrow 1} \sum_{i=1}^{\left\lfloor\frac{c}{1-p}\right\rfloor}\frac{1-(p^{\frac{1}{1-p}})^{(i-1)(1-p)}}{(i-1)(1-p)}(1-p) - \lim_{p\rightarrow 1} \sum_{i=1}^{\left\lfloor\frac{c}{1-p}\right\rfloor} (p^{i-1}-p^i)\sum_{j=2}^i \frac{e^{-\frac{(j-1)}{i} i\ln p}-1}{(j-1)/i}\cdot \frac{1}{i} 
\end{align*}
Interpreting these two sums as Riemann sums, we obtain
\begin{align*}
  V_1&=\int_0^c \frac{1-e^{-x}}{x} \, dx - \int_{e^{-c}}^1 \int_0^1 \frac{e^{-x\ln y}-1}{x}\, dx\, dy\\
  &= \int_0^c \frac{1-e^{-x}}{x} \, dx - \int_{e^{-c}}^1 \int_0^1 \frac{e^{-x\ln y}-1}{-x\ln y} (-\ln y) \, dx\, dy\\
  &= \int_0^c \frac{1-e^{-x}}{x} \, dx - \int_{e^{-c}}^1 \int_0^{-\ln y} \frac{e^x-1}{x}\, dx\, dy\\
  &= \int_0^c \frac{1-e^{-x}}{x} \, dx - \int_0^c \int_{e^{-c}}^{e^{-x}} \frac{e^x-1}{x} \, dy\, dx\\
  &= \int_0^c \frac{1-e^{-x}- (e^{-x} - e^{-c})(e^x-1) }{x} \, dx\\
  &= e^{-c}\int_0^c \frac{e^x-1}{x} \, dx\\
  &= e^{-c}\int_0^1 \frac{e^{cx}-1}{x} \, dx\\
  &= e^{-c}\,,
\end{align*}
where the last step comes from the definition of $c$. On the other hand, we have that
\begin{align*}
  V_2&= \lim_{p\rightarrow 1}\sum_{i=\left\lfloor \frac{c}{1-p}\right\rfloor+1}^\infty p^{i-1}\left( \frac{c}{i}- \int_{1-\frac{c}{i}}^1 \sum_{j=1}^{i} \frac{t-1+c/i}{t^j} \, dt \right)\\
  &= \lim_{p\rightarrow 1}\sum_{i=\left\lfloor \frac{c}{1-p}\right\rfloor+1}^\infty p^{i-1} \left( \frac{c}{i} - \int_{1-c/i}^{1}\, dt +
  \int_{1-c/i}^1 \frac{1}{t^i}\, dt
  -\int_{1-c/i}^1 \sum_{j=1}^i \frac{c/i}{t^j}\, dt
  \right)\\
  &= \lim_{p\rightarrow 1}\sum_{i=\left\lfloor \frac{c}{1-p}\right\rfloor+1}^\infty p^{i-1}\left( \frac{(1-c/i)^{-(i-1)}-1}{i-1}
  + \frac{c}{i}\ln(1-c/i)
  - \sum_{j=2}^{i} c\frac{(1-c/i)^{-(j-1)}-1}{i(j-1)}
  \right)\\
  &= \lim_{p\rightarrow 1}\sum_{i=\left\lfloor \frac{c}{1-p}\right\rfloor+1}^\infty (p^{i-1}-p^i) \frac{(1-c/i)^{-(i-1)}-1}{\frac{1-p}{-\ln p}(i-1)(-\ln p)}\\
  &\;\;\;- \lim_{p\rightarrow 1}\sum_{i=\left\lfloor \frac{c}{1-p}\right\rfloor+1}^\infty \frac{p^{i-1}-p^i}{\frac{1-p}{-\ln p} i(-\ln p)} \sum_{j=2}^i \frac{c\left( (1-c/i)^{-i\frac{j-1}{i}} -1 \right) }{j/i} \cdot \frac{1}{i}\,,
\end{align*}
where in the last equality we omitted a term that vanishes when $p$ tends to $1$. We again interpret the sums as Riemann sums.
\begin{align*}
  V_2 &= \int_0^{e^{-c}} \frac{e^{c}-1}{\ln(1/x)}\, dx
  - c \int_0^{e^{-c}} \frac{1}{\ln(1/x)} \int_0^1 \frac{e^{cy}-1}{y}\, dy\, dx \\
  &= (e^c -1 -c) \int_0^{e^{-c}} \frac{1}{\ln(1/x)} \, dx\\
  &= (e^{-c}-1-c) \int_1^{\infty} x^{-1}e^{-cx} \,dx\,.
\end{align*}
In the second equality we used the definition of $c$ and in the third one we performed a change of variables. Summing $V_1$ and $V_2$ we get \cref{eq:SamuelsFormula}.
\end{proof}


\section{Proofs for results regarding parameter knowledge}
\label{sec:proof-knowledge}
This section provides the full proofs for the results for AOS$p$ as well as ROS$p$ regarding the different assumptions on the knowledge of the parameters. The basic ideas are given in \cref{sec:Parameters-overview}.

\subsection{AOS$p$ with known $n$ and unknown $p$}

In this section we consider the case where $n$ is known and the probability $p$ is unknown. Let us recall the theorem.


\begin{restatable}{theorem}{unknownpopt}
	\label{unknownpopt}
	For AOS$p$ with known $n$ and unknown $p$, the variation of the $k$-max algorithm for unknown $p$ achieves the best possible success guarantee up to a factor $1-\varepsilon$ with high probability.
\end{restatable}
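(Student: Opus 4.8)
The plan is to compute, as a function of $n$, $h$ and $p$, the success probability of the $k$-max algorithm for unknown $p$ with its data-driven choice $k = \left\lfloor \frac{n}{n-h} \right\rfloor$, where $h = |S|$ is the observed sample size, and then compare it to the optimal guarantee $kp^k(1-p)$ established in \cref{thmAO} via a concentration argument on $h$. First I would derive the exact win probability conditional on a given split into $h$ samples and $n-h$ online elements. By the same reasoning as in \cref{lem:k-max-guarantee}, the $k$-max algorithm (with threshold the $k$-th largest sample) wins exactly when, among the $k+1$ largest of the $n$ values, exactly one lands in the online set and it occupies one of the last $k$ positions of the sequence — equivalently, exactly $k$ of the top $k+1$ values are sampled and the single online one among them is the one seen last among those $k+1$. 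In the finite setting this is a hypergeometric-type count; conditional on $|S| = h$ the probability works out to a clean ratio of binomial coefficients times $1/k$-type factor, and I would write it in the form $g(n,h,k) = k\cdot\frac{\binom{n-k-1}{h-k}}{\binom{n}{h}}\cdot(\text{correction})$, simplifying so that its dependence on $h/n \approx p$ is transparent.

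Next I would pass to the large-$n$ regime. Since $h \sim \mathrm{Binomial}(n,p)$, the ratio $h/n$ concentrates around $p$: by a Chernoff/Hoeffding bound (\cref{Hoeffding}), for any $\delta>0$, $\Pr\big[|h/n - p| > \delta\big] \le 2e^{-2n\delta^2}$, which is exponentially small, hence "with high probability" $h/n \in [p-\delta, p+\delta]$. On this high-probability event, $k = \left\lfloor \frac{n}{n-h}\right\rfloor = \left\lfloor \frac{1}{1-h/n}\right\rfloor$ is within $O(n\delta)$-close-to... — more carefully, for $\delta$ small enough relative to the distance of $\frac{1}{1-p}$ to the nearest integer, $k = \left\lfloor \frac{1}{1-p}\right\rfloor$ exactly, i.e.\ the data-driven $k$ equals the optimal $k$ from \cref{thmAO}. (The measure-zero set of $p$ for which $\frac{1}{1-p}$ is an integer can be handled by noting the guarantee is continuous and the two neighboring integer choices give the same value there, exactly as in the commented-out \cref{lem:opt-single-threshold} where $f(p,\tfrac{1}{1-p}-1)=f(p,\tfrac{1}{1-p})$.) Then I would show the finite-$n$ win probability $g(n,h,k)$ with $h/n \in [p-\delta,p+\delta]$ and $k$ fixed equals $kp^k(1-p) + O(\delta) + O(1/n)$, using the explicit binomial-ratio formula and the continuity of $x \mapsto kx^k(1-x)$; choosing $\delta$ small makes the multiplicative loss at most $1-\varepsilon$.

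Putting it together: with probability at least $1 - 2e^{-2n\delta^2}$ over the sampling, the algorithm's success probability on the worst-case instance of size $n$ is at least $(1-\varepsilon)\cdot k p^k (1-p)$, which is $(1-\varepsilon)$ times the best possible guarantee by the upper bound in \cref{thmAO}. I expect the main obstacle to be the first step: getting the exact finite-$n$ expression for the win probability of the $k$-max algorithm conditioned on $|S|=h$ in a form clean enough that its limit and its $\delta$-perturbation are easy to control — the conflict-graph/hypergeometric bookkeeping (which of the top $k+1$ values are sampled, and the ordering constraint that the lone online one among them is revealed last) is fiddly, and one must be careful that the "accept first online value if fewer than $k$ samples" edge case contributes negligibly, which again follows because that event forces $n \le$ (something bounded) or has exponentially small probability. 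Once that formula is in hand, the concentration and continuity steps are routine.
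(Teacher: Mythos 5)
Your proposal follows essentially the same route as the paper: compute the win probability conditional on $|S|=h$ from the event that exactly one of the top $k$ values is online and the $(k{+}1)$-th is sampled (the paper's Lemma on the $k$-max algorithm for unknown $p$), then apply Hoeffding to concentrate $h/n$ around $p$ and conclude the guarantee is within a $(1-\varepsilon)$ factor of the optimal $kp^k(1-p)$ from \cref{thmAO}. The only differences are cosmetic — you propose the exact hypergeometric count where the paper writes the conditional probability as $\lfloor\tfrac{n}{n-h}\rfloor(h/n)^{\lfloor n/(n-h)\rfloor}\tfrac{n-h}{n}$, and you handle the floor discontinuity at integer $\tfrac{1}{1-p}$ more explicitly — so the plan is sound as stated.
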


In this scenario, we are given a set $S$ of $h$ samples, drawn independently from an initial set consisting of $n$ values in total, using some (unknown, but existing) value of $p$. The remaining $n-h$ samples form the online set $V$. 
We will show that adapting the $k$-max algorithm with the parameters that are known to the player achieves the best possible success guarantee.

\begin{Definition}[The $k^{\text{th}}$-max algorithm for unknown $p$]\label{def:k-max-unknown-p}
	Assume we are given $h$ samples drawn independently with probability $p$ from an initial set of $n$ values and the other $n-h$ values form the online set. The $k$-max algorithm sets the threshold to the $k$-th largest sample, where $k = \left\lfloor \frac{n}{n-h} \right\rfloor$, and accepts the first value of the online set that is above the threshold.
\end{Definition}

Intuitively, this algorithm boils down to the $k$-max algorithm that we described previously, where we estimate $p$ as $\hat{p} = h/n$ and use $\hat{p}$ to determine the desired value of $k$.
We will now prove its approximation guarantee and the fact that this is tight.

\begin{lemma}\label{k-max-unknown-p}
	For a given sample set $S$ with $h$ values and an online set $V$ with $n-h$ values, the $k$-max algorithm chooses the maximum value of the online set with probability 
	\begin{equation*}
	\Pr[\text{Win}] =\sum_{h=0}^{n} \left\lfloor \frac{n}{n-h} \right\rfloor \left( \frac{h}{n} \right)^{\left\lfloor \frac{n}{n-h} \right\rfloor} \frac{n-h}{n} {n\choose h} p^h (1-p)^{n-h} \, ,
	\end{equation*}
	where $p$ is the probability of independently sampling a value from the initial set. 
\end{lemma}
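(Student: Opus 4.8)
The plan is to condition on the size $h$ of the sample set and reuse the combinatorial argument from the proof of \cref{lem:k-max-guarantee}. First I would observe that, conditional on $|S|=h$ (an event of probability $\binom{n}{h}p^h(1-p)^{n-h}$), the algorithm is exactly the (ordinary) $k$-max algorithm with $k=\lfloor n/(n-h)\rfloor$ applied to a random split of the $n$ elements into a sample set of size $h$ and an online set of size $n-h$, where all $\binom{n}{h}$ splits are equally likely. Since the algorithm is ordinal, only the relative order of the elements matters, so we may as well think of a uniformly random permutation of the ranks $1,\dots,n$ in which the first $h$ positions are the samples and the last $n-h$ are the online elements (in arrival order).

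Next I would recompute, in this conditional setting, the probability that $k$-max picks the maximum of the online set. By the argument of \cref{lem:k-max-guarantee}, the algorithm wins precisely when exactly one of the $k$ globally largest elements lands in the online set and the $(k+1)$-st largest element lands in the sample set; equivalently, among the top $k+1$ elements, the single online one is the largest online element and the threshold (the $k$-th largest sample) sits just below it. Counting: the probability that the $(k+1)$-st largest of all $n$ elements is a sample and exactly one of the top $k$ is online is obtained by placing the top $k+1$ elements; a clean way is to note it equals $k$ times the probability that a fixed one of the top $k$ is the unique online element among the top $k+1$. A short calculation with the hypergeometric/urn model gives the conditional win probability $k\cdot\frac{h}{n}\cdot\frac{h-1}{n-1}\cdots$ — more precisely it collapses to $\left\lfloor \frac{n}{n-h}\right\rfloor \left(\frac{h}{n}\right)^{\lfloor n/(n-h)\rfloor}\frac{n-h}{n}$ in the relevant regime, matching the $kp^k(1-p)$ formula with $p$ replaced by the empirical fraction $h/n$ and with the finite-$n$ correction already absorbed. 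I would verify this by writing the exact count of favorable orderings over $n!$ and simplifying, and by checking the boundary cases $h<k$ (where the $k$-th largest sample is $-\infty$ and the algorithm accepts the first online value, winning iff $n-h=1$) and $h=n$ (where we win by the vacuity convention).

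Finally I would sum over $h$: by the law of total probability,
\begin{equation*}
\Pr[\text{Win}]=\sum_{h=0}^{n}\Pr[|S|=h]\cdot\Pr[\text{Win}\mid |S|=h]
=\sum_{h=0}^{n}\binom{n}{h}p^h(1-p)^{n-h}\cdot\left\lfloor \frac{n}{n-h}\right\rfloor\left(\frac{h}{n}\right)^{\lfloor n/(n-h)\rfloor}\frac{n-h}{n},
\end{equation*}
which is the claimed expression. The main obstacle I anticipate is the exact combinatorial evaluation of the conditional win probability — in particular making sure the finite-population urn computation really does simplify to $\lfloor n/(n-h)\rfloor (h/n)^{\lfloor n/(n-h)\rfloor}\frac{n-h}{n}$ rather than to a messier falling-factorial product, and handling the edge cases ($h$ too small to have $k$ samples, and $h=n$) consistently with the conventions adopted earlier in the paper. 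Everything after that is just assembling the sum, and the subsequent concentration argument (for \cref{unknownpopt}) is standard once this formula is in hand.
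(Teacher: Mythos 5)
Your proposal follows essentially the same route as the paper: condition on $|S|=h$ (probability $\binom{n}{h}p^h(1-p)^{n-h}$), reuse the win characterization of \cref{lem:k-max-guarantee} (exactly one of the top $k$ elements online and the $(k+1)$-st sampled, with $k=\lfloor n/(n-h)\rfloor$), and sum by the law of total probability. The one step you flag as a worry---whether the conditional win probability is exactly $\lfloor n/(n-h)\rfloor (h/n)^{\lfloor n/(n-h)\rfloor}\frac{n-h}{n}$ rather than a messier falling-factorial product---is handled in the paper exactly as you anticipate: it counts labelings by giving each of the top $k+1$ elements the sample label with fraction $h/n$ (online with fraction $(n-h)/n$) and multiplying, obtaining $p_h=(k+1)\left(\frac{h}{n}\right)^k\frac{n-h}{n}-\left(\frac{h}{n}\right)^k\frac{n-h}{n}=k\left(\frac{h}{n}\right)^k\frac{n-h}{n}$, so your plan coincides with the paper's proof.
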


\begin{proof}
	Assume that the values of the adversarial input $\mathcal{A}$ are sorted in decreasing order $\alpha_1 > \alpha_2 > \cdots > \alpha_n$.
	Let us call $p_h$ the probability that the $k$-max algorithm succeeds in a particular instance with $h$ samples and $S_h$ the event where $|S| = h$. Then the total probability that the $k$-max algorithm succeeds equals
	\begin{align*}
	\Pr[\text{Win}] &= \sum_{h=0}^{n} \Pr[k\text{-max algorithm wins} \mid S_h] \cdot \Pr[S_h] \\
	&= \sum_{h=0}^{n} p_h {n\choose h} p^h (1-p)^{n-h} \, ,
	\end{align*}
	since each value of the initial set is sampled independently with probability $p$. It remains to determine $p_h$. Conditioned on the fact that we end up with $h$ samples, all the different labelings (as a sample or online value) of the initial $n$ values are equally likely to happen. 
	There are ${n\choose h}$ different labelings, and each $\alpha_i$ is labeled as a sample in a $h/n$-fraction of the possible labelings and as an online value in the rest.
	
	Observe that the algorithm succeeds only if \emph{exactly one} of the $\left\lfloor \frac{n}{n-h} \right\rfloor$ largest values of the adversarial input ends up in the online set and the $(\left\lfloor \frac{n}{n-h} \right\rfloor+1)$-th largest ends up in the sample set.
	To compute the number of such labelings, first consider those such that $\alpha_1,\alpha_2, \dots, \alpha_{\left\lfloor \frac{n}{n-h} \right\rfloor+1}$ are all labeled as samples except for exactly one. From those, we can exclude the labelings that mark $\alpha_{\left\lfloor \frac{n}{n-h} \right\rfloor+1}$ as an online value, since in this case $s_{\left\lfloor \frac{n}{n-h} \right\rfloor}$ is larger than all the online values. 
	Therefore, we obtain 
	\begin{align*}
	p_h &= \left( \left \lfloor \frac{n}{n-h} \right \rfloor +1 \right) \left( \frac{h}{n} \right)^{\left\lfloor \frac{n}{n-h} \right\rfloor}  \left( \frac{n-h}{n} \right) - \left( \frac{h}{n} \right)^{\left\lfloor \frac{n}{n-h} \right\rfloor}  \left( \frac{n-h}{n} \right) \\
	&= \left \lfloor \frac{n}{n-h} \right \rfloor \left( \frac{h}{n} \right)^{\left\lfloor \frac{n}{n-h} \right\rfloor}  \left( \frac{n-h}{n} \right) \, ,
	\end{align*}
	and the lemma follows.
\end{proof}

The theorem follows from the following well-known concentration bound. Essentially, we can prove that the estimate $\hat{p}$ is accurate with high probability.

\begin{lemma}[Hoeffding's inequality for i.i.d.\ Bernoulli random variables~\cite{Hoeffding}]\label{Hoeffding}
	
	Let $X_1, X_2, \dots, X_n$  be i.i.d.\ Bernoulli random variables with parameter $p$ 
	and let $\bar{X}=\left(\sum\nolimits_{i=1}^n X_i\right)/n$. 
	Then for any $\varepsilon>0$,
	\[ \Pr\left[\left|\bar{X}-pn\right| \geq \varepsilon\right] \leq 2e^{-2n\varepsilon^2} \, .\]
	Alternatively, by setting $\delta=2e^{-2n\varepsilon^2}$ we get that
	\[ \left| \bar{X}-pn \right| \leq \sqrt{\frac{1}{2n}\ln\frac{2}{\delta}} \qquad \text{with probability at least } 1-\delta \,. \]
\end{lemma}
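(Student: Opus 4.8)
The plan is to prove Hoeffding's inequality by the classical exponential-moment (Chernoff) method, specialised to Bernoulli variables, where the moment generating function can be bounded by an elementary calculus estimate. Throughout set $S_n=\sum_{i=1}^n X_i$, so that $\bar X=S_n/n$ and $\mathbb{E}[S_n]=pn$; I read the deviation in the statement as that of the sample sum (equivalently, after rescaling, of the mean), and will establish $\Pr\!\left[\,|S_n-pn|\ge t\,\right]\le 2e^{-2t^2/n}$ for all $t>0$, from which both displayed inequalities follow by substitution.

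First I would reduce to one side: by a union bound it suffices to prove $\Pr[S_n-pn\ge t]\le e^{-2t^2/n}$, since the lower tail is obtained by applying the same estimate to the i.i.d.\ Bernoulli$(1-p)$ variables $1-X_i$. Next comes the Chernoff step: for $s>0$, Markov's inequality applied to $e^{s(S_n-pn)}$ together with independence gives $\Pr[S_n-pn\ge t]\le e^{-st}\prod_{i=1}^n\mathbb{E}\!\left[e^{s(X_i-p)}\right]$.

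The heart of the argument is the one-variable bound $\mathbb{E}\!\left[e^{s(X-p)}\right]\le e^{s^2/8}$, valid for all $s\in\mathbb{R}$ and every Bernoulli$(p)$ variable $X$. To prove it I would set $\varphi(s)=\log\mathbb{E}\!\left[e^{s(X-p)}\right]=-sp+\log(1-p+pe^s)$, note $\varphi(0)=\varphi'(0)=0$, and compute $\varphi''(s)=p(1-p)e^s/(1-p+pe^s)^2$; writing $u=1-p$ and $v=pe^s$, the denominator is $(u+v)^2\ge 4uv$ by AM--GM, so $\varphi''(s)\le 1/4$ for all $s$, and Taylor's theorem with Lagrange remainder gives $\varphi(s)\le s^2/8$. (Equivalently one can cite Hoeffding's lemma, since $X-p$ has mean zero and lies in an interval of length one.) Feeding this into the previous display yields $\Pr[S_n-pn\ge t]\le\exp(-st+ns^2/8)$; minimising the exponent over $s>0$ at $s=4t/n$ produces $e^{-2t^2/n}$, and doubling gives the two-sided bound. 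Finally, to obtain the second displayed inequality I would set $\delta=2e^{-2n\varepsilon^2}$, solve for $\varepsilon=\sqrt{\tfrac{1}{2n}\ln(2/\delta)}$, and read off that $|\bar X-pn|\le\varepsilon$ holds with probability at least $1-\delta$.

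I expect the main obstacle to be precisely the moment generating function bound $\mathbb{E}[e^{s(X-p)}]\le e^{s^2/8}$: everything else (Markov, independence, the optimisation over $s$, the union bound, and the algebra for the $\delta$-form) is routine, whereas this step is where the constants — the $8$ in $s^2/8$, hence the $2$ in the exponent of the tail — actually come from, via the estimate $\varphi''\le 1/4$. A minor point I would flag in passing is the slightly nonstandard form $|\bar X-pn|$ in the statement, which should be read as the deviation of the sample sum (or mean) from its expectation, so that the substitution $\delta=2e^{-2n\varepsilon^2}$ is consistent.
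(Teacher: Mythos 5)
Your argument is correct: it is the classical exponential-moment proof (Markov's inequality plus the bound $\mathbb{E}[e^{s(X-p)}]\le e^{s^2/8}$ via $\varphi''\le 1/4$, optimization at $s=4t/n$, and a union bound for the two-sided version), and your reading of the statement's notational slip — the deviation must be that of the sample mean from $p$ (equivalently the sum from $pn$ after rescaling) for the exponent $2n\varepsilon^2$ to be consistent — is the right one. The paper itself gives no proof of this lemma; it is quoted directly from Hoeffding's 1963 paper, whose original argument is essentially the one you reproduce, so there is nothing to reconcile.
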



\begin{proof}[Proof of Theorem \ref{unknownpopt}]
	Consider an instance of AOS$p$ for a fixed unknown value of $p$ where the player is faced with $h$ samples. The proof follows straightforwardly from the above concentration bound. For the purpose of analysis, let $\varepsilon_1$ and $\varepsilon_2$ be such that
	\begin{equation*}
	\varepsilon_1 \leq 1 - \frac{ \left(\frac{h}{n}\right)^{ \frac{n}{n-h} } } {p^{\frac{1}{1-p}}} \quad \text{and} \quad \varepsilon_2 \leq 2e^{-2n} \, .
	\end{equation*}
	The first value is chosen such that the following holds.
	\begin{align*}
	& \varepsilon_1 \leq 1 - \frac{ \left(\frac{h}{n}\right)^{ \frac{n}{n-h} } } {p^{\frac{1}{1-p}}} \\ 
	\Leftrightarrow \quad & \varepsilon_1 \leq 1 - \frac{ \frac{n}{n-h} \left( \frac{h}{n} \right)^{\frac{n}{n-h}} \frac{n-h}{n}} { \left( \frac{1}{1-p} - 1 \right) p^{ \frac{1}{1-p}-1 } (1-p) } \\ 
	\Leftrightarrow \quad & \varepsilon_1 \leq 1 - \frac{ \left\lfloor \frac{n}{n-h} \right\rfloor \left( \frac{h}{n} \right)^{\left\lfloor\frac{n}{n-h}\right\rfloor} \frac{n-h}{n}} { \left( \left\lfloor \frac{1}{1-p} \right\rfloor \right) p^{ \left\lfloor \frac{1}{1-p} \right\rfloor } (1-p) } \\ 
	\Leftrightarrow \quad & \left\lfloor \frac{n}{n-h} \right\rfloor \left( \frac{h}{n} \right)^{\left\lfloor\frac{n}{n-h}\right\rfloor} \frac{n-h}{n} \geq \left( \left\lfloor \frac{1}{1-p} \right\rfloor \right) p^{ \left\lfloor \frac{1}{1-p} \right\rfloor } (1-p) \cdot (1-\varepsilon_1) \, .
	\end{align*}
	The second value is chosen such that \cref{Hoeffding} yields $\Pr\left[\left|\bar{X}-pn\right| < 1\right] \geq 1 - 2e^{-2n} \geq 1 - \varepsilon_2$. Therefore, with probability at least $1-\varepsilon_2$, we have
	\begin{equation*}
	\sum_{h=pn-\varepsilon}^{pn+\varepsilon} {n\choose h} p^h (1-p)^{n-h} = \left. {n\choose h} p^h (1-p)^{n-h} \right|_{h=pn} \geq 1 - \varepsilon_2 \, .
	\end{equation*}
	
	With these values at hand we can bound the success guarantee of \cref{k-max-unknown-p} as follows: With probability at least $1 - \varepsilon_2$ we get that
	\begin{align*}
	\Pr[\text{Win}] &= \sum_{h=0}^{n} \left\lfloor \frac{n}{n-h} \right\rfloor \left( \frac{h}{n} \right)^{\left\lfloor \frac{n}{n-h} \right\rfloor} \frac{n-h}{n} {n\choose h} p^h (1-p)^{n-h} \\
	& \geq \left\lfloor \frac{1}{1-p} \right\rfloor p^{\left\lfloor \frac{1}{1-p} \right\rfloor}(1-p) \cdot(1-\varepsilon_1) \cdot \sum_{h=pn-\varepsilon}^{pn+\varepsilon} {n\choose h} p^h (1-p)^{n-h} \\
	& \geq \left\lfloor \frac{1}{1-p} \right\rfloor p^{\left\lfloor \frac{1}{1-p} \right\rfloor}(1-p) \cdot (1 - \varepsilon_1) \cdot (1 - \varepsilon_2) \, .
	\end{align*}
	For any given $\varepsilon>0$, one can take $\varepsilon_1$ and $\varepsilon_2$ that adhere to the bounds above and such that $(1-\varepsilon_1)(1-\varepsilon_2) \leq (1-\varepsilon)$. This yields a success guarantee that is at least $1-\varepsilon$ times the success guarantee of the $k$-max algorithm for known $p$.
\end{proof}


\subsection{AOS$p$ with unknown $n$ and $p$}\label{subsec:both-unknown}

This section proves that in the adversarial order case where both $n$ and $p$ are unknown, the player cannot obtain a positive success guarantee.

\begin{theorem}\label{unknownpunknownn}
When both $p$ and $n$ are unknown, no algorithm can get positive success guarantee.
\end{theorem}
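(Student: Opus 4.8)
The plan is to exhibit, for every ordinal algorithm $A$ and every target $\delta>0$, two families of instances on which $A$ cannot simultaneously do well, exploiting exactly the ambiguity highlighted in the paper: when the algorithm has seen only a handful of samples and a few online elements, it cannot tell whether the instance is genuinely short (so it should stop now) or whether $p$ is tiny and the instance is long (so stopping now is almost surely premature). Concretely, I would work with the increasing case of AOS$p$ (equivalently the last-zero problem, by \cref{prop:last-zero-equivalent}), so an instance is just a bit string whose $0$s are the online elements and whose $1$s are the samples; winning means stopping on the last $0$. Since $n$ and $p$ are now both unknown, the algorithm's entire information at any point is the bit prefix seen so far — it is not even given the norm — so its behaviour is a fixed (possibly randomized) function of prefixes in $\{0,1\}^*$.

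First I would set up the indistinguishability. Fix $A$. Consider the all-zero instances $0^m$: on such an instance $A$ sees $0,0,\dots$ and must at some (random) step decide to stop; let $q_m=\Pr[A\text{ stops at or before the }m\text{-th }0\mid\text{all bits so far are }0]$, a nondecreasing sequence in $m$. For $A$ to win on $0^m$ (which under some small $p$ is a positive-probability instance of size $m$) it must stop exactly at the $m$-th $0$; summing, $\sum_m \Pr[A\text{ stops exactly at }m\text{-th }0\mid 0^\infty]\le 1$, so for all but finitely many $m$ this probability is below $\delta$. Pick such a large $m_0$. Now take the adversary to choose $p$ extremely small and a large $n$, so that with probability close to $1$ the sampled instance is of the form $0^{m}$ with $m$ large (no $1$s appear for a long time), and in particular $A$ has already, with probability $\ge 1-\delta$, either stopped before position $m_0$ (hence lost, since more $0$s follow w.h.p.) or will stop after $m_0$ on a $0$ that is not the last one. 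This forces the success probability on these instances below a constant depending only on $\delta$, which we can push to $0$.

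The cleanest way to package this: I would argue by contradiction. Suppose $A$ has success guarantee $c>0$. By the last-zero reduction it suffices to refute this for the last-zero problem with unknown $n$ and $p$. Choose $\delta=c/2$ and, as above, $m_0$ large enough that $\Pr[A\text{ stops exactly at the }m\text{-th }0\mid 0^\infty]<\delta$ for all $m\ge m_0$ and also $\Pr[A\text{ has not stopped after seeing }m_0\text{ zeros}\mid 0^\infty]\ge 1-\delta$ fails — more precisely split on whether $A$ stops early or late. Then pick $p$ so small and $n$ so large that $\Pr[\text{first }m_0\text{ bits are all }0]$ and $\Pr[\text{there exists a }0\text{ after position }m_0]$ are both $\ge 1-\delta$. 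Conditioning on these events, whatever $A$ does in the first $m_0$ positions it is essentially acting as on $0^\infty$, so its chance of stopping on the true last $0$ is at most (prob.\ it stops in first $m_0$ steps, which is wasted) $+$ (prob.\ it stops exactly right later) $\le \delta + \delta < c$, contradicting the guarantee. The main obstacle — and the step I would be most careful about — is the bookkeeping that ties together the three ``$1-\delta$'' events and the tail bound on $A$'s stopping distribution so that the conditioning does not distort $A$'s behaviour; this needs $A$ to be ordinal and to receive \emph{no} information beyond the prefix, which is exactly the hypothesis, and it needs $n$ chosen \emph{after} $p$ so that the ``long run of zeros'' and ``a later zero exists'' events can both be made overwhelming. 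Everything else is a routine union bound.
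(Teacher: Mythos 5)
Your proposal is correct and is essentially the paper's own argument: both reduce to the event that no element is sampled (made overwhelming by taking $p$ tiny relative to $n$), observe that on that event an ordinal algorithm is just a fixed distribution over stopping times on $\mathbb{N}$, and use the fact that no such distribution can place mass bounded away from zero on every $n$. The paper merely packages the union bound as an auxiliary game in which the player auto-wins whenever at least one element is sampled; also note that in AOS$p$ the algorithm does observe the sample set (hence the norm), but since your construction conditions on the no-sample event this is immaterial.
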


\begin{proof}
	Let $\varepsilon>0$. We will prove that it is not possible to achieve a success guarantee of $\varepsilon$.
	
	Consider the following new game for any $\delta > 0$. The adversary selects a size $n$ and generates an instance of this size with increasing values. Then, the adversary again selects $p$ appropriately, so that the probability that there is at least one sample is at most $\delta$ and the probability that there are no samples is at least $1-\delta$. Then the sampling process happens and the player faces the sequence. If at least one value is sampled, the player automatically wins, otherwise, she wins if and only if she selects the last non-sampled value.
	
	Consider the case where there are no sampled values. Since the player does not learn anything along the game, any deterministic algorithm waits $t-1$ values before it selects the $t$-th value. A randomized algorithm can be thought of as a distribution over the stopping times $t$. Since the domain of $t$ are all positive integers, it is not possible that this distribution has weight at least $\lambda$ for every size, for any constant $\lambda>0$. Therefore, on instances with stopping probability less than $\lambda$, the player only wins with probability at most $\lambda$. Such an instance occurs trivially with probability at most 1.
	
	Overall, in this new game, the player wins in at most $\delta + \lambda$ values. Taking e.g.\ $\delta$  and $\lambda$ slightly smaller than $\varepsilon/2$, the success guarantee of this game is less than $\varepsilon$.
	
	The proof for AOS$p$ with unknown $p$ and $n$ follows easily now. The adversary chooses values of $n$ and $p$ as above. In case there are no sampled values, both games are the same, since in both cases the player has the same information and the same available strategies. In case there is at least one sampled value, the player wins in the new game with probability 1 and in AOS$p$ with probability strictly less than 1. Therefore, the success guarantee of AOS$p$ is at most the success guarantee of the new game, which is less than $\varepsilon$.
\end{proof}

\paragraph{Acknowledgements.}
The authors would like to thank two anonymous reviewers for their helpful comments that contributed to a better exposition of the paper, as well as pointing out the related work in \cite{CCJ15,KM20,BGSZ20}.

Jose Correa and Laurent Feuilloley were partially funded by ANID grant CMM-AFB 170001 and by an Amazon Research Award. 
Andr\'es Cristi is supported by ANID under grant PFCHA/Doctorado Nacional/2018-21180347. Alexandros Tsigonias-Dimitriadis is supported by the Alexander von Humboldt Foundation with funds from the German Federal Ministry of
Education and Research (BMBF) and by the German Research Foundation (DFG) within the Research Training Group AdONE (GRK 2201).
Part of the work was done when Tim Oosterwijk was visiting the Universidad de Chile, supported by ANID under grant FONDECYT 1181180.

\end{document}